\newtheorem{definition}{Definition}
\newtheorem{theorem}[definition]{Theorem}
\newtheorem{lemma}[definition]{Lemma}
\newtheorem{proposition}[definition]{Proposition}
\newtheorem{corollary}[definition]{Corollary}
\numberwithin{equation}{section}
\numberwithin{definition}{section}
\def \cbb{\mathbb{C}}
\def \nbb{\mathbb{N}}
\def \rbb{\mathbb{R}}
\def \bcal {\mathcal{B}}
\def \ccal {\mathcal{C}}
\def \dcal {\mathcal{D}}
\def \ecal {\mathcal{E}}
\def \fcal {\mathcal{F}}
\def \gcal {\mathcal{G}}
\def \hcal {\mathcal{H}}
\def \lcal {\mathcal{L}}
\def \ncal {\mathcal{N}}
\def \ocal {\mathcal{O}}
\def \pcal {\mathcal{P}}
\def \rcal {\mathcal{R}}
\def \scal {\mathcal{S}}
\def \vcal {\mathcal{V}}
\def \wcal {\mathcal{W}}
\def \zcal {\mathcal{Z}}
\def \afk  {\mathfrak{A}}
\def \bfk  {\mathfrak{B}}
\def \wfk  {\mathfrak{W}}
\def \zfk  {\mathfrak{Z}}
\def \<  {\langle}
\def \>  {\rangle}
\def \. { \,\! }
\def \cdotarg { \, \cdot \, }
\def \im  {\mathrm{Im}\,}
\def \re  {\mathrm{Re}\,}
\def \idop {\mathbf{1}}
\def \st {^\ast}
\def \isom {\cong}
\def \restrict {\lceil}
\def \supp {\mathrm{supp}\,}
\def \expltext#1 {\\ \text{\footnotesize{ (#1) }}\\}
\def \intercomm#1 {\\ \text{\footnotesize{ (#1) }}\\}
\def \undercomm#1 {\underset{\text{\scriptsize{ (#1) }}}}
\def \overcomm#1 {\overset{\text{\scriptsize{ (#1) }}}}
\newcommand{\etskp}[2]{\langle #1 | #2 \rangle }
\newcommand{\hrskp}[2]{( #1 | #2 ) }
\newcommand{\bighrskp}[2]{\big( #1 \,\big|\, #2 \big) }
\def \boundedops {\bfk(\hcal)}
\def \dim {\mathrm{dim}\,}
\def \ad  {\mathrm{ad}\,}
\def \tracenorm#1 { \| #1 \| _1 }
\def \hsnorm#1 { \| #1 \| _2 }
\def \vectorcomp#1 {
  \left( \begin{array}{c}
  #1
  \end{array}
  \right)  }
\def \pder#1#2 { \frac{ \partial #1 }{ \partial #2 }}
\newcommand{\localitemlabels}{
  \renewcommand{\theenumi}{(\roman{enumi})}
  \renewcommand{\labelenumi}{\theenumi}
}
\newcommand{\mkraum}{ \rbb^{s+1} }
\newcommand{\poincare}{\pcal_+^{\uparrow}}
\DeclareMathOperator{\img}{img}
\DeclareMathOperator{\lspan}{span}
\DeclareMathOperator{\clos}{clos}
\DeclareMathOperator{\esup}{ess\,sup}
\newcommand{\uomega}{\underline{\omega}}
\newcommand{\uA}{\underline{A}}
\newcommand{\uB}{\underline{B}}
\newcommand{\uC}{\underline{C}}
\newcommand{\uH}{\underline{H}}
\newcommand{\ue}{\underline{e}}
\newcommand{\uf}{\underline{\smash f}}
\newcommand{\uR}{\underline{R}}
\newcommand{\uafk}{\underline{\afk}}
\newcommand{\ubfk}{\underline{\bfk}}
\newcommand{\ualpha}{\underline{\alpha}}
\newcommand{\udelta}{\underline{\delta}}
\newcommand{\usigma}{\underline{\sigma}}
\newcommand{\uphi}{\underline{\phi}}
\newcommand{\uPhi}{\underline{\Phi}}
\newcommand{\ugcal}{\smash{\underline{\gcal}}}
\newcommand{\uhcal}{\underline{\hcal}}
\newcommand{\uchi}{\underline{\smash\chi}}
\newcommand{\uTheta}{\underline{\Theta}}
\newcommand{\cutoff}{\underline{C}^{<}}
\def \pv {\mathbf{p}}
\def \xv {\mathbf{x}}
\def \yv {\mathbf{y}}
\newcommand{\lnorm}[2]{  \| #1 \|^{( #2 )}  }
\def \PhiFH { \Phi_{ \mathrm{FH} }  }
\newcommand{\wickprod}[1] {  : \!\! #1 \!\! : }
\def \cdotarg { \, \cdot \, }
\def \im  {\mathrm{Im}\,}
\def \re  {\mathrm{Re}\,}
\def \< {\langle}
\def \> {\rangle}
\def \st {^\ast}
\def \restrict {\lceil}
\newcommand{\mtxe}[3]{ \hrskp{#1}{#2 | #3} }
\def \boundedops {\bcal(\hcal)}
\def \dim {\mathrm{dim}\,}
\def \ad  {\mathrm{ad}\,}
\def \tracenorm#1 { \| #1 \| _1 }
\def \hsnorm#1 { \| #1 \| _2 }
\newcommand{\weakstar}{weak$\ast$ }
\newcommand{\cistar}{C$^\ast$}
\newcommand{\wstar}{W$^\ast$}
\newcommand{\mean}{\mathsf{m}}
\newcommand{\alpham}{\alpha^{(m)}}
\newcommand{\alphaz}{\alpha^{(0)}}
\newcommand{\alphalm}{\alpha^{(\lambda m)}}
\newcommand{\tausup}[1]{\tau^{(#1)}}
\newcommand{\taum}{\tausup{m}}
\newcommand{\tauz}{\tausup{0}}
\newcommand{\omegam}{\omega^{(m)}}
\newcommand{\omegaz}{\omega^{(0)}}
\newcommand{\omegalm}{\omega^{(\lambda m)}}
\newcommand{\fm}[1]{f^{(#1)}}
\newcommand{\Lambdav}{\boldsymbol{\Lambda}}
\newcommand{\afkm}{\afk^{(m)}}
\newcommand{\afkz}{\afk^{(0)}}
\newcommand{\uafkm}{\underline{\afk}^{(m)}}
\newcommand{\huafkm}{\smash{\underline{\afk}}_{\mathrm{conv}}^{(m)}}
\newcommand{\uW}{\underline{W}}
\newcommand{\ualpham}{\underline{\alpha}^{(m)}}
\newcommand{\qafk}{\afk_\mathsf{u}}
\newcommand{\qhcal}{\hcal_\mathsf{u}}
\newcommand{\qOmega}{\Omega_\mathsf{u}}
\newcommand{\qU}{U_\mathsf{u}}
\newcommand{\quomega}{\uomega_\mathsf{u}}
\newcommand{\qpi}{\pi_\mathsf{u}}
\begin{document}

\title{On dilation symmetries arising from scaling limits}
\author{Henning Bostelmann\thanks{%
Supported in parts by the EU network ``Noncommutative Geometry''
(MRTN-CT-2006-0031962)} 
\and Claudio D'Antoni$^\ast$\thanks{%
Supported in parts by PRIN-MIUR and GNAMPA-INDAM}
\and Gerardo Morsella$^{\ast\dagger}$\thanks{%
Supported in parts by the Scuola Normale Superiore, Pisa, Italy}
} 
\date{Università di Roma ``Tor Vergata'', Dipartimento di Matematica,\\ 
Via della Ricerca Scientifica, 00133 Roma, Italy
\\
\medskip
E-mail addresses: ({\tt{}bostelma,dantoni,morsella}){\tt{}@mat.uniroma2.it}
\\
\medskip
December 27, 2008}

\maketitle

\begin{abstract}
Quantum field theories, at short scales, can be approximated by a
scaling limit theory. In this approximation, an additional
symmetry is gained, namely dilation covariance. To understand the
structure of this dilation symmetry, we investigate it in a
nonperturbative, model independent context. To that end, it turns
out to be necessary to consider non-pure vacuum states in the
limit. These can be decomposed into an integral of pure states; we
investigate how the symmetries and observables of the theory
behave under this decomposition. In particular, we consider
several natural conditions of increasing strength that yield
restrictions on the decomposed dilation symmetry.
\end{abstract}

\section{Introduction} \label{sec:intro}

In the analysis of quantum field theories, the notion of scaling limits plays
an important role. The physical picture underlying this mathematical concept is
as follows: One considers measurements in smaller and smaller space-time
regions, at the same time increasing the energy content of the states
involved, so that the characteristic action scale remains constant. Passing to
the limit of infinitesimal scales, one obtains a new quantum field theory, the
\emph{scaling limit} of the original model. 
The scaling limit theory can be
seen as an approximation of the full theory in the short-distance regime.
However, it may differ significantly from the full theory in fundamental aspects, for example regarding its charge structure: In quantum
chromodynamics, it is expected that confined charges (color) appear in the
limit theory, but are not visible as such in the full theory.

The virtues of the scaling limit theory include that it is typically
\emph{simpler} than the original one. In fact, in relevant examples, one
expects it to be interactionless (asymptotic freedom). But even where this is
not the case, the limit theory should possess an additional symmetry: It should
be dilation covariant, since any finite masses in the original model can be
neglected in the limit of large energies.

On the mathematical side, a very natural description of scaling limits has been
given by Buchholz and Verch \cite{BucVer:scaling_algebras}. This description,
formulated in the \cistar{} algebraic framework of local quantum physics
\cite{Haa:LQP}, originates directly from the physical notions, and avoids any
additional input motivated merely on the technical side, such as a rescaling of
coupling constants or mass parameters, or the choice of renormalization factors
for quantum fields.  This has the advantage of allowing an intrinsic, model-independent
description of the short distance properties of the theory at hand. In particular,
it has been successfully applied to the analysis of the charge structure
of the theory in the scaling limit and to the intrinsic characterization of
charge confinement~\cite{Buc:quarks, DMV:scaling_charges}.
While the framework of Buchholz and Verch seems rather
abstract at first, it has recently been shown that it reproduces the usual
picture of multiplicative field renormalization in typical cases
\cite{BDM:scaling_fields}. 
 
The approach of \cite{BucVer:scaling_algebras} is based on the notion of the
\emph{scaling algebra} $\uafk$, which consists -- roughly speaking -- of
sequences of observables $\lambda \mapsto \uA_\lambda$ at varying scale
$\lambda$, uniformly bounded in norm, and subject to certain continuity conditions. (We shall recall
the precise definition in Sec.~\ref{sec:setting}.) The task of passing to the
scaling limit is then reduced to finding a suitable state $\uomega_0$ on
the \cistar{} algebra $\uafk$ that represents the vacuum of the limit theory; it is
constructed as a limit of vacuum states at finite scales. The limit theory
itself is then obtained by a standard GNS construction with respect to
$\uomega_0$.

It should be easy in this context to describe the additional dilation symmetry
that arises in the scaling limit. In fact, the scaling algebra $\uafk$ carries
a very natural representation $\mu \mapsto \udelta_\mu$ of the dilation group, 
which acts by shifting the argument of the 
functions $\lambda \mapsto \uA_\lambda$: $\udelta_\mu(\uA)_\lambda = \uA_{\mu\lambda}$. 
However, things turn out to be more
involved: The limit states $\uomega_0$ described in
\cite{BucVer:scaling_algebras} are not invariant under this group action, and
thus one does not obtain a canonical group representation in the
limit Hilbert space. In
\cite{BDM:scaling_fields}, generalized limit states have been introduced, some
of which are invariant under dilations, and give rise to a unitary
implementation of the dilation group in the limit theory. However, these
dilation invariant limit states are never pure; rather they arise as a mixture
of states of the Buchholz-Verch type, which are pure in 2+1 or more space-time
dimensions.

The object of the present paper is to analyze this generalized class of limit
states in more detail, in order to describe the structure of the dilation symmetry associated
with the dilation invariant ones. In particular we will show that, as briefly mentioned
in \cite{BDM:scaling_fields}, the decomposition of these states in pure
(Buchholz-Verch type) states gives rise to a direct
integral decomposition of the limit Hilbert space, which also induces a
decomposition of observables and of Poincar\'e symmetries. It should be noted
here that the entire construction is complicated by the fact that uncountably many
extremal states are involved in this decomposition, and that the measure space
underlying the direct integral is of a very general nature. Because of this,
we need to use a notion of direct integral of Hilbert spaces which is more general
than the one previously employed in the quantum field theory literature
\cite{DriSum:central_decomp}.

It is also of interest to discuss how the special but physically important class of
theories with a \emph{unique scaling limit,} as defined in
\cite{BucVer:scaling_algebras}, fits into our generalized framework. It turns out
that, up to some technical conditions, uniqueness of the scaling limit
in the Buchholz-Verch framework is equivalent to the factorization of our
generalized scaling limit into a tensor product of an irreducible scaling limit
theory and a commutative part, which is just the image under the scaling
limit representation of the center of the scaling algebra. In particular
we show that such factorization holds for a restricted class of theories, those
with a \emph{convergent scaling limit}. This class includes in particular
dilation invariant theories and free field models. The technical conditions
referred to above consist in a suitable separability requirement of the scaling
limit Hilbert space, which is needed in order to be able to employ the full power 
of direct integrals theory. As a matter of fact, such separability condition is
a consequence of a refined version of the Haag-Swieca compactness condition.

With these results at hand, it is possible to discuss the structure of the unitarily
implemented dilation symmetry in dilation invariant scaling limit states. The
outcome is that in general the dilations do not decompose, not even in
the factorizing situation. Rather, the dilations intertwine in a suitable sense the different
pure limit states that occur in the direct integral decomposition. A complete 
factorization of the dilation symmetry is however obtained in the convergent
scaling limit case. For such theories, therefore, one gets a unitary implementation 
of the dilation symmetry in the pure limit theory.

The remainder of this paper is organized as follows: 
First, in
Sec.~\ref{sec:general}, we recall the notion of scaling limits in the algebraic
approach to quantum field theory, and generalize some fundamental results of
\cite{BucVer:scaling_algebras} to our situation. In
Sec.~\ref{sec:decompTheory}, we establish the direct integral decomposition
mentioned above, including a decomposition of local observables and Poincaré
symmetries.
Sec.~\ref{sec:uniqueFactorLim} contains a discussion of unique scaling limits
as a special case. We define several conditions that generalize the notion from
\cite{BucVer:scaling_algebras}, and discuss relations between them. Then, in
Sec.~\ref{sec:dilations}, we analyze the structure of dilation symmetries in
the limit Hilbert space, and their decomposition along the direct integral, on
different levels of generality.
In Sec.~\ref{sec:compactness} we propose a stronger version of the Haag-Swieca
compactness condition and we show that it implies the separability property
used in the analysis of Sec.~\ref{sec:uniqueFactorLim}.
Sec.~\ref{sec:examples} discusses some simple models as examples, showing in
particular that these fulfill all of our conditions proposed in
Sec.~\ref{sec:uniqueFactorLim} and~\ref{sec:compactness}. We conclude with a brief outlook in
Sec.~\ref{sec:conclusion}. The appendix reviews the concept of direct
integrals of Hilbert spaces, which we need in a more general variant than
covered in the standard literature.

\section{Definitions and general results} \label{sec:general}

We shall first recall the definition of scaling limits in the algebraic
approach to quantum field theory, and prove some fundamental results regarding
uniqueness of the limit vacuum state and regarding geometric modular action.

\subsection{The setting} \label{sec:setting}

We consider quantum field theory on $(s+1)$ dimensional Minkowski space. For our
analysis, we work entirely within the framework of algebraic quantum field theory \cite{Haa:LQP}, where observables localized in a space-time region
$\ocal$ are described by the selfadjoint elements of a \cistar{} algebra
$\afk(\ocal)$. Let us repeat the formal definition of a quantum field
theoretical model in this context.

\begin{definition} \label{def:localnet}
Let $\gcal$ be a Lie group of point transformations of Minkowski space that includes the 
translation group. A \emph{local net of algebras} with symmetry group $\gcal$
is a net of algebras $\afk$ together with a representation $g \mapsto \alpha_g$ of $\gcal$ 
as automorphisms of $\afk$, such that
\begin{enumerate}
\localitemlabels
\item $[A_1, A_2] = 0$ if $\ocal_1,\ocal_2$ are two spacelike separated regions,
and $A_i \in \afk(\ocal_i)$;
\item  $\alpha_{g} \afk(\ocal) = \afk(g.\ocal)$
for all $\ocal,g$ .
\end{enumerate}
We call $\afk$ a \emph{net in a positive energy representation} if, in addition, the $\afk(\ocal)$
are $W^*$ algebras acting on a common Hilbert space $\hcal$, and
\begin{enumerate}
\setcounter{enumi}{2}
\localitemlabels
\item there is a strongly continuous unitary representation $g \mapsto U(g)$ of $\gcal$
on $\hcal$ such that $\alpha_{g} = \ad U(g)$;
\item the joint spectrum of the generators of translations $U(x)$ lies
in the closed forward light cone $\bar \vcal_+$;
\item there exists a vector $\Omega \in \hcal$ which is invariant under all $U(g)$ and
  cyclic for $\afk$.
\end{enumerate}
We call $\afk$ a \emph{net in the vacuum sector} if, in addition, 
\begin{enumerate}
\setcounter{enumi}{5}
\localitemlabels
\item \label{uniqVacCond} the vector $\Omega$ is unique (up to scalar factors) as an invariant vector for the translation group.
\end{enumerate}
\end{definition}

Our approach is to start from a local net $\afk$ in the vacuum sector, with
the Poincaré group $\poincare$ as its symmetry group; this net $\afk$ will be
kept fixed in all that follows. Our aim is to describe the short-distance
scaling limit of $\afk$. Following \cite{BucVer:scaling_algebras}, we define
$\ubfk$ to be the set of bounded functions $\uB: \rbb_+ \to \boundedops$,
$\lambda \mapsto \uB_\lambda$. Equipped with pointwise addition,
multiplication, and $\ast$ operation, and with the norm $\|\uB\| =
\sup_\lambda \|\uB_\lambda\|$, the set $\ubfk$ becomes a \cistar{} algebra. Let
$\ugcal$ be the group formed by Poincaré transformations and dilations; we will
write $\ugcal\ni g=(\mu,x,\Lambda)$ with $\mu \in \rbb_+$, $x \in \rbb^{s+1}$,
and $\Lambda$ a Lorentz matrix. $\ugcal$ acts on $\ubfk$ via a representation
$\ualpha$, given by
\begin{equation}
    (\ualpha_g \uB)_\lambda = \alpha_{\lambda \mu x, \Lambda}(\uB_{\lambda\mu})
    \quad
    \text{for } g=(\mu,x,\Lambda)\in\ugcal, \; \uB \in \ubfk,
\end{equation}
where $\alpha$ is the Poincaré group representation on $\afk$. Note the
rescaling of translations with the scale parameter $\lambda$. We now define new
local algebras as subsets of $\ubfk$:
\begin{equation} 
  \uafk(\ocal) := \big\{ \uA \in \ubfk \,|\, \uA_\lambda \in \afk(\lambda \ocal) \text{ for all }\lambda>0; \; 
   g \mapsto \ualpha_g( \uA ) \text{ is norm continuous} \big\}.
\end{equation}
This is a net of local algebras in the sense of Def.~\ref{def:localnet}, with
the enlarged symmetry group $\ugcal$ \cite{BDM:scaling_fields}. We denote by $\uafk$ the
associated quasilocal algebra, i.e.~the inductive limit of $\uafk(\ocal)$ as $\ocal
\nearrow \rbb^{s+1}$. This $\uafk$ is called the \emph{scaling algebra}. Note
that $\uafk$ has a large center $\zfk(\uafk)$, consisting of all operators $\uA$
of the form $\uA_\lambda = f(\lambda) \idop$, where $f : \rbb_+ \to \cbb$ is a
uniformly continuous function on $\rbb_+$ as a group under multiplication. We
often identify $\uA \in \zfk(\uafk)$ with the function $f$ without further
notice.

For a description of the scaling \emph{limit,} we first consider states on
$\zfk(\uafk)$. Let $\mean$ be a mean on the uniformly continuous
functions\footnote{%
In contrast to \cite{BDM:scaling_fields}, we do not consider means on the
\emph{bounded} functions on $\rbb_+$, but rather on the uniformly continuous
functions. While all of them can be extended to the bounded functions,
these extensions do not play a role in our current investigation.} 
on $\rbb_+$, i.e., a positive normalized linear functional on the
commutative \cistar{} algebra $\zfk(\uafk)$. We say that $\mean$ is
\emph{asymptotic} if $\mean(f) = \lim_{\lambda \to 0} f(\lambda)$ whenever the
limit on the right-hand side exists; or, equivalently, if $\mean(f)=0$ whenever
$f(\lambda)=0$ for small $\lambda$. Asymptotic means are, in this sense,
generalizations of the limit $\lambda \to 0$. Further we consider two important
classes of means:
\begin{enumerate}
  \localitemlabels
  \item \label{it:multMean} $\mean$ is called \emph{multiplicative} if
  $\mean(f g) = \mean(f)\mean(g)$ for all functions $f,g$.
  \item \label{it:invMean} $\mean$ is called \emph{invariant} if
  $\mean(f_\mu)=\mean(f)$ for all functions $f$ and all $\mu>0$, where
  $f_\mu=f(\mu \cdotarg)$.
\end{enumerate}
It is an important fact that \ref{it:multMean} and \ref{it:invMean} are
mutually exclusive; there are no multiplicative invariant means in our
situation (cf.~\cite{Mit:mult_invar_means}).

We now extend these ``generalized limits'' of functions to a limit of operator
sequences, using a projection technique. Let $\omega=
\mtxe{\Omega}{\cdotarg}{\Omega}$ be the vacuum state of $\afk$. This state
induces a projector (or conditional expectation) in $\uafk$ onto $\zfk(\uafk)$,
which we denote by the same symbol:
\begin{equation}
  \omega: \uafk \to \zfk(\uafk), \quad
  (\omega(\uA))_\lambda = \omega(\uA_\lambda) \idop.
\end{equation}
Using this projector, any mean $\mean$ defines a state $\uomega_\mean$ on
$\uafk$ by $\uomega_\mean := \mean \circ \omega$. If here $\mean$ is asymptotic,
we call $\uomega_\mean$ a \emph{limit state}, and typically denote it by
$\uomega_0$. These are the states that correspond to scaling limits of the quantum field
theory. Since there is a one-to-one correspondence between asymptotic means and
limit states, we will usually work with the state $\uomega_0$ only, and not
refer to the mean $\mean$ explicitly. A limit state $\uomega_0$ will be called 
\emph{multiplicative}\footnote{%
For clarity, we note that a multiplicative limit state, by this definition, is
not a multiplicative functional on $\uafk$, but is multiplicative only on the
center $\zfk(\uafk)$.
} 
or \emph{invariant} if the corresponding mean has this
property. Multiplicative limit states correspond to those
considered by Buchholz and Verch in \cite{BucVer:scaling_algebras}. Every other
limit state arises from these by convex combinations and \weakstar limits; this
follows directly from the property of states on the commutative algebra
$\zfk(\uafk)$.

Given a limit state $\uomega_0$, we can obtain the limit
theory via a GNS construction: Let $\pi_0$ be the GNS representation of $\uafk$ with respect to
$\uomega_0$, and $\hcal_0$ the representation space, with GNS vector
$\Omega_0$. Denoting by $\gcal_0$ the subgroup of $\ugcal$ under which
$\uomega_0$ is invariant, we canonically obtain a strongly continuous unitary
representation of $\gcal_0$ on $\hcal_0$ by setting $U_0(g) \pi_0(\uA)\Omega_0
:= \pi_0(\ualpha_g(\uA))\Omega_0$, $g \in \gcal_0$. The subgroup $\gcal_0$
contains the Poincaré group; and if $\uomega_0$ is invariant, then
$\gcal_0=\ugcal$. The translation part of $U_0$ fulfills the spectrum
condition \cite{BDM:scaling_fields}. Setting $\afk_0(\ocal):=
\pi_0(\uafk(\ocal))''$, one obtains a local net $\afk_0$ with symmetry group
$\gcal_0$ in a positive energy representation: the limit theory.

\subsection{Multiplicity of the vacuum state}

If $\uomega_0$ is a multiplicative limit state, its restriction to
$\zfk(\uafk)$ is pure. It has been shown in \cite{BucVer:scaling_algebras}
that in the case $s\geq 2$, this property extends to the entire theory:
$\uomega_0$ is a pure vacuum state on $\uafk$, and $\pi_0$ is an irreducible
representation. On the other hand, if $\uomega_0$ is not multiplicative,
the same must be false, since already $\pi_0\restrict \zfk(\uafk)$ is
reducible. However, we shall show that this property of the center is the only ``source'' of reducibility:
namely one has $\pi_0(\uafk)' = \pi_0(\zfk(\uafk))''$.

We need some preparations to prove this. In the following, set $\zfk_0 :=
\pi_0(\zfk(\uafk))''$, and let $\hcal_\zfk := \clos(\zfk_0 \Omega_0) \subset
\hcal_0$ be the representation space of the commutative algebra.

\begin{lemma} \label{centerProjLemm}
  Let $P_\zfk \in \bcal(\hcal_0)$ be the orthogonal projector onto $\hcal_\zfk$.
  If $s \geq 2$, then $P_\zfk \in \pi_0(\uafk)''$, and $\hcal_\zfk$ is the space of all translation-invariant vectors in $\hcal_0$.
\end{lemma}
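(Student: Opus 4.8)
The plan is to identify $\hcal_\zfk$ with the subspace $\hcal_0^{\mathrm{inv}}$ of $U_0$-invariant vectors and to realize its projector as an ergodic mean of translations. One inclusion is immediate and holds for every $s$: since translations act trivially on the center — for $\uC\in\zfk(\uafk)$ with $\uC_\lambda=f(\lambda)\idop$ one has $(\ualpha_{(1,x,\idop)}\uC)_\lambda=\alpha_{\lambda x}(f(\lambda)\idop)=f(\lambda)\idop=\uC_\lambda$ — every vector $\pi_0(\uC)\Omega_0$ is fixed by $U_0(x)$. As $\hcal_\zfk=\clos(\zfk_0\Omega_0)=\clos(\pi_0(\zfk(\uafk))\Omega_0)$ and $U_0$ is strongly continuous, this gives $\hcal_\zfk\subseteq\hcal_0^{\mathrm{inv}}$.

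For the converse I would invoke the mean ergodic theorem: the averages $M_T:=|B_T|^{-1}\int_{B_T}U_0(x)\,dx$ (with $B_T\subset\rbb^{s+1}$ the ball of radius $T$) converge strongly to the projector $E$ onto $\hcal_0^{\mathrm{inv}}$. Since $\Omega_0$ is cyclic, it suffices to show $E\,\pi_0(\uA)\Omega_0\in\hcal_\zfk$ for local $\uA$; in fact I claim the sharp identity
\[
  E\,\pi_0(\uA)\Omega_0=\pi_0(\omega(\uA))\Omega_0 .
\]
Writing $\uA=\omega(\uA)+\uA'$, the central summand already lies in $\hcal_\zfk$ and is fixed by $E$, so the claim reduces to $E\,\pi_0(\uA')\Omega_0=0$ whenever $\omega(\uA')=0$. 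Unfolding $\|M_T\pi_0(\uA')\Omega_0\|^2$ through the mean, pulling the finite double integral inside $\mean$ (legitimate by norm-continuity of $x\mapsto\ualpha_x\uA'$) and substituting $y=\lambda x$ yields
\[
  \|E\,\pi_0(\uA')\Omega_0\|^2=\lim_{T\to\infty}\mean\big(\lambda\mapsto g_\lambda(T)\big),\qquad
  g_\lambda(T)=\Big\|\tfrac{1}{|B_{\lambda T}|}\!\int_{B_{\lambda T}}\!\!U(y)\,\uA'_\lambda\Omega\,dy\Big\|^2\ge 0 .
\]
For each fixed $\lambda$, uniqueness of the vacuum of $\afk$ (Def.~\ref{def:localnet}\,\ref{uniqVacCond}) forces $g_\lambda(T)\to|\omega(\uA'_\lambda)|^2=0$ as $T\to\infty$; the entire content of the lemma is to carry this vanishing through the outer order of limits.

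This last step is the main obstacle, and it is exactly where $s\ge 2$ enters. The difficulty is that the spatial ergodic limit $T\to\infty$ and the asymptotic mean $\mean$ (which probes $\lambda\to0$) need not commute: for small $\lambda$ the decay $g_\lambda(T)\to0$ sets in only at $T\sim\lambda^{-1}$, because a deeply scaled observable $\uA'_\lambda\in\afk(\lambda\ocal)$ creates from the vacuum a state of high energy. Writing $g_\lambda(T)=\int|\hat M_T(\lambda p)|^2\,d\mu_\lambda(p)$ in terms of the energy–momentum spectral measure $\mu_\lambda$ of $\uA'_\lambda\Omega$ and rescaling momenta by $q=\lambda p$, the obstruction is precisely the infrared mass $\mu_\lambda(\{|p|<\delta/\lambda\})$, i.e.\ the low-energy content of $\uA'_\lambda\Omega$. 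I would prove that this content is uniformly small as $\lambda\to0$ by means of a clustering estimate of Buchholz–Verch type \cite{BucVer:scaling_algebras}, which combines uniqueness of the vacuum and positivity of the energy with the geometry of spacelike separations available only when $s\ge2$; this furnishes the interchange of limits, hence $E\,\pi_0(\uA')\Omega_0=0$ and the identity above. Together with the easy inclusion this gives $\hcal_\zfk=\hcal_0^{\mathrm{inv}}$ and thus $P_\zfk=E$. Finally, since in a positive-energy vacuum representation the translations $U_0(x)$ lie in $\pi_0(\uafk)''$ (a standard consequence of the Reeh–Schlieder property and the spectrum condition), the strong limit $E=P_\zfk$ lies in $\pi_0(\uafk)''$ as well, establishing both assertions of the lemma.
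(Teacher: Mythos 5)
Your overall strategy (ergodic means of translations, reduction to $E\,\pi_0(\uA')\Omega_0=0$ for $\omega(\uA')=0$, pulling the average inside the mean) is sound, and you correctly locate both the crux --- interchanging the ergodic limit $T\to\infty$ with the asymptotic mean $\mean$ --- and the place where $s\geq 2$ must enter. But precisely that crux is left unproven: the final paragraph is a promissory note (``I would prove \dots by means of a clustering estimate of Buchholz--Verch type''), and as your argument is set up the cited tool cannot deliver it. The Buchholz--Verch estimate, which the paper quotes explicitly, is a \emph{norm} estimate in the scaling algebra, $\|\omega(\uA\,\ualpha_x\uB)-\omega(\uA)\omega(\uB)\|\leq c\,r^s|x|^{-(s-1)}\big(\|\uA\|\|\dot\uB\|+\|\dot\uA\|\|\uB\|\big)$, valid only for $x$ \emph{spacelike, in the time-zero plane}. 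Your averages run over balls $B_T\subset\rbb^{s+1}$ in spacetime, so the double integral over $B_T\times B_T$ is dominated by pairs at timelike separation, for which no uniform-in-$\lambda$ clustering rate is available in this model-independent setting (timelike clustering at fixed scale follows from unique vacuum plus Riemann--Lebesgue, but without a rate, and a rate is exactly what the interchange of limits needs). A correct completion must first reduce to purely spatial averages --- e.g.\ average over balls in the time-zero plane, note that the resulting projector onto spatially invariant vectors dominates $E$, and then the Buchholz--Verch bound gives $g_\lambda(T)\leq C\big(r^s T^{-s}+r^sT^{1-s}\big)$ \emph{uniformly} in $\lambda$, which vanishes as $T\to\infty$ exactly when $s\geq2$. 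This reduction and estimate are the missing mathematical content.

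Relatedly, your diagnosis of the obstruction is off: you claim the decay of $g_\lambda(T)$ sets in only at $T\sim\lambda^{-1}$, but since the limit translations $U_0(x)$ implement the \emph{rescaled} action ($\ualpha_x$ acts as $\alpha_{\lambda x}$ at scale $\lambda$), both the averaging radius and the localization radius scale with $\lambda$, and the spacelike decay sets in uniformly at $T\sim r$ --- that uniformity is the whole point of formulating clustering in the scaling algebra. The genuine non-uniformity lives in the timelike directions your spacetime average includes. For comparison, the paper avoids ergodic means altogether: it takes an ultraweak cluster point $U_\infty$ of $U_0(x)$ along a sequence going to spacelike infinity \emph{in the time-zero plane} (Alaoglu--Bourbaki), and the uniform clustering estimate plus the identity $\uomega_0(\uA\,\uB)=\uomega_0(\omega(\uA)\uB)$ for central $\uB$ give directly $U_\infty\pi_0(\uB)\Omega_0=\pi_0(\omega(\uB))\Omega_0$, whence $U_\infty=P_\zfk\in\pi_0(\uafk)''$ and the identification of $\hcal_\zfk$ with the invariant vectors; by working only with spacelike translations from the start, the interchange-of-limits problem you struggle with never arises.
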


\begin{proof}
As a consequence of the spectrum condition in the theory $\afk_0$, it is known
\cite{Ara:observable_alg} that the translation operators $U_0(x)$ are contained in $\pi_0(\uafk)''$. Now let $U_\infty$ be an ultraweak cluster point of $U_0(x)$ as $x$ goes to spacelike infinity on 
some fixed sequence within the time-0 plane. (Such cluster points exist by the Alaoglu-Bourbaki theorem.) 
Then $U_\infty \in \pi_0(\uafk)''$; we will show $U_\infty = P_\zfk$. 

To that end, we first note that
\begin{equation} \label{meanfactor}
   \uomega_0(\uA \,\uB) = \uomega_0( \omega(\uA) \, \uB ) \quad
 \text{for all } \uA \in \uafk, \; \uB \in \zfk(\uafk),
\end{equation}
which follows directly from the definition of $\uomega_0$. 
Now we make use of the cluster property of the vacuum at finite scales. 
As in \cite[Lemma~4.3]{BucVer:scaling_algebras},
one can obtain the following norm estimate in the algebra $\uafk$:
\begin{equation}
   \|\omega(\uA \, \ualpha_x \uB)  - \omega(\uA) \omega(\uB) \|
  \leq
   c \frac{r^{s}}{|x|^{s-1}} \big( \|\uA\| \|\dot \uB \| + \|\dot \uA\| \| \uB
   \|\big)
\end{equation}
for fixed $r > 0$, $x$ in the time-0 plane with $|x| > 3r$, and for $\uA,\uB$
chosen from some norm-dense subset of $\uafk(\ocal_r)$, with $\ocal_r$ being the standard double cone of
radius $r$ around the origin. Here $c>0$ is some constant, and the dot denotes
the time derivative. This implies that as $|x| \to \infty$,
\begin{equation}
  \lim_x \uomega_0( \uA \,\ualpha_x \uB ) = \uomega_0 (\omega(\uA) \omega(\uB))
\end{equation}
for these $\uA, \uB$. Now it follows from Eq.~\eqref{meanfactor} -- with $\omega(\uB)$ in place of $\uB$ -- 
that
\begin{equation}
  \hrskp{ \pi_0(\uA) \Omega_0 }{ U_\infty \pi_0( \uB) \Omega_0 }
  = \lim_x \uomega_0( \uA \, \ualpha_x \uB ) = \hrskp{ \pi_0(\uA) \Omega_0 }{
  \pi_0( \omega(\uB)) \Omega_0 }.
\end{equation}
Continuing this relation from the dense sets chosen, this means
\begin{equation}
  U_\infty \pi_0( \uB) \Omega_0 = \pi_0(\omega(\uB)) \Omega_0 
 \quad \text{for all } \uB \in \uafk.
\end{equation}
This shows that $U_\infty^2 = U_\infty$, and $\img U_\infty = \hcal_\zfk$. Also, again applying
Eq.~\eqref{meanfactor}, one obtains $U_\infty\st = U_\infty$. Thus $U_\infty$ is the unique orthogonal 
projector onto $\hcal_\zfk$.---For the last part, note that translations act trivially on $\hcal_\zfk$,
and that $U_\infty$ leaves all translation-invariant vectors unchanged; so $\hcal_\zfk$ is the space
of all translation-invariant vectors. 
\end{proof}

We are now ready to prove the announced result about the commutant of $\pi_0(\uafk)$.

\begin{theorem} \label{thm:pi0CommThm}
Let $s \geq 2$. Let $\uomega_0$ be a limit state, and let $\pi_0$ be the
corresponding GNS representation. Then $\pi_0(\uafk)' = \pi_0(\zfk(\uafk))''$.
\end{theorem}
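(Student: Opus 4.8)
The plan is to prove the two inclusions separately, the trivial one being $\zfk_0 \subseteq \pi_0(\uafk)'$ and the substantial one $\pi_0(\uafk)' \subseteq \zfk_0$. For the first, the elements of $\zfk(\uafk)$ are central in $\uafk$, so $\pi_0(\zfk(\uafk))$ commutes with $\pi_0(\uafk)$, and since $\pi_0(\uafk)'$ is weakly closed, passing to the weak closure gives $\zfk_0 = \pi_0(\zfk(\uafk))'' \subseteq \pi_0(\uafk)'$. At the same time $\pi_0(\zfk(\uafk)) \subseteq \pi_0(\uafk) \subseteq \pi_0(\uafk)''$, so also $\zfk_0 \subseteq \pi_0(\uafk)''$. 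The key structural observation I would record at the outset is therefore that $\zfk_0 \subseteq \pi_0(\uafk)' \cap \pi_0(\uafk)''$, i.e.\ $\zfk_0$ lies in the \emph{center} of $\pi_0(\uafk)'$; in particular every $T \in \pi_0(\uafk)'$ commutes with all of $\zfk_0$.

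For the reverse inclusion, fix $T \in \pi_0(\uafk)'$ and use the hypothesis $s \geq 2$ through Lemma~\ref{centerProjLemm}. Since the translations $U_0(x)$ and the projector $P_\zfk$ both lie in $\pi_0(\uafk)''$ (the former by the Araki result quoted in the proof of that lemma, the latter by the lemma itself), $T$ commutes with $U_0(x)$ and with $P_\zfk$, hence leaves $\hcal_\zfk$ invariant; by the lemma $\hcal_\zfk$ is exactly the space of translation-invariant vectors, and it contains $\Omega_0$. I would then work on $\hcal_\zfk$: the abelian von Neumann algebra $\zfk_0$ leaves $\hcal_\zfk$ invariant and, by the construction $\hcal_\zfk = \clos(\zfk_0 \Omega_0)$, has $\Omega_0$ as a cyclic vector there, so its restriction $\zfk_0 \restrict \hcal_\zfk$ is a \emph{maximal} abelian subalgebra of $\bcal(\hcal_\zfk)$. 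Because $T \restrict \hcal_\zfk$ commutes with $\zfk_0 \restrict \hcal_\zfk$ — this is where the opening observation is decisive — maximality forces $T \restrict \hcal_\zfk \in \zfk_0 \restrict \hcal_\zfk$, so there is some $C \in \zfk_0$ with $T\xi = C\xi$ for all $\xi \in \hcal_\zfk$; in particular $T\Omega_0 = C\Omega_0$.

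Finally I would transport this identity from $\hcal_\zfk$ to all of $\hcal_0$: since $\Omega_0$ is cyclic for $\pi_0(\uafk)$ it is separating for $\pi_0(\uafk)'$, and as $T$ and $C$ both lie in $\pi_0(\uafk)'$ and agree on $\Omega_0$, we conclude $T = C \in \zfk_0$. The main obstacle is precisely the point flagged in the first step: a priori $T$ and the central elements are merely two families inside the commutant, which need not commute, so the whole argument hinges on first recognizing that $\zfk_0$ sits in the center of $\pi_0(\uafk)'$; once that is secured, the combination of maximal abelianness on $\hcal_\zfk$ with the separating property of $\Omega_0$ does the rest. A minor point to check carefully is that $\zfk_0 \restrict \hcal_\zfk$ is genuinely a von Neumann algebra on $\hcal_\zfk$ (equivalently, that $P_\zfk$ commutes with $\zfk_0$), which holds because $\zfk_0$ is a $\ast$-algebra leaving $\hcal_\zfk$ invariant.
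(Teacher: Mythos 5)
Your proposal is correct and follows essentially the same route as the paper's own proof: commutation with $P_\zfk$ via Lemma~\ref{centerProjLemm}, restriction to $\hcal_\zfk$ where $\zfk_0 \restrict \hcal_\zfk$ is maximal abelian by the cyclic-vector criterion (the paper cites the same fact from Bratteli--Robinson), and cyclicity of $\Omega_0$ to promote $T \restrict \hcal_\zfk = C \restrict \hcal_\zfk$ to $T = C$ on all of $\hcal_0$. The only differences are presentational: you package the commutation step as ``$\zfk_0$ lies in the center of $\pi_0(\uafk)'$'' and the final step as the separating property of $\Omega_0$ for the commutant, both of which the paper carries out by the equivalent direct computations.
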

\begin{proof}
Let $B \in \pi_0(\uafk)'$. By Lemma~\ref{centerProjLemm}, $B$ commutes with $P_\zfk$;
hence $B \hcal_\zfk \subset \hcal_\zfk$, and $B \restrict \hcal_\zfk \in \bfk(\hcal_\zfk)$ is well-defined.
As $\zfk_0 \subset \pi_0(\uafk)''$, we know that
\begin{equation}
  [ B \restrict \hcal_\zfk , C \restrict \hcal_\zfk] = 0 \quad
  \text{for all } C \in \zfk_0
\end{equation}
as an equation in $\bfk(\hcal_\zfk)$. Since $\zfk_0\restrict \hcal_\zfk$ is a maximal abelian algebra in
$\bfk(\hcal_\zfk)$ \cite[Lemma~4.3.15]{BraRob:qsm1}, there exists $C \in \zfk_0$ with $B \restrict \hcal_\zfk = C \restrict \hcal_\zfk$.
Now for any $A \in \pi_0(\uafk)''$, we can compute
\begin{equation}
   BA \Omega_0  =  AB \Omega_0 = AC \Omega_0 = CA \Omega_0.
\end{equation}
Since $\Omega_0$ is cyclic for $\pi_0$, this implies $B = C$. Thus
$\pi_0(\uafk)' \subset \zfk_0$. The reverse inclusion is trivial.
\end{proof}

It should be noted that the same theorem does not hold in 1+1 space-time dimensions. In this case, it is known
even in free field theory \cite[Sec.~4]{BucVer:scaling_examples} 
that the algebra $\pi_0(\uafk)$ has a large center, even if $\pi_0(\zfk(\uafk))
= \cbb\idop$.

We can now easily reproduce the known results for multiplicative limit
states. In this case, the GNS representation of the abelian algebra
$\zfk(\uafk)$ for the state $\uomega_0$ must be irreducible; thus $\zfk_0 = \cbb
\idop$, and $\dim \hcal_\zfk=1$. The above results imply:

\begin{corollary}
  Let $\uomega_0$ be a multiplicative limit state, and let $s \geq 2$. Then
  $\Omega_0$ is unique up to a scalar factor as an invariant vector for the
  translations $U_0(x)$, and the representation $\pi_0$ is irreducible.
  $\afk_0$ is a net in the vacuum sector in the sense of
  Def.~\ref{def:localnet}.
\end{corollary}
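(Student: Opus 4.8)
The plan is to derive the corollary directly from Theorem~\ref{thm:pi0CommThm} and Lemma~\ref{centerProjLemm}, specialized to the multiplicative case. First I would observe that multiplicativity of $\uomega_0$ means precisely that its restriction to the commutative algebra $\zfk(\uafk)$ is a multiplicative (i.e.\ pure) state, so the GNS representation of $\zfk(\uafk)$ it induces is one-dimensional. Since $\hcal_\zfk = \clos(\zfk_0 \Omega_0)$ is by construction the GNS space for this restricted representation, this gives $\dim \hcal_\zfk = 1$ and hence $\zfk_0 = \pi_0(\zfk(\uafk))'' = \cbb \idop$.

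Next I would feed this into the two preceding results. By Lemma~\ref{centerProjLemm}, under $s \geq 2$ the space $\hcal_\zfk$ is exactly the space of translation-invariant vectors in $\hcal_0$; since $\hcal_\zfk$ is one-dimensional and contains $\Omega_0$, it follows that $\Omega_0$ is the unique translation-invariant vector up to a scalar, establishing the vacuum uniqueness condition~\ref{uniqVacCond} of Def.~\ref{def:localnet}. For irreducibility, Theorem~\ref{thm:pi0CommThm} gives $\pi_0(\uafk)' = \pi_0(\zfk(\uafk))'' = \zfk_0 = \cbb\idop$, so the commutant is trivial and $\pi_0$ is irreducible.

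Finally, to conclude that $\afk_0$ is a net in the vacuum sector, I would simply check the remaining items of Def.~\ref{def:localnet} against what has already been assembled in Sec.~\ref{sec:setting}: the discussion there already records that $\afk_0(\ocal) = \pi_0(\uafk(\ocal))''$ is a local net in a positive energy representation with symmetry group $\gcal_0 \supseteq \poincare$, that $U_0$ exists as a strongly continuous unitary representation implementing the symmetries, that its translation part satisfies the spectrum condition, and that $\Omega_0$ is cyclic and invariant. The only genuinely new condition for the vacuum sector is~\ref{uniqVacCond}, which the uniqueness of $\Omega_0$ just established supplies.

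I do not anticipate a serious obstacle here, as the corollary is essentially a bookkeeping specialization of the two results proved above; the one point requiring minor care is the identification of $\hcal_\zfk$ with the GNS space of the restricted state, so that multiplicativity on the center can be correctly translated into $\dim \hcal_\zfk = 1$. Everything else follows by direct application of Lemma~\ref{centerProjLemm} and Theorem~\ref{thm:pi0CommThm}.
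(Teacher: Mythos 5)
Your proposal is correct and follows essentially the same route as the paper: multiplicativity of $\uomega_0$ on the center forces the GNS space of the restricted state to be one-dimensional, hence $\dim \hcal_\zfk = 1$ and $\zfk_0 = \cbb\idop$, after which Lemma~\ref{centerProjLemm} yields uniqueness of the translation-invariant vector and Theorem~\ref{thm:pi0CommThm} yields triviality of the commutant, the remaining vacuum-sector conditions having been assembled in Sec.~\ref{sec:setting}. Your inference from $\dim\hcal_\zfk=1$ to $\zfk_0=\cbb\idop$ implicitly uses that $\pi_0(\zfk(\uafk))$ commutes with $\pi_0(\uafk)$ and that $\Omega_0$ is cyclic, but this is the same (standard) step the paper also leaves tacit.
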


\subsection{Wedge algebras and geometric modular action}

While we have defined the scaling limit in terms of local algebras for bounded
regions, it is also worthwhile to consider algebras associated with unbounded,
in particular wedge-shaped regions. This is particularly important in the
context of charge analysis for the limit theory
\cite{DMV:scaling_charges,DAnMor:supersel_models}.
While we do not enter this topic here, and do not build on it in the following,
we wish to discuss briefly how wedge algebras and the condition of geometric modular
action fit into our context. Again, this transfers results of
\cite{BucVer:scaling_algebras} to our generalized class of limit states.

Let $\wcal$ be a wedge region, i.e.\  $\wcal$ is a Poincar\'e transform of the right wedge 
\begin{equation}
\wcal_+ = -\wcal_- = \{ x \in \rbb^4\,|\, x \cdot e_+ < 0 \}, \qquad
\text{where } e_\pm := (\pm 1, 1,0,0).
\end{equation}
Note that $(\overline{\wcal}_+)' = \wcal_-$. We introduce the one-parameter
group $(\Lambda_t)_{t \in \rbb}$ of Lorentz boosts leaving $\wcal_+$ invariant,
fixed by $\Lambda_t e_\pm = \exp(\pm t) e_\pm$, and acting as the identity on
the edge $(e_\pm)^\perp$ of $\wcal_+$. Let furthermore $j$ be the inversion with respect to the edge of $\wcal_+$, i.e.\ $j e_\pm = - e_\pm$ and $j = \idop$ on $(e_\pm)^\perp$. Note that $j^2 = \idop$ and $j \wcal_+ = \wcal_-$.

For a local net of algebras (resp.\ for a net of algebras in a positive energy
representation) $\ocal \mapsto \afk(\ocal)$, we define the algebra
$\afk(\wcal)$ associated to the wedge $\wcal$ as the \cistar-algebra (resp.
\wstar-algebra) generated by the algebras $\afk(\ocal)$, where $\ocal$ is any
double cone whose closure is contained in $\wcal$ ($\ocal \subset\subset \wcal$
in symbols). With these definitions, we can adapt the arguments
in~\cite[Lemma 6.1]{BucVer:scaling_algebras}, which do not depend on
irreducibility of the net. It is then straightforward to verify that, for a net
$\afk$ in a positive energy representation, the vacuum vector $\Omega$ is cyclic and separating for all wedge algebras $\afk(\wcal)$. This allows us to introduce the notion of geometric modular action.

\begin{definition} \label{def:modularGeoAction}
Let $\afk$ be a local net in a positive energy
representation, and denote by $\Delta$, $J$ the modular objects associated to
$\afk(\wcal_+)$, $\Omega$.
The net $\afk$ is said to satisfy the \emph{condition of geometric
modular action} if there holds
\begin{alignat*}{2}
\Delta^{it} &= U(\Lambda_{2\pi t}),& \qquad \qquad t &\in \rbb,
\\
J U(x,\Lambda) J &= U(jx,j \Lambda j),& (x,\Lambda) &\in
\pcal_+^\uparrow,
\\ \tag{$\ast$} 
J \afk(\ocal) J &= \afk(j \ocal). \label{cgmaInversion}
\end{alignat*}
\end{definition}

If $\afk$ satisfies the condition of geometry modular action, then it also satisfies wedge duality, since, according to Tomita-Takesaki theory and equation~\eqref{cgmaInversion},
\begin{equation}
\afk(\wcal_+)' = J\afk(\wcal_+) J = \afk(\wcal_-).
\end{equation}

This also implies that $\afk$ satisfies \emph{essential Haag duality}, i.e.\ that the \emph{dual net} $\afk^d$ of $\afk$, defined on double cones $\ocal$ as
\begin{equation}
\afk^d(\ocal) := \bigwedge_{\wcal \supset \ocal}\afk(\wcal),
\end{equation}
is local and such that $\afk(\ocal) \subset \afk^d(\ocal)$ for each double cone $\ocal$.

From now on, let $\afk$ be a net in the vacuum sector, and $\uomega_0$ a scaling limit state, with $\pi_0$ the corresponding scaling limit representation. It holds that $\pi_0(\uafk(\wcal))'' = \afk_0(\wcal)$, since clearly
\begin{equation}
\pi_0(\uafk(\wcal)) = \overline{\bigcup_{\ocal\subset\subset\wcal}\pi_0(\uafk(\ocal))},
\end{equation}
and therefore
\begin{equation}
\pi_0(\uafk(\wcal))' = \bigwedge_{\ocal \subset\subset\wcal} \pi_0(\uafk(\ocal))' = \bigwedge_{\ocal \subset\subset\wcal} \afk_0(\ocal)' = \afk_0(\wcal)'.
\end{equation} 

\begin{proposition}
Assume that $\afk$ satisfies the condition of geometric modular action. Then for each limit state $\uomega_0$, the corresponding limit theory $\afk_0$ also satisfies the condition of geometric modular action.
\end{proposition}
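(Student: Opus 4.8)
The plan is to verify the three relations of Definition~\ref{def:modularGeoAction} for the modular objects $\Delta_0, J_0$ associated with $(\afk_0(\wcal_+),\Omega_0)$. These objects are well-defined, since $\afk_0$ is itself a net in a positive energy representation and the wedge argument adapted from \cite[Lemma~6.1]{BucVer:scaling_algebras} applies to it, making $\Omega_0$ cyclic and separating for $\afk_0(\wcal_+)=\pi_0(\uafk(\wcal_+))''$. The whole proof rests on two geometric facts. First, wedges are dilation invariant, $\mu\wcal_+=\wcal_+$ for $\mu>0$, so for $\uA\in\uafk(\wcal_+)$ every component lies in $\afk(\wcal_+)$, and the boost $\ualpha_{(1,0,\Lambda_t)}$ acts componentwise as $\alpha_{\Lambda_t}$ and preserves $\uafk(\wcal_+)$. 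Second, the reflection $j$ commutes with dilations.

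For the first relation I would argue via the KMS condition. At finite scale, geometric modular action is equivalent, by Tomita--Takesaki theory, to $\omega$ being KMS on $\afk(\wcal_+)$ for the boost group $t\mapsto\alpha_{\Lambda_t}$; hence each state $\omega_\lambda$, $\omega_\lambda(\uA):=\omega(\uA_\lambda)$, restricts to a KMS state on $\uafk(\wcal_+)$ for the single group $\ualpha_{(1,0,\Lambda_t)}$, at the same inverse temperature. Now $\uomega_0=\mean\circ\omega$ lies in the \weakstar closed convex hull of $\{\omega_\lambda:\lambda\in\rbb_+\}$: the mean is a state on $\zfk(\uafk)\cong C_u(\rbb_+)$, hence by Krein--Milman a \weakstar limit of convex combinations of characters, and the characters are approximated by point evaluations since $\rbb_+$ is dense in the spectrum; transposing along $\omega$ gives the assertion. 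Since the KMS states for a fixed group and inverse temperature form a \weakstar closed convex set, $\uomega_0\restrict\uafk(\wcal_+)$ is KMS for $\ualpha_{(1,0,\Lambda_t)}$. This group is implemented by $U_0(\Lambda_t)$ (boosts lie in $\gcal_0$) with $U_0(\Lambda_t)\Omega_0=\Omega_0$ (Poincar\'e invariance), so uniqueness of the modular group yields $\ad\Delta_0^{it}=\ad U_0(\Lambda_{2\pi t})$. Finally $C_t:=\Delta_0^{-it}U_0(\Lambda_{2\pi t})\in\afk_0(\wcal_+)'$ fixes $\Omega_0$, and as $\Omega_0$ is separating for $\afk_0(\wcal_+)'$ this forces $C_t=\idop$, i.e.\ $\Delta_0^{it}=U_0(\Lambda_{2\pi t})$.

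For the remaining two relations I would build a candidate modular conjugation from the finite-scale $J$. Define an antilinear $\ast$-automorphism $\beta_j$ of $\uafk$ by $(\beta_j\uB)_\lambda:=J\uB_\lambda J$; the finite-scale relations $J\afk(\ocal)J=\afk(j\ocal)$ and $JU(x,\Lambda)J=U(jx,j\Lambda j)$ show that $\beta_j$ preserves $\uafk$, maps $\uafk(\ocal)$ onto $\uafk(j\ocal)$, and satisfies $\beta_j\ualpha_g\beta_j=\ualpha_{jgj}$. Since $J\Omega=\Omega$ one has $\omega(J\uB_\lambda J)=\overline{\omega(\uB_\lambda)}$, hence $\uomega_0\circ\beta_j=\overline{\uomega_0}$. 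Thus $\beta_j$ is implemented in the GNS representation by an antiunitary involution $\Theta_0$ with $\Theta_0\Omega_0=\Omega_0$, and by construction $\Theta_0\afk_0(\ocal)\Theta_0=\afk_0(j\ocal)$ and $\Theta_0 U_0(x,\Lambda)\Theta_0=U_0(jx,j\Lambda j)$ --- exactly the second and third conditions of Definition~\ref{def:modularGeoAction}, provided $\Theta_0=J_0$. Using $j\Lambda_{2\pi t}j=\Lambda_{2\pi t}$ and the first relation one checks $\Theta_0\Delta_0^{it}\Theta_0=\Delta_0^{it}$, and by locality $\Theta_0\afk_0(\wcal_+)\Theta_0=\afk_0(\wcal_-)\subseteq\afk_0(\wcal_+)'$.

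The main obstacle is the identification $\Theta_0=J_0$, equivalently $S_0=\Theta_0\Delta_0^{1/2}$ for the Tomita operator $S_0$: the abstract properties above do not yet force it, since they do not give wedge duality $\afk_0(\wcal_+)'=\afk_0(\wcal_-)$. I expect this to be the hard part, because it involves the unbounded $\Delta_0^{1/2}$, i.e.\ the analytic continuation of $t\mapsto U_0(\Lambda_{2\pi t})$ to $t=i/2$, and so requires controlling an analytic continuation through the mean. The plan is to transfer the relevant finite-scale identity at the level of two-point functions: for $\uA,\uB\in\uafk(\wcal_+)$ the functions $t\mapsto\omega(\uA_\lambda\,\alpha_{\Lambda_{2\pi t}}(\uB_\lambda))$ extend analytically on the KMS strip $-1\le\im t\le 0$, and their value at $\im t=-\tfrac12$ equals $\omega_\lambda(\uB\,\beta_j(\uA^\ast))$; this is precisely the encoding of $S=J\Delta^{1/2}$ by the geometric action of $j$. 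Applying the mean and invoking the same \weakstar closedness of the KMS-analytic structure as before yields the corresponding identity for $\uomega_0$, which shows $S_0=\Theta_0\Delta_0^{1/2}$ on the core $\pi_0(\uafk(\wcal_+))\Omega_0$ and therefore, by uniqueness of the polar decomposition, $\Theta_0=J_0$ (wedge duality dropping out as a by-product). With $J_0=\Theta_0$, the relations of the previous paragraph are the required second and third conditions, completing the verification of the condition of geometric modular action for $\afk_0$.
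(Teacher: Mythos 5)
Your proposal is correct, and it reaches the same two milestones as the paper's proof, which simply delegates to Lemma~6.2 and Proposition~6.3 of \cite{BucVer:scaling_algebras}: (i) the KMS property of $\uomega_0$ on $\uafk(\wcal_+)$ for the lifted boosts, whence $\Delta_0^{it}=U_0(\Lambda_{2\pi t})$, and (ii) the lift $\Theta_0$ of $\ad J$ to the limit space and its identification with the modular conjugation $J_0$. The one step the paper spells out is (i), and there your route genuinely differs: the paper writes the mean as a \weakstar limit of convex combinations of \emph{multiplicative} means, so that $\uomega_0$ becomes a \weakstar limit of convex combinations of multiplicative limit states, whose KMS property is quoted from \cite[Lemma~6.2]{BucVer:scaling_algebras}; you instead decompose into \emph{point evaluations}, exhibiting $\uomega_0$ as a \weakstar limit of convex combinations of the finite-scale states $\omega_\lambda(\uA)=\omega(\uA_\lambda)$, whose KMS property follows directly from Tomita--Takesaki theory at finite scale (using $\lambda\wcal_+=\wcal_+$ and the fact that boosts lift to $\uafk$ without rescaling). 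Both arguments then close with convexity and \weakstar closedness of the KMS states at fixed inverse temperature \cite[Thm.~5.3.30]{BraRob:qsm2}. Your version is more self-contained---it in effect inlines the proof of \cite[Lemma~6.2]{BucVer:scaling_algebras} rather than citing it---at the modest cost of the (correct) Krein--Milman/density argument approximating the mean by point evaluations. For (ii), which the paper dismisses as a ``straightforward adaptation'', your construction of $\Theta_0$ and the reduction of everything to $\Theta_0=J_0$ are right, and you correctly locate where the analytic work lies. One caveat: ``invoking the same \weakstar closedness of the KMS-analytic structure'' is not literally the right tool for the mid-strip value at $\im t=-\tfrac12$, since closedness of the KMS set only controls the boundary values of the strip functions. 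What you need is that convergence (in the sense of the mean) on the two boundary lines of a uniformly bounded family of analytic functions propagates to interior points; this can be done via the Poisson representation of the strip together with the equicontinuity in $t$ furnished by norm continuity of $t\mapsto\ualpha_{\Lambda_t}(\uB)$ on $\uafk$, or more simply by working with elements $\uB$ entire analytic for the boost group, for which $\Delta^{1/2}\uB_\lambda\Omega=(\ualpha_{\Lambda_{-\imath\pi}}\uB)_\lambda\Omega$ holds scale by scale and the mean is then applied to a fixed algebra element. This is exactly the content of the adaptation the paper leaves implicit, so your sketch is at the same level of completeness as the published proof; with that detail filled in, the argument is complete.
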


\begin{proof}
It's a straightforward adaptation of the proofs of Lemma~6.2 and
Proposition~6.3 of~\cite{BucVer:scaling_algebras}. The only point which is worth mentioning is the proof that $\uomega_0$ is a KMS state (at inverse temperature $2\pi$)
for the algebra $\uafk(\wcal_+)$ with respect to the one-parameter group of
automorphisms $(\ualpha_{\Lambda_t})_{t\in \rbb}$, which goes as follows. Let
$\mean$ be the mean which induces $\uomega_0$. Then $\mean$ is a
\weakstar limit of convex combinations of multiplicative means, 
and therefore $\uomega_0$ is a \weakstar limit of convex combinations 
of multiplicative limit states. For such states, the arguments
in~\cite[Lemma~6.2]{BucVer:scaling_algebras} show that they are KMS on $\uafk(\wcal_+)$, and therefore, the set of KMS states at a fixed inverse temperature being convex
and \weakstar closed~\cite[Thm.\ 5.3.30]{BraRob:qsm2}, this holds also for
$\uomega_0$.
\end{proof}

\section{Decomposition theory} \label{sec:decompTheory}

Our aim is now to decompose an arbitrary limit state $\uomega_0$ into ``simple''
limit states of the Buchholz-Verch type, and to obtain corresponding
decompositions of the relevant objects in the limit theory. We start by proving an integral
decomposition which is a consequence of standard results.

\begin{proposition} \label{pro:stateDecomp}
Let $\uomega_0$ be a limit state. There exists a compact Hausdorff space
$\zcal$, a regular Borel probability measure $\nu$ on $\zcal$,
and for each $z \in \zcal$ a multiplicative limit state $\uomega_z$, such that
\begin{equation*}
  \uomega_0 (\uA) = \int_\zcal d\nu(z) \,\uomega_z(\uA)
  \quad \text{for all } \uA \in \uafk.
\end{equation*}
Further, the map $\zfk(\uafk) \to \ccal(\zcal)$, $\uC \mapsto (z \mapsto
\uomega_z(\uC))$ is surjective.
\end{proposition}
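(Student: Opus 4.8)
The plan is to reduce everything to the commutative center $\zfk(\uafk)$. Since $\uomega_0 = \mean \circ \omega$ for an asymptotic mean $\mean$, and the conditional expectation $\omega$ maps $\uafk$ into $\zfk(\uafk)$ (acting as the identity there), any representation of the state $\mean$ as an integral over multiplicative states will, upon composition with $\omega$, yield the desired decomposition of $\uomega_0$. Thus the whole statement follows from the standard description of states on a commutative \cistar{} algebra, once the ``asymptotic'' bookkeeping is arranged.

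First I would isolate the asymptotic characters. Let $J_0 \subset \zfk(\uafk)$ be the ideal of functions vanishing on a neighbourhood of $0$. A mean is asymptotic precisely when it annihilates $J_0$, hence (by continuity) its closure; so the asymptotic means are exactly the states of the commutative \cistar{} algebra $\dcal := \zfk(\uafk)/\overline{J_0}$, pulled back along the quotient map $q$. Let $\zcal$ be the character space of $\dcal$, a compact Hausdorff space, and use Gelfand's theorem to identify $\dcal \isom \ccal(\zcal)$; I then regard $q$ as a surjection $\zfk(\uafk) \to \ccal(\zcal)$. Each $z \in \zcal$ gives an asymptotic multiplicative mean $\mean_z$ (the character $z$ composed with $q$), and I define $\uomega_z := \mean_z \circ \omega$, which is by construction a multiplicative limit state; moreover $\mean_z(\uC) = q(\uC)(z)$ for $\uC \in \zfk(\uafk)$.

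Next I would produce the measure. Being asymptotic, $\mean$ factors through a state on $\ccal(\zcal)$, so the Riesz--Markov theorem supplies a unique regular Borel probability measure $\nu$ on $\zcal$ with $\mean(\uC) = \int_\zcal q(\uC)(z)\, d\nu(z)$. Applying this to $\uC = \omega(\uA)$ and using $\uomega_z(\uA) = \mean_z(\omega(\uA)) = q(\omega(\uA))(z)$ gives
\begin{equation*}
  \uomega_0(\uA) = \mean(\omega(\uA)) = \int_\zcal q(\omega(\uA))(z)\, d\nu(z) = \int_\zcal \uomega_z(\uA)\, d\nu(z),
\end{equation*}
with continuous, hence measurable, integrand. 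Finally, since $\uomega_z(\uC) = q(\uC)(z)$ and $\omega$ restricts to the identity on $\zfk(\uafk)$, the map $\uC \mapsto (z \mapsto \uomega_z(\uC))$ is just $q$, which is surjective onto $\ccal(\zcal)$; this is the last assertion.

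The content is genuinely standard Gelfand duality plus Riesz representation; the only point needing care is to ensure that every $\uomega_z$ is a bona fide \emph{limit} state, i.e.\ that the representing measure is carried by the asymptotic part of the spectrum rather than the full character space of $\zfk(\uafk)$. Passing to the quotient by $\overline{J_0}$ builds this in automatically, and I expect it to be the single step worth making explicit; all the remaining manipulations are routine.
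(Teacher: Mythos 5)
Your proof is correct, and it reaches the statement by a genuinely different route than the paper. You stay entirely inside the commutative algebra $\zfk(\uafk)$: the observation that a mean is asymptotic exactly when it annihilates the closed ideal $\overline{J_0}$ of functions vanishing near $\lambda=0$ lets you realize all asymptotic means as states of the unital quotient $\zfk(\uafk)/\overline{J_0} \isom \ccal(\zcal)$, after which Gelfand duality and Riesz--Markov do the rest; the conditional expectation $\omega$ then transports everything to $\uafk$. The paper proceeds differently: it first forms the GNS representation $\pi_0$ of the \emph{full} algebra $\uafk$ with respect to $\uomega_0$, takes $\zcal$ to be the Gelfand spectrum of the image $\pi_0(\zfk(\uafk))$, and applies the Riesz theorem to the vector state $\mtxe{\Omega_0}{\cdotarg}{\Omega_0}$; asymptoticity of the resulting means $\rho_z\circ\pi_0$ is then verified from the fact that $\pi_0$ annihilates every $\uA$ whose components vanish at small $\lambda$ --- the same fact that your ideal $J_0$ encodes structurally. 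The practical difference is this: your $\zcal$ and your family $(\uomega_z)_z$ are universal, independent of the particular limit state, with only the measure $\nu$ (possibly supported on a proper closed subset) carrying the information about $\uomega_0$; the paper's $\zcal$ is tailored to $\uomega_0$, which makes $\nu$ automatically of full support and, more importantly, ties $\zcal$ directly to the GNS objects on which the rest of Sec.~\ref{sec:decompTheory} and Sec.~\ref{sec:dilations} are built --- for instance, Proposition~\ref{prop:dilationDecomp} defines the dilation action $z\mapsto\mu.z$ by recalling precisely that $\zcal$ is the spectrum of $\pi_0(\zfk(\uafk))$. So your argument establishes the proposition exactly as stated, but a reader substituting it for the paper's proof would need minor adjustments in those later arguments, which use the specific construction of $\zcal$ rather than only the statement of the proposition.
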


\begin{proof}
Let $\pi_0$ be the GNS representation of $\uafk$ for $\uomega_0$. Consider the
\cistar{} algebra $\pi_0(\zfk(\uafk))$. It is well known that this commutative
algebra is isomorphic to $\ccal(\zcal)$ for a compact Hausdorff space $\zcal$, with the
isomorphism being given by $\pi_0(\uC) \mapsto (z \mapsto \rho_z(\pi_0(\uC)))$,
where the $\rho_z$ are multiplicative functionals. Now by the Riesz
representation theorem, the GNS state $\mtxe{\Omega_0}{\cdotarg}{\Omega_0}$
on $\pi_0(\zfk(\uafk)) \isom \ccal(\zcal)$ is given by a regular Borel measure
$\nu$ on $\zcal$. Explicitly, one has for all $\uC \in \zfk(\uafk)$, 
\begin{equation} \label{eq:centerDecomp}
 \uomega_0(\uC) = \mtxe{\Omega_0}{\pi_0(\uC)}{\Omega_0}
 = \int_\zcal d\nu(z) \, \rho_z \circ \pi_0(\uC).
\end{equation}
It is clear that $\nu(\zcal)=1$. In the above expression, $\mean_z := \rho_z
\circ \pi_0$ are multiplicative means; they are asymptotic, since $\pi_0(\uA)=0$ whenever $\uA_\lambda$
vanishes for small $\lambda$. Thus, setting $\uomega_z=\rho_z \circ \pi_0 \circ
\omega$ as usual, we obtain multiplicative limit states $\uomega_z$ on $\uafk$
such that
\begin{equation}
 \uomega_0(\uA) = \int_\zcal d\nu(z) \, \uomega_z(\uA)
 \quad \text{for all }\uA \in \uafk.
\end{equation}
As a last point, the map $\zfk(\uafk) \to \ccal(\zcal)$, $\uC \mapsto
(z \mapsto \uomega_z(\uC)) = (z \mapsto \rho_z(\pi_0(\uC)))$ is
surjective by construction.
\end{proof}

We have thus decomposed a general limit state $\uomega_0$ into
multiplicative limit states $\uomega_z$. In the case $s \geq 2$, this will also
be a decomposition into pure states; but the above result does not depend on that.
Also, we emphasize that our aim is not a decomposition of the von Neumann
algebra $\pi_0(\uafk)''$ along its center; rather we work on the \cistar{} algebraic side
only.

We would now like to interpret the above decomposition in the sense of
decomposing the limit Hilbert space $\hcal_0$ as a direct integral. This is
complicated by the fact that our measure spaces $(\zcal,\nu)$ can be of a very
general nature, making the limit Hilbert space nonseparable. In fact, if
$\uomega_0$ is an \emph{invariant} limit state, 
one finds that all vectors of the form $\pi_0(\uC) \Omega_0$ are mutually
orthogonal if $\uC_\lambda=\chi(\lambda)\idop$, where $\chi$ is a
character on $\rbb_+$. Since there are clearly uncountably many
characters -- just take $\chi(\lambda) = \lambda^{\imath k}$ with $k
\in \rbb$ -- the limit Hilbert space $\hcal_0$ cannot be separable in this case.

The theory of direct integrals of Hilbert spaces in the absence of separability
assumptions is nonstandard and only partially complete; we give a brief review
in Appendix~\ref{sec:diApp}. Here we note that the notion of a direct
integral over $\zcal$, with fiber spaces $\hcal_z$, crucially depends on the
specification of a \emph{fundamental family} $\Gamma \subset
\prod_{z\in\zcal} \hcal_z$. This $\Gamma$ is a vector space with certain extra
conditions (see Def.~\ref{def:fundamentalFamily}) that serves to define which Hilbert space
valued functions are considered measurable. Indeed, using
the exact notions, we prove:

\begin{theorem} \label{thm:hilbertDecomp}
 Let $\uomega_0$ be a limit state, and $\zcal$, $\nu$, $\uomega_z$ as in
 Proposition~\ref{pro:stateDecomp}. Let $\pi_z$, $\hcal_z$, $\Omega_z$ be
 the GNS representation objects corresponding to $\uomega_z$. Then,
 \begin{equation*}
    \Gamma := \{ z \mapsto \pi_z(\uA) \Omega_z \,|\, \uA \in \uafk\} \subset
    \prod_{z\in\zcal}   \hcal_z
 \end{equation*}
is a fundamental family. With respect to this family, it holds that
\begin{equation*}
  \hcal_0 \isom \int_\zcal^\Gamma d\nu(z) \,\hcal_z,
\end{equation*}
where the isomorphism is given by
\begin{equation*}
  \pi_0(\uA) \Omega_0 \mapsto \int_\zcal^\Gamma d\nu(z)\, \pi_z(\uA)
  \Omega_z, \quad
  \uA \in \uafk.
 \end{equation*}
\end{theorem}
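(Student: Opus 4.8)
The plan is to treat the two assertions separately---that $\Gamma$ is a fundamental family, and that it realizes $\hcal_0$ as the stated direct integral---both resting on the single identity
\[
  \hrskp{\pi_z(\uA)\Omega_z}{\pi_z(\uB)\Omega_z} = \uomega_z(\uA\st \uB),
  \qquad \uA,\uB \in \uafk,
\]
together with the defining relation $\uomega_0(\,\cdot\,) = \int_\zcal d\nu(z)\,\uomega_z(\,\cdot\,)$ of Proposition~\ref{pro:stateDecomp}.

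First I would verify that $\Gamma$ meets the conditions of Definition~\ref{def:fundamentalFamily}. Linearity is immediate, since $\uA \mapsto (z\mapsto \pi_z(\uA)\Omega_z)$ is linear. For the measurability of inner products, the displayed identity reduces $z \mapsto \hrskp{\pi_z(\uA)\Omega_z}{\pi_z(\uB)\Omega_z}$ to $z \mapsto \uomega_z(\uA\st\uB) = \rho_z(\pi_0(\omega(\uA\st\uB)))$; since $\omega(\uA\st\uB) \in \zfk(\uafk)$ and the $\rho_z$ are the Gelfand characters of $\pi_0(\zfk(\uafk)) \isom \ccal(\zcal)$, this is the value at $z$ of a continuous function on $\zcal$, hence Borel measurable (indeed continuous). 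Totality of $\Gamma$ in each fibre is nothing but the cyclicity of the GNS vector $\Omega_z$ for $\pi_z$, which holds for every $z$ by construction. Thus $\Gamma$ is a fundamental family.

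Next I would define the candidate isomorphism $V$ on the dense domain $\pi_0(\uafk)\Omega_0 \subset \hcal_0$ by $V\colon \pi_0(\uA)\Omega_0 \mapsto \int_\zcal^\Gamma d\nu(z)\, \pi_z(\uA)\Omega_z$, noting first that each vector field $z \mapsto \pi_z(\uA)\Omega_z$ lies in $\Gamma$ and is square-integrable, because $z \mapsto \|\pi_z(\uA)\Omega_z\|^2 = \uomega_z(\uA\st\uA)$ is continuous and bounded by $\|\uA\|^2$ on the compact space $\zcal$. The central computation is then
\[
  \Bighrskp{V\pi_0(\uA)\Omega_0}{V\pi_0(\uB)\Omega_0}
  = \int_\zcal d\nu(z)\, \uomega_z(\uA\st\uB)
  = \uomega_0(\uA\st\uB)
  = \hrskp{\pi_0(\uA)\Omega_0}{\pi_0(\uB)\Omega_0},
\]
using Proposition~\ref{pro:stateDecomp}. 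In particular $V$ is isometric, hence well-defined (if $\pi_0(\uA)\Omega_0 = 0$ then the image field has vanishing integrated norm, so is a null field), and it extends uniquely to an isometry of $\hcal_0$ into the direct integral.

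Finally, surjectivity follows because the range of $V$ contains the image of all of $\Gamma$, and $\Gamma$ is, by construction of the direct integral, total in $\int_\zcal^\Gamma d\nu(z)\,\hcal_z$ (Appendix~\ref{sec:diApp}); an isometry with dense range is unitary. The step requiring the most care is the interface with the general, nonseparable direct integral formalism of the appendix: one must check that the fields built from $\Gamma$ are genuinely $\Gamma$-measurable and square-integrable in that framework, and that $\Gamma$ is dense, so that the isometry is onto. These are exactly the places where continuity of $z \mapsto \uomega_z(\uA\st\uB)$ and cyclicity of the $\Omega_z$ do the work; I expect no deep analytic difficulty beyond matching the definitions precisely.
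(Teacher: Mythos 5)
Your verification that $\Gamma$ is a fundamental family and your isometry computation are both correct and essentially identical to the paper's. The genuine gap is in the surjectivity step. You assert that ``$\Gamma$ is, by construction of the direct integral, total in $\int_\zcal^\Gamma d\nu(z)\,\hcal_z$''. This is false as a general statement about the construction: per Appendix~\ref{sec:diApp}, the integrable family $\bar\Gamma$ is the closure of the \emph{span of} $L^\infty(\zcal,\nu)\cdot\Gamma$, not of $\Gamma$ itself, so only $L^\infty(\zcal,\nu)\cdot\Gamma$ is guaranteed to be total. Nor does fibrewise totality (cyclicity of the $\Omega_z$), which you invoke at the end, rescue the claim: take all $\hcal_z = \cbb$ and let $\Gamma$ be the constant fields; this is a fundamental family whose elements span each fibre, yet its $L^2$-closure consists only of the constants, a proper subspace of $L^2(\zcal,\nu)$ whenever $\nu$ is not concentrated on a point. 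Density of $\Gamma$ in the direct integral is precisely the nontrivial content of the theorem, so it cannot be cited ``by construction''.

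What closes the gap --- and what the paper actually does --- is to show that vectors $\int_\zcal^\Gamma d\nu(z)\, f(z)\,\pi_z(\uA)\Omega_z$ with $f \in L^\infty(\zcal)$ lie in the closure of the range of the isometry. Here the last assertion of Proposition~\ref{pro:stateDecomp} (surjectivity of $\zfk(\uafk) \to \ccal(\zcal)$), which you used only for measurability, becomes essential: by Lusin's theorem choose $f_n \in \ccal(\zcal)$ with $\|f_n\|_\infty \leq \|f\|_\infty$ and $f_n \to f$ a.e., then pick $\uC_n \in \zfk(\uafk)$ with $\uomega_z(\uC_n) = f_n(z)$, so that $\pi_z(\uC_n) = f_n(z)\idop$ and hence $z \mapsto f_n(z)\pi_z(\uA)\Omega_z$ equals $z \mapsto \pi_z(\uC_n\uA)\Omega_z$, an element of $\Gamma$ in the range of the map; dominated convergence then gives
\begin{equation*}
  \lim_{n\to\infty} \int_\zcal \| (f(z)\pi_z(\uA) - \pi_z(\uC_n\uA))\Omega_z \|^2\, d\nu(z) = 0 .
\end{equation*}
In other words, the richness of the \emph{center} of $\uafk$, not the cyclicity of the fibre vectors, is what makes $\Gamma$ dense; without this argument your proof does not establish surjectivity.
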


\begin{proof}
It is clear that $\Gamma$ is a linear space; and per
Prop.~\ref{pro:stateDecomp}, the function $z \mapsto
\|\pi_z(\uA)\Omega_z\|^2= \uomega_z(\uA\st\uA)$ is integrable for any $\uA \in
\uafk$. Thus $\Gamma$ is a fundamental family per Definition~\ref{def:fundamentalFamily}.
The map
\begin{equation} \label{eq:DecompIso}
   W: \hcal_0 \to \prod_{z \in \zcal} \hcal_z, \quad \pi_0(\uA) \Omega_0 \mapsto
   (z \mapsto \pi_z(\uA) \Omega_z)
\end{equation} 
is clearly linear and isometric when $\uA$ ranges through $\uafk$; thus $W$ can
in fact be extended to a well-defined isometric map from $\hcal_0$ into
$\bar \Gamma$. It remains to show that $W$ is surjective. In
fact, since $L^\infty(\zcal) \cdot \Gamma$ is total in the direct integral space, it
suffices to show that all vectors of the form
\begin{equation}
   \int_\zcal^\Gamma d\nu(z) f(z)\, \pi_z(\uA) \Omega_z, \quad
   f \in L^\infty(\zcal), \; \uA \in \uafk,
\end{equation}
can be approximated in norm with vectors of the form $W \pi_0(\uB) \Omega_0$,
$\uB \in \uafk$.

To that end, let $f \in L^\infty(\zcal)$ and $\uA \in \uafk$ be fixed. We first
note that, as a simple consequence of Lusin's theorem, there exist functions $f_n \in \ccal(\zcal)$ 
such that $\|f_n\|_\infty \leq \|f\|_\infty$ and $\lim_{n \to \infty} f_n(z) = f(z)$ 
for almost every $z \in \zcal$. On the other hand, per
Proposition~\ref{pro:stateDecomp} there exist $\uC_n \in \zfk(\uafk)$ such that
$\uomega_z(\uC_n) = f_n(z)$ for all $z \in \zcal$, which implies $\pi_z(\uC_n)
= f_n(z) \idop$. Therefore we have
\begin{equation}
\lim_{n \to \infty} \int_\zcal \| 
(f(z)\pi_z(\uA) - \pi_z (\uC_n \uA))\Omega_z \|^2 d\nu(z) = 0
\end{equation}
by an application of the dominated convergence theorem.
\end{proof}

In the following, we will usually not denote the above isomorphism explicitly,
but rather identify $\hcal_0$ with its direct integral representation.
In this way, the subspace $\hcal_\zfk \subset \hcal_0$ is isomorphic to
the function space $L^2(\zcal,\nu)$, where $f \in L^2(\zcal,\nu)$ 
is identified with $\int^\Gamma_\zcal d\nu(z) f(z) \Omega_z \in \hcal_0$. 
The next corollary follows directly from the proof above, since a decomposition
of operators needs to be checked on the fundamental family only (Lemma~\ref{lem:measOnGamma}).

\begin{corollary} \label{cor:reprDecomp}
  With respect to the direct integral decomposition in
  Theorem~\ref{thm:hilbertDecomp}, all operators $\pi_0(\uA)$, $\uA \in \uafk$
  are decomposable, and one has $\pi_0 = \int_\zcal^\Gamma d\nu(z) \pi_z$.
  If $\uA \in \zfk(\uafk)$, then $\pi_0(\uA)$ is diagonal, with $\pi_0(\uA) =
  \int_\zcal^\Gamma d\nu(z)\, \uomega_z(\uA) \idop$.
\end{corollary}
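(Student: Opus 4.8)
The plan is to transport the statement through the isometric isomorphism $W$ constructed in the proof of Theorem~\ref{thm:hilbertDecomp}, and then verify the claimed operator decompositions directly on the fundamental family $\Gamma$, invoking Lemma~\ref{lem:measOnGamma} so that agreement on $\Gamma$ already forces agreement as decomposable operators. Concretely, for fixed $\uA \in \uafk$ I would propose the operator field $z \mapsto \pi_z(\uA)$ as the candidate decomposition of $\pi_0(\uA)$. Since each $\pi_z$ is a $\ast$-representation, hence contractive, one has the uniform bound $\|\pi_z(\uA)\| \le \|\uA\|$, so the field is uniformly bounded and defines a genuine bounded decomposable operator $\int_\zcal^\Gamma d\nu(z)\,\pi_z(\uA)$.

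The heart of the verification is that $\Gamma$ is invariant under the pointwise action of this field. Indeed, a general element of $\Gamma$ has the form $\gamma = (z \mapsto \pi_z(\uB)\Omega_z)$ for some $\uB \in \uafk$, and
\begin{equation*}
  \pi_z(\uA)\,\gamma(z) = \pi_z(\uA)\pi_z(\uB)\Omega_z = \pi_z(\uA\uB)\Omega_z,
\end{equation*}
which is again an element of $\Gamma$, namely the one associated with $\uA\uB \in \uafk$; in particular the field $z \mapsto \pi_z(\uA)\gamma(z)$ lies in $\bar\Gamma$ for every $\gamma \in \Gamma$, which is exactly the measurability hypothesis of Lemma~\ref{lem:measOnGamma}. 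On the other hand, under $W$ one has $\pi_0(\uA)\pi_0(\uB)\Omega_0 = \pi_0(\uA\uB)\Omega_0$, whose image is precisely $(z \mapsto \pi_z(\uA\uB)\Omega_z)$. Thus $W\pi_0(\uA)W^{-1}$ and $\int_\zcal^\Gamma d\nu(z)\,\pi_z(\uA)$ agree on all of $\Gamma$, and Lemma~\ref{lem:measOnGamma} upgrades this to equality as decomposable operators, establishing $\pi_0 = \int_\zcal^\Gamma d\nu(z)\,\pi_z$.

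For the second assertion I would first show that, for $\uC \in \zfk(\uafk)$, the fibre operator collapses to a scalar: $\pi_z(\uC) = \uomega_z(\uC)\idop$. Since $\uomega_z$ is multiplicative on the centre, $\uomega_z(\uC\st\uC) = |\uomega_z(\uC)|^2$, so the elementary GNS computation
\begin{equation*}
  \|\pi_z(\uC)\Omega_z - \uomega_z(\uC)\Omega_z\|^2 = \uomega_z(\uC\st\uC) - |\uomega_z(\uC)|^2 = 0
\end{equation*}
gives $\pi_z(\uC)\Omega_z = \uomega_z(\uC)\Omega_z$. Because $\uC$ is central in $\uafk$, $\pi_z(\uC)$ commutes with all $\pi_z(\uB)$, and cyclicity of $\Omega_z$ then propagates the scalar action from $\Omega_z$ to the whole of $\hcal_z$, yielding $\pi_z(\uC) = \uomega_z(\uC)\idop$. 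Feeding this into the decomposition already obtained turns $\pi_0(\uC)$ into $\int_\zcal^\Gamma d\nu(z)\,\uomega_z(\uC)\idop$, i.e.\ the diagonal operator given by multiplication with the function $z \mapsto \uomega_z(\uC)$, which is continuous by Proposition~\ref{pro:stateDecomp}.

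The one point that genuinely needs care — and where I expect the main obstacle to lie — is the nonseparability of $\hcal_0$, which blocks any appeal to the classical direct integral machinery and forces every measurability statement to be phrased relative to $\Gamma$. This is precisely why Lemma~\ref{lem:measOnGamma} is indispensable: it is the $\Gamma$-invariance of the field $z \mapsto \pi_z(\uA)$, and nothing more, that certifies measurability in the general framework of Appendix~\ref{sec:diApp}. Everything else reduces to routine GNS bookkeeping.
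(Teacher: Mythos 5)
Your proposal is correct and follows essentially the same route as the paper, which proves the corollary in one line by noting that decomposability need only be checked on the fundamental family $\Gamma$ (Lemma~\ref{lem:measOnGamma}); your verification that $z \mapsto \pi_z(\uA)$ maps $\Gamma$ into itself, agrees with $W\pi_0(\uA)W^{-1}$ there, and collapses to the scalar $\uomega_z(\uC)\idop$ on central elements is exactly the bookkeeping the paper leaves implicit. The only imprecision is attributing the final equality of operators to Lemma~\ref{lem:measOnGamma} itself; that lemma supplies measurability of the field, while the equality follows from the density of $\Gamma$ in $\hcal_0$ (cyclicity of $\Omega_0$ plus unitarity of $W$) together with boundedness of both operators.
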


Finally, we remark that Lorentz symmetries $U_0(x,\Lambda)$ in the limit theory
are decomposable.

\begin{proposition} \label{pro:poincareDecomp}
  Let $g \mapsto U_z(g)$ be the implementation of $\poincare$ on the limit
  Hilbert space $\hcal_z$ corresponding to $\uomega_z$. Then, one has
  \begin{equation*}
    U_0(g) = \int_\zcal^\Gamma d\nu(z) U_z(g) \quad
    \text{for all } g \in \poincare.
  \end{equation*}
\end{proposition}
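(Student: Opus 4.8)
The plan is to exhibit the candidate decomposable operator $V(g) := \int_\zcal^\Gamma d\nu(z)\, U_z(g)$, show it is well defined, and then identify it with $U_0(g)$ under the isomorphism $W$ of Theorem~\ref{thm:hilbertDecomp}. First I would record that each multiplicative limit state $\uomega_z$ is invariant under the full Poincar\'e group $\poincare$, so that the unitaries $U_z(g)$ exist. This follows because the vacuum state $\omega$ of $\afk$ is Poincar\'e invariant: for $g=(1,x,\Lambda)\in\poincare$ one has $\omega((\ualpha_g\uA)_\lambda)=\omega(\alpha_{\lambda x,\Lambda}(\uA_\lambda))=\omega(\uA_\lambda)$, hence $\omega\circ\ualpha_g=\omega$ as maps $\uafk\to\zfk(\uafk)$, and therefore $\uomega_z=\rho_z\circ\pi_0\circ\omega$ satisfies $\uomega_z\circ\ualpha_g=\uomega_z$. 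Thus $U_z(g)$ is the well-defined unitary on $\hcal_z$ determined by $U_z(g)\pi_z(\uA)\Omega_z=\pi_z(\ualpha_g(\uA))\Omega_z$; likewise $U_0(g)$ is defined by $U_0(g)\pi_0(\uA)\Omega_0=\pi_0(\ualpha_g(\uA))\Omega_0$, using that $\poincare\subset\gcal_0$.

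The heart of the argument, and the step I expect to be the main obstacle, is to verify that the field $z\mapsto U_z(g)$ is decomposable in the nonseparable direct-integral sense of the appendix. Here I would invoke the criterion of Lemma~\ref{lem:measOnGamma}, by which decomposability of a uniformly bounded operator field may be checked on the fundamental family $\Gamma$ alone. For a section $\gamma\in\Gamma$ of the form $\gamma(z)=\pi_z(\uA)\Omega_z$ one computes $U_z(g)\gamma(z)=\pi_z(\ualpha_g(\uA))\Omega_z$, which is again a section in $\Gamma$, namely the one associated with $\ualpha_g(\uA)\in\uafk$ (using that $\uafk$ is invariant under $\ualpha$). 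Hence $z\mapsto U_z(g)\gamma(z)$ is measurable for every $\gamma\in\Gamma$, and together with the uniform bound $\esup_z\|U_z(g)\|=1$ from unitarity, this establishes that $V(g)=\int_\zcal^\Gamma d\nu(z)\,U_z(g)$ is a well-defined bounded decomposable operator.

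Finally I would compare $V(g)$ with $U_0(g)$ on the dense set $\{\pi_0(\uA)\Omega_0 : \uA\in\uafk\}$. By the same computation, $V(g)$ sends the section $z\mapsto\pi_z(\uA)\Omega_z$ to $z\mapsto\pi_z(\ualpha_g(\uA))\Omega_z$, which under $W$ is exactly the image of $\pi_0(\ualpha_g(\uA))\Omega_0=U_0(g)\pi_0(\uA)\Omega_0$. Since $V(g)$ and $U_0(g)$ are both bounded and agree on this dense set, they coincide, giving the claimed identity $U_0(g)=\int_\zcal^\Gamma d\nu(z)\,U_z(g)$ for every $g\in\poincare$. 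The argument is purely algebraic for each fixed $g$, so no continuity considerations enter; the only genuinely technical point is the decomposability in the second paragraph, which the invariance of $\Gamma$ under the fiberwise action renders essentially automatic.
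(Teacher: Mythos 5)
Your proof is correct and takes essentially the same route as the paper's: the identity is verified on the fundamental family $\Gamma$ via the covariance relation $U_z(g)\pi_z(\uA)\Omega_z = \pi_z(\ualpha_g\uA)\Omega_z$, with decomposability reduced to Lemma~\ref{lem:measOnGamma} and the isomorphism $W$ of Theorem~\ref{thm:hilbertDecomp} transporting the computation back to $U_0(g)$. The paper's version is merely terser, taking for granted the two points you spell out explicitly (the Poincar\'e invariance of each $\uomega_z$, hence existence of $U_z(g)$, and the measurability check on $\Gamma$).
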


\begin{proof}
Again, it suffices to verify this on vectors from $\Gamma$. With $W$ being the
isomorphism introduced in the proof of Theorem~\ref{thm:hilbertDecomp}, one
obtains for all $g \in \poincare$ and $\uA\in\uafk$,
\begin{equation}
  W U_0(g) \pi_0(\uA) \Omega_0 = W \pi_0(\ualpha_g \uA) \Omega_0
  = \int_\zcal^\Gamma d\nu(z) \pi_z(\ualpha_g \uA) \Omega_z
  = \int_\zcal^\Gamma d\nu(z) U_z(g) \pi_z(\uA) \Omega_z.
\end{equation}
This proves the proposition.
\end{proof}

It should be remarked that the same simple structure cannot be expected for
\emph{dilations}, if they exist as a symmetry of the limit. For even if
$\uomega_0 \circ \ualpha_\mu = \uomega_0$, the multiplicative limit states
$\uomega_z$ cannot be invariant under $\ualpha_\mu$, not even when
restricted to $\zfk(\uafk)$. Thus, the unitaries $U_0(\mu)$ will not commute
with $\pi_0(\zfk(\uafk))$, and can therefore not be decomposable. In special
situations, there may be a generalized sense in which the dilation unitaries
can be decomposed; we will investigate this in more detail in
Sec.~\ref{sec:dilations}.

\section{Unique and factorizing scaling limits} \label{sec:uniqueFactorLim}

The limit theory on the Hilbert space $\hcal_0$ is composed, as
discussed in the previous section, of simpler components that live on the
``fibre'' Hilbert spaces $\hcal_z$ of the direct integral. It is natural to ask
whether the theories on these spaces $\hcal_z$, or more precisely, the nets of algebras $\afk_z(\ocal) =
\pi_z(\uafk(\ocal))''$, are similar or identical in a certain sense. While no
models have been explicitly constructed for which the limit theories
substantially depend on the choice of a (multiplicative) limit
state,\footnote{See however
\cite[Sec.~5]{Buc:phase_space_scaling} for some ideas to that end.}
it does not seem to be excluded that measurable properties, such as the
mass spectrum or charge structure of $\afk_z$, can depend on $z$. 

For most
applications in physics, however, one expects that the situation is simpler,
and that the limit theory does not depend substantially on the choice of
$\uomega_z$. Here it would be much too strict to require that the
representations $\pi_z$ are unitarily equivalent. [In fact, for
$s \geq 2$, the $\pi_z$ are irreducible per Thm.~\ref{thm:pi0CommThm}, and since
they do not agree on $\zfk(\uafk)$, they are even pairwise disjoint.]
Rather one can expect that their images, the algebras $\afk_z(\ocal)$, are unique \emph{as sets,} 
up to unitaries that identify the different Hilbert spaces $\hcal_z$; 
see Def.~\ref{def:uniqueLimit} below. This is the situation of a \emph{unique
scaling limit} in the sense of Buchholz and Verch.
 
In the present section, we want to elaborate how the situation of unique
scaling limits, originally formulated for multiplicative limit states, fits
into our generalized context. To that end, we will formulate several
conditions on the limit theory that roughly correspond to unique limits, and
discuss their mutual dependencies. 
%See Fig.~\ref{fig:conditions} for an overview of the relations
%between the different conditions.

\subsection{Definitions}

We shall first motivate and define the conditions to be considered; the proofs
of their interrelations are deferred to sections further below. We start by
recalling the condition of a unique scaling limit in the sense of
\cite{BucVer:scaling_algebras}, with some slight modifications.

\begin{definition} \label{def:uniqueLimit}
The theory $\afk$ is said to have a \emph{unique scaling limit} if there exists
a local Poincaré covariant net $(\qafk, \qhcal, \qOmega, \qU)$ in
the vacuum sector such that the following holds. For every \emph{multiplicative}
limit state $\uomega_0$, there exists a unitary $V: \hcal_0 \to \qhcal$ such that
$V \Omega_0 = \qOmega$, $V U_0(g) V\st = \qU(g)$ for all $g \in
\poincare$, and $V \afk_0(\ocal) V\st \subset \qafk(\ocal)$ for all open
bounded regions $\ocal$. 
\end{definition}

This includes the
aspect of a ``unique vacuum structure''. Compared with
\cite{BucVer:scaling_algebras}, we have somewhat weakened the condition,
since we require only inclusion of $V \afk_0(\ocal) V\st$ in $\qafk(\ocal)$, not equality. This is for the following reason.
Supposing that both $\afk$ and $\qafk$ fulfill the condition of geometric
modular action (Definition~\ref{def:modularGeoAction}), such that the Haag-dualized nets of
$\afk_0$ and $\qafk$ are well-defined, our condition precisely implies that
these dualized nets agree for all multiplicative limit states. Since for many
applications, particularly charge analysis \cite{DMV:scaling_charges}, the dualized limit nets are seen as the fundamental
objects, we think that this is a reasonable generalization of the condition.

For a general, not necessarily multiplicative limit state $\uomega_0$, we
obtain a decomposition $\uomega_0 = \int_\zcal d\nu(z) \uomega_z$ into
multiplicative states, as discussed in Sec.~\ref{sec:decompTheory}, and thus obtain from
Def.~\ref{def:uniqueLimit} corresponding unitaries $V_z$ for every $z$. 
Due to the very general nature of the measure
space $\zcal$, and due to a possible arbitrariness in the choice of $V_z$,
particularly if $\qafk$ possesses inner symmetries, an analysis of
$\uomega_0$ seems impossible in this generality. Rather we will often make use
of a regularity condition, which is formulated as follows.

\begin{definition} \label{def:stateRegular}
Suppose that the theory $\afk$ has a unique scaling limit. We say that a limit
state $\uomega_0$ is \emph{regular} if there is a choice of the unitaries
$V_z$ such that for any $\uA \in \uafk$, the function
$\varphi_{\uA}: \zcal \to \qhcal$, $z \mapsto V_z \pi_z(\uA) \Omega_z$
is Lusin measurable [i.e., is contained in $L^2(\zcal,\nu,\qhcal)$].
\end{definition}

We shall later give a sufficient condition for the above regularity, 
which actually implies that the functions $\varphi_{\uA}$ in fact
be chosen \emph{constant} in generic cases.

Our concepts so far refer to multiplicative limit states mostly. We will
now give a generalization of Def.~\ref{def:uniqueLimit} that involves
generalized limit states directly, and that seems natural in our context. It is
based on the picture that the limit Hilbert space should have a tensor product
structure, $\hcal_0 \isom \hcal_\zfk \otimes \qhcal$, where $\qhcal$ is the
unique representation space associated with multiplicative limit states, and
$\hcal_\zfk$ is the representation space of $\zfk(\uafk)$ under $\pi_0$. All
objects of the theory -- local algebras, Poincaré symmetries, and the vacuum
vector -- should factorize along this tensor product. We now formulate this in
detail.

\begin{definition} \label{def:factorLimit}
The theory $\afk$ is said to have a \emph{factorizing scaling limit} if
there exists a local Poincaré covariant net $(\qafk, \qhcal, \qOmega, \qU)$ in
the vacuum sector such that the following holds. For every limit
state $\uomega_0$, there exists a decomposable unitary $V: \hcal_0 \to
L^2(\zcal,\nu,\qhcal)$, $V = \int_\zcal^{\Gamma,\oplus} d\nu(z) V_z$ with
unitaries $V_z: \hcal_z \to \qhcal$, such that $V \Omega_0 = \Omega_\zfk \otimes \qOmega$, $V U_0(g)
V\st = \idop \otimes \qU(g)$ for all $g \in \poincare$, and
$V_z \afk_z(\ocal) V_z\st \subset \qafk(\ocal)$ for all open bounded regions $\ocal$ and all $z \in \zcal$.
\end{definition}

Here $\Omega_\zfk \in \hcal_\zfk$ denotes the GNS vector of the commutative
algebra. The conditions on local algebras are deliberately chosen quite strict.
We require $V_z \afk_z(\ocal) V_z\st \subset \qafk(\ocal)$ for every $z$, rather than the weaker condition $V \afk_0(\ocal) V\st \subset \zfk_0 \bar\otimes \qafk(\ocal)$. This serves to avoid countability
problems; see Sec.~\ref{sec:factorToUnique} for further discussion.

In subsequent sections, we will show that the notion of a unique scaling limit
and a factorizing scaling limit are \emph{cum grano salis} identical, up to the extra regularity condition in Definition~\ref{def:stateRegular} that we
have to assume.

We also consider a stronger condition, which is easier to check in models.
Our ansatz is to require a sufficiently large subset
$\uafk_\mathrm{conv} \subset\uafk$ such that for each $\uA \in
\uafk_\mathrm{conv}$, the function $\lambda \mapsto
\omega(\uA_\lambda)$ is convergent as $\lambda \to 0$. Consider the
following definition:

\begin{definition} \label{def:convLimit}
The theory $\afk$ is said to have a \emph{convergent scaling limit} if
there exists an $\ualpha$-invariant \cistar{} subalgebra $\uafk_\mathrm{conv}
\subset \uafk$ with the following properties:
\begin{enumerate}
\localitemlabels
\item 
For each $\uA \in \uafk_\mathrm{conv}$, the function $\lambda \mapsto
\omega(\uA_\lambda)$ converges as $\lambda \to 0$.
\item\label{it:convWeaklyDense} 
If $\uomega_0$ is a \emph{multiplicative} limit state, then
$\pi_0(\uafk(\ocal)\cap\uafk_\mathrm{conv})$ is weakly dense in
$\afk_0(\ocal)$ for every open bounded region $\ocal$.
\end{enumerate}
\end{definition}

It follows directly from \ref{it:convWeaklyDense} that also
$\pi_0(\uafk_\mathrm{conv})\Omega_0$ is dense in $\hcal_0$.
The condition roughly says that ``convergent scaling functions'' are
sufficient for describing the limit theory -- considering nonconvergent
sequences is only required for technical consistency of our formalism, for
describing the image of $\zfk(\uafk)$, which does not directly relate to 
quantum theory. This is heuristically expected in many physical models: In
usual renormalization approaches in formal perturbation theory, the selection of
subsequences or filters to enforce convergence seems not to be
widespread, and sequences of pointlike fields can be chosen to converge in
matrix elements.

We will show that the above condition is sufficient for the scaling limit to be
unique, and all limit states to be regular. In fact, we shall see later
that also the structure of dilations simplifies.

\begin{figure} 
\caption{Implications between the conditions on the limit theory.
Arrows marked with $\ast$ are only proven under additional separability assumptions.}
\label{fig:conditions}
\begin{center}
\mbox{
\xymatrix{ %@=1.5cm
&
  *+[F]\txt{Unique limit}
%  \save
%    [].[d]*++[F--]\frm{}!C="gr"
%  \restore
\\
  *+[F]\txt{Convergent limit}
  \ar@{=>}@(u,r)[ur]
  \ar@{=>}@(d,r)[dr]^(.3){\ast}
  &
    *+[o][F-]{+}
  \ar@{--}[d]
  \ar@{--}[u]
&
  *+[F]\txt{Factorizing limit}
  \ar@{<=}[l]_(.7)\ast
  \ar@{=>}@(d,r)[dl]_(.3){\ast}
  \ar@{=>}@(u,r)[ul]
\\
 &
    *+[F]\txt{Regularity condition}
}}
\end{center}
\end{figure}
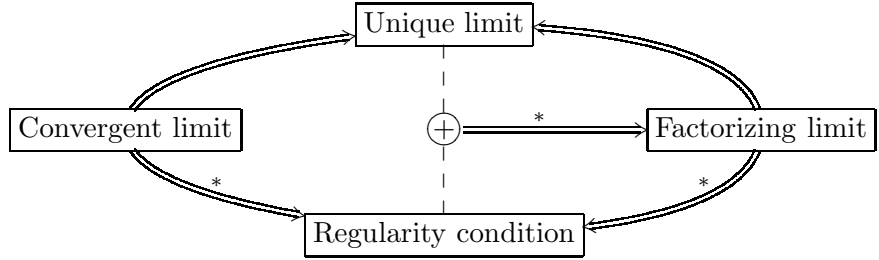

Figure~\ref{fig:conditions} summarizes the different
conditions we introduced, and shows the implications we briefly mentioned. We
will now go ahead and prove that the individual arrows are indeed correct.
However, in order to avoid problems with the direct integral spaces involved,
we shall make certain separability assumptions in most cases. Let us comment on
these. For multiplicative limit states, it seems a reasonable assumption that
the limit Hilbert space $\hcal_0$ is separable. This would follow, from example,
from the Haag-Swieca compactness condition; cf. \cite{Buc:phase_space_scaling}. For general limit states, in
particular if these are invariant, $\hcal_0$ cannot be separable since already
$\hcal_\zfk \isom L^2(\zcal,\nu)$ is nonseparable, as discussed in
Sec.~\ref{sec:decompTheory}. We can however reasonably assume that $\hcal_0$
fulfills a condition which we call \emph{uniform separability}; cf.
Def.~\ref{def:fundamentalSequence} in the appendix. This means that a countable set
$\{\chi_j\} \subset \hcal_0=\int_\zcal^\Gamma d\nu(z)\hcal_z$ exists such
that $\{\chi_j(z)\}$ is dense in every $\hcal_z$. As we shall see in
Sec.~\ref{sec:compactness}, uniform separability follows from a
sharpened version of the Haag-Swieca compactness condition; and we will show in
Sec.~\ref{sec:examples} that this compactness condition is indeed
fulfilled in relevant examples.

\subsection[Unique limit => factorizing limit]{Unique limit
$\Rightarrow$ factorizing limit}

In the following, we suppose that $\afk$ has a unique scaling limit. We
fix a regular limit state $\uomega_0$, and denote the associated objects $\zcal,
\nu, \hcal_0, \pi_0, \Omega_0, \hcal_z, \pi_z, \Omega_z, V_z$ as usual. 
In order to prove that the scaling limit factorizes, we have to construct a
unitary $V:\hcal_0\to L^2(\zcal,\nu,\qhcal)$ with appropriate properties.
In fact, this $V$ is intuitively given by $V = \int_\zcal^{\Gamma,\oplus}
d\nu(z) V_z$; and the key question turns out to be whether this $V$ is surjective. We will prove this only
under separability assumptions.

\begin{proposition} \label{pro:vWelldef}
Let $\afk$ have a unique scaling limit; let $\uomega_0$ be a regular limit
state; and suppose that $\hcal_0$ is uniformly separable. Then,
\begin{equation*}
  V : \hcal_0 \to L^2(\zcal,\nu,\qhcal), \quad
  V = \int_\zcal^{\Gamma,\oplus} d\nu(z) V_z
\end{equation*}
defines a unitary operator.
\end{proposition}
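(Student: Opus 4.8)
The plan is to verify the three defining properties of a unitary in turn: that $V$ is a well-defined operator into $L^2(\zcal,\nu,\qhcal)$, that it is isometric, and that it is surjective. The first two are essentially formal once the measurability bookkeeping of the appendix is in place; surjectivity is where the uniform separability hypothesis enters, and is the main obstacle.

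First I would establish well-definedness and the isometry property together. By Theorem~\ref{thm:hilbertDecomp}, every vector of $\hcal_0 = \int_\zcal^\Gamma d\nu(z)\,\hcal_z$ is an $L^2$-limit of elements of the fundamental family $\Gamma$, i.e.\ of vector fields $z \mapsto \pi_z(\uA)\Omega_z$ with $\uA \in \uafk$. Applying the fibrewise unitaries $V_z$ sends such a field to $\varphi_{\uA}: z \mapsto V_z\pi_z(\uA)\Omega_z$, which lies in $L^2(\zcal,\nu,\qhcal)$ by the regularity assumption (Definition~\ref{def:stateRegular}). Since each $V_z$ is an isometry, one has fibrewise $\|V_z\xi(z)\| = \|\xi(z)\|$, whence $\|V\xi\|^2 = \int_\zcal \|V_z\xi(z)\|^2\,d\nu(z) = \int_\zcal \|\xi(z)\|^2\,d\nu(z) = \|\xi\|^2$ for $\xi$ in the dense subspace spanned by $\Gamma$. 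Thus $V$ is isometric there and extends uniquely to an isometry $\hcal_0 \to L^2(\zcal,\nu,\qhcal)$; measurability of the images of general vectors is inherited because $L^2$-limits of measurable vector fields are measurable. This extension is precisely the decomposable operator $\int_\zcal^{\Gamma,\oplus} d\nu(z)\,V_z$.

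The remaining and essential point is surjectivity, and here uniform separability enters. Let $\{\chi_j\} \subset \hcal_0$ be a countable family with $\{\chi_j(z)\}$ dense in $\hcal_z$ for every $z$ (Definition~\ref{def:fundamentalSequence} in the appendix). Each $\chi_j$ is an $L^2$-limit of elements of $\Gamma$, so $V\chi_j$ is a genuine element of $L^2(\zcal,\nu,\qhcal)$ by the same measurability argument as above; and since $V_z$ is \emph{unitary onto} $\qhcal$, the pointwise images $\{V_z\chi_j(z)\}_j$ are dense in $\qhcal$ for every $z$. Hence $\{V\chi_j\}$ is a fundamental sequence for the constant-fibre integral $L^2(\zcal,\nu,\qhcal)$ in the sense of the appendix, so that $L^\infty(\zcal)\cdot\{V\chi_j\}$ is total there. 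On the other hand, the range of $V$ is closed, since $V$ is isometric, and stable under multiplication by $L^\infty(\zcal)$: indeed $V$ intertwines the diagonal operators $\pi_0(\uC)$, $\uC \in \zfk(\uafk)$, of Corollary~\ref{cor:reprDecomp} with multiplication by the functions $z \mapsto \uomega_z(\uC)$, which exhaust $\ccal(\zcal)$ by Proposition~\ref{pro:stateDecomp}, so that closedness upgrades $\ccal(\zcal)$-stability to $L^\infty(\zcal)$-stability. Therefore the range contains the total set $L^\infty(\zcal)\cdot\{V\chi_j\}$ and equals $L^2(\zcal,\nu,\qhcal)$, proving that $V$ is onto and thus unitary.

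The crux is the surjectivity step. Fibrewise density of $\{V_z\chi_j(z)\}_j$ is immediate from the unitarity of the $V_z$, but without uniform separability there is no countable measurable skeleton tying the fibres together, and the image under $V$ of the (fibrewise dense, but only $\uafk$-indexed) family $\Gamma$ need not exhaust the target direct integral. Uniform separability is exactly the device that upgrades this pointwise density to totality in $L^2(\zcal,\nu,\qhcal)$, via the fundamental-sequence criterion of the appendix; this is why the conclusion is only claimed under that hypothesis.
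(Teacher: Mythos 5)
Your proof is correct, and the first half (well-definedness via regularity, isometry via fibrewise unitarity, extension by continuity) coincides with the paper's. The surjectivity step, however, takes a genuinely different route. The paper argues at the operator level: it takes the projection $P$ onto $\img V$, notes that $V$ (hence $P$) commutes with all diagonalizable operators, invokes Theorem~\ref{thm:decompCommute} to decompose $P = \int_\zcal^\oplus d\nu(z)\,P(z)$, and then uses the a.e.-uniqueness of decompositions to read off $(1-P(z))V_z = 0$ a.e., whence $P(z)=\idop$ a.e.\ by surjectivity of the $V_z$. You instead argue at the vector level: you push a fundamental sequence $\{\chi_j\}$ of $\hcal_0$ through $V$ to obtain a fundamental sequence of $L^2(\zcal,\nu,\qhcal)$, and combine closedness and $L^\infty(\zcal)$-stability of $\img V$ with totality of $L^\infty(\zcal)\cdot\{V\chi_j\}$. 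Both arguments consume uniform separability in an essential way — the paper through the structure theorems on decomposable operators (which fail without it), you through the existence of a countable fibrewise-dense skeleton and the countable union of null sets hidden in the totality claim. Your route is more elementary and makes the role of countability more transparent; the paper's is shorter because it delegates the work to the appendix machinery. One point you should make explicit: the statement that $L^\infty(\zcal)\cdot\{V\chi_j\}$ is total in $L^2(\zcal,\nu,\qhcal)$ is not literally recorded in the appendix (it only says the uniformly separable setting agrees with the standard Dixmier--Takesaki theory), so you should either cite that literature or supply the one-line argument: if $\psi$ is orthogonal to all $f\cdot V\chi_j$, then for each $j$ the integrable function $z \mapsto \hrskp{(V\chi_j)(z)}{\psi(z)}$ vanishes a.e.; removing the countable union of the corresponding null sets, $\psi(z)$ is orthogonal to the dense set $\{(V\chi_j)(z)\}_j$ for a.e.\ $z$, hence $\psi = 0$.
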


\begin{proof}
First, it is clear that if $\hcal_0$ is uniformly separable, then all
$\hcal_z$, and in particular $\qhcal$, are separable. Hence
$L^2(\zcal,\nu,\qhcal)$ is uniformly separable. 

Now note that $V$ is well-defined precisely by the regularity condition.
Further, writing explicitly
\begin{equation}
  V \pi_0(\uA) \Omega_0 = \big( z \mapsto V_z \pi_z(\uA) \Omega_z \big),
  \quad \uA \in \uafk,
\end{equation}
one has
\begin{equation}
  \| V \pi_0(\uA) \Omega_0 \|^2 
  = \int_\zcal  d\nu(z) \, \|V_z \pi_z(\uA)\Omega_z \|^2
  = \int_\zcal  d\nu(z) \, \| \pi_z(\uA)\Omega_z \|^2 
  = \|\pi_0(\uA) \Omega_0  \|^2,
\end{equation}
so $V$ is isometric. It remains to show that $V$ is surjective.
To that end, let $P$ be the orthogonal projector onto $\img V$. Since $V$
commutes with all diagonal operators, so does $P$; thus $P$ is decomposable:
$P = \int_\zcal^\oplus d\nu(z) P(z)$. Now compute
\begin{equation}
  0 = (1-P) V = \int_\zcal^{\Gamma,\oplus} d\nu(z) (1-P(z)) V_z.
\end{equation}
Using uniform separability of \emph{both} spaces involved, we obtain that
$(1-P(z))V_z=0$ a.~e. Since the $V_z$ are surjective onto $\qhcal$, this means
$P(z)=\idop$ a.~e. This implies $P=\idop$, so $V$ is surjective.
\end{proof}

It is clear that $V \Omega_0 = \Omega_\zfk \otimes \qOmega$; and we can also
verify from the properties of the $V_z$ with respect to Poincaré symmetries that
\begin{equation}
  V U_0(g) V\st = \idop \otimes \qU(g) \quad
  \text{for all } g \in \poincare.
\end{equation}
Also, by the definition of the unique scaling limit, it must hold that $V_z
\afk_z(\ocal) V_z\st \subset \qafk(\ocal)$ for all $z$. Summarizing the results
of this section, we have shown:

\begin{theorem} \label{thm:uniqueImpliesFactor}
Suppose that $\afk$ has a unique scaling limit, that every limit state
$\uomega_0$ is regular, and that the limit spaces $\hcal_0$ are uniformly separable.
Then the scaling limit of $\afk$ is factorizing.
\end{theorem}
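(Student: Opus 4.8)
The plan is to assemble the conclusion of Theorem~\ref{thm:uniqueImpliesFactor} directly from the machinery developed in the preceding subsection, treating it essentially as a bookkeeping result that verifies all the clauses of Definition~\ref{def:factorLimit} hold simultaneously. The hypotheses are exactly those of Proposition~\ref{pro:vWelldef}---uniqueness of the scaling limit, regularity of every limit state, and uniform separability of every $\hcal_0$---so the real content is already proven; what remains is to check that the net $(\qafk,\qhcal,\qOmega,\qU)$ supplied by the definition of a unique scaling limit serves as the witnessing net for the factorizing property, and that the unitary $V$ built fibrewise satisfies each required intertwining relation.

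First I would fix an arbitrary limit state $\uomega_0$ together with its decomposition $\uomega_0 = \int_\zcal d\nu(z)\,\uomega_z$ from Proposition~\ref{pro:stateDecomp} and the associated GNS data $\hcal_z,\pi_z,\Omega_z$. By uniqueness of the scaling limit, each multiplicative $\uomega_z$ yields a unitary $V_z:\hcal_z\to\qhcal$ with $V_z\Omega_z=\qOmega$, $V_z U_z(g)V_z\st=\qU(g)$, and $V_z\afk_z(\ocal)V_z\st\subset\qafk(\ocal)$; regularity of $\uomega_0$ guarantees these $V_z$ can be chosen so that $z\mapsto V_z\pi_z(\uA)\Omega_z$ is measurable, which is precisely what is needed to form the decomposable operator $V=\int_\zcal^{\Gamma,\oplus}d\nu(z)\,V_z$. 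Proposition~\ref{pro:vWelldef} then applies verbatim, under the stated uniform separability, to conclude that $V$ is a \emph{unitary} from $\hcal_0$ onto $L^2(\zcal,\nu,\qhcal)$.

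Next I would verify the three structural conditions of Definition~\ref{def:factorLimit} for this $V$. The vacuum condition $V\Omega_0=\Omega_\zfk\otimes\qOmega$ follows because $\Omega_0$ is identified with the constant section $z\mapsto\Omega_z$ (the image of $\Omega_\zfk\in\hcal_\zfk\isom L^2(\zcal,\nu)$ under the embedding of Theorem~\ref{thm:hilbertDecomp}), and each $V_z$ sends $\Omega_z$ to $\qOmega$; thus $V\Omega_0$ is the constant section $\qOmega$, which is exactly $\Omega_\zfk\otimes\qOmega$. The Poincar\'e condition $VU_0(g)V\st=\idop\otimes\qU(g)$ follows from Proposition~\ref{pro:poincareDecomp}, which gives $U_0(g)=\int_\zcal^\Gamma d\nu(z)\,U_z(g)$, combined fibrewise with the intertwining property $V_zU_z(g)V_z\st=\qU(g)$ of each $V_z$; since the resulting fibre operator is the $z$-independent $\qU(g)$, the decomposable operator it defines is exactly $\idop\otimes\qU(g)$ on $L^2(\zcal,\nu,\qhcal)$. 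Finally, the local-algebra inclusion $V_z\afk_z(\ocal)V_z\st\subset\qafk(\ocal)$ holds for every $z$ simply by the definition of the unique scaling limit applied to each $\uomega_z$ separately.

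I expect no genuine obstacle in this argument, since the essential analytic difficulty---the surjectivity of $V$, which requires the uniform separability of \emph{both} $\hcal_0$ and $L^2(\zcal,\nu,\qhcal)$---has already been dispatched in Proposition~\ref{pro:vWelldef}. The only point demanding mild care is the identification of $V\Omega_0$ and of the Poincar\'e unitaries as genuine tensor-product operators rather than merely decomposable operators with $z$-independent fibres; this rests on the remark following Theorem~\ref{thm:hilbertDecomp} that $\hcal_\zfk\isom L^2(\zcal,\nu)$ and on the fact that a decomposable operator with a.\,e.\ constant fibre $T$ acts as $\idop\otimes T$ under the identification $L^2(\zcal,\nu,\qhcal)\isom L^2(\zcal,\nu)\otimes\qhcal$. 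Assembling these observations yields the claim that the scaling limit of $\afk$ is factorizing, completing the proof.
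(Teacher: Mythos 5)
Your proposal is correct and follows essentially the same route as the paper: the paper likewise builds $V=\int_\zcal^{\Gamma,\oplus}d\nu(z)\,V_z$ from the unitaries $V_z$ supplied by Definition~\ref{def:uniqueLimit} (measurability coming from regularity), invokes Proposition~\ref{pro:vWelldef} for unitarity under uniform separability, and then reads off the vacuum, Poincar\'e, and local-algebra conditions of Definition~\ref{def:factorLimit} fibrewise. The only difference is cosmetic: you spell out the verification of $V\Omega_0=\Omega_\zfk\otimes\qOmega$ and $VU_0(g)V\st=\idop\otimes\qU(g)$ via Proposition~\ref{pro:poincareDecomp} and the constant-fibre identification, steps the paper dispatches with ``it is clear that.''
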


\subsection[Factorizing limit => unique limit]{Factorizing limit
$\Rightarrow$ unique limit}
\label{sec:factorToUnique}

Now reversing the arrow, we start from a theory with factorizing scaling limit,
and want to show that the scaling limit is unique in the sense of Buchholz and
Verch, and that the limit states are regular. At first glance, this implication
seems to be apparent from the definitions. A detailed investigation however reveals some subtleties, which
again lead us to making separability assumptions.

\begin{theorem} \label{thm:factorToUnique}
   Assume that $\afk$ has a factorizing scaling limit. Then the scaling limit
   is unique. If the space $\qhcal$ is separable, all limit states
   $\uomega_0$ are regular.
\end{theorem}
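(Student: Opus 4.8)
The plan is to read off both assertions by specializing the factorizing hypothesis (Def.~\ref{def:factorLimit}), first to multiplicative limit states and then to arbitrary ones, with separability of $\qhcal$ entering only to fix the correct notion of measurability. For uniqueness, I would apply Def.~\ref{def:factorLimit} to a \emph{multiplicative} limit state $\uomega_0$ and note that the central decomposition of Prop.~\ref{pro:stateDecomp} collapses to a point: since the mean is multiplicative, $\uomega_0$ restricts to a character on $\zfk(\uafk)$, so the vacuum state $\uC \mapsto \mtxe{\Omega_0}{\pi_0(\uC)}{\Omega_0}$ is multiplicative on $\pi_0(\zfk(\uafk)) \isom \ccal(\zcal)$ and hence represented by a point mass $\nu = \delta_{z_0}$. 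Then $\hcal_\zfk \isom \cbb$ and $L^2(\zcal,\nu,\qhcal) \isom \qhcal$, so the decomposable unitary $V$ of Def.~\ref{def:factorLimit} reduces to a single unitary $V_{z_0}: \hcal_0 \to \qhcal$, whose defining properties become $V_{z_0}\Omega_0 = \qOmega$ (identifying $\hcal_\zfk \isom \cbb$), $V_{z_0}U_0(g)V_{z_0}\st = \qU(g)$, and $V_{z_0}\afk_0(\ocal)V_{z_0}\st \subset \qafk(\ocal)$---exactly the requirements of Def.~\ref{def:uniqueLimit}. As the net $(\qafk,\qhcal,\qOmega,\qU)$ produced by the factorizing hypothesis does not depend on the chosen limit state, it serves as the unique scaling limit net for all multiplicative $\uomega_0$, proving the first claim.

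For regularity (assuming $\qhcal$ separable), I would fix an arbitrary limit state $\uomega_0 = \int_\zcal d\nu(z)\,\uomega_z$, apply Def.~\ref{def:factorLimit} to obtain $V = \int_\zcal^{\Gamma,\oplus} d\nu(z)\,V_z$, and argue that $(V_z)$ is an admissible choice in Def.~\ref{def:stateRegular}. First I would check that for $\nu$-almost every $z$ the unitary $V_z$ is of the type allowed by Def.~\ref{def:uniqueLimit} for $\uomega_z$: comparing fibres in $V\Omega_0 = \Omega_\zfk \otimes \qOmega$ yields $V_z\Omega_z = \qOmega$ a.e.; comparing fibres in $VU_0(g)V\st = \idop \otimes \qU(g)$ and using $U_0(g) = \int_\zcal^\Gamma d\nu(z)\,U_z(g)$ from Prop.~\ref{pro:poincareDecomp} yields $V_z U_z(g) V_z\st = \qU(g)$ a.e.\ for each $g$, which I would make uniform in $g$ by ranging over a countable dense subset and invoking strong continuity; and $V_z\afk_z(\ocal)V_z\st \subset \qafk(\ocal)$ is part of the hypothesis. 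Then, since $\pi_0(\uA)\Omega_0 = \int_\zcal^\Gamma d\nu(z)\,\pi_z(\uA)\Omega_z$ by Thm.~\ref{thm:hilbertDecomp} and $V$ is decomposable, the function $\varphi_{\uA}: z \mapsto V_z\pi_z(\uA)\Omega_z$ coincides with $V\pi_0(\uA)\Omega_0 \in L^2(\zcal,\nu,\qhcal)$.

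The main obstacle is exactly the subtlety flagged before the theorem: inferring from ``$V$ is a decomposable unitary onto $L^2(\zcal,\nu,\qhcal)$'' that $\varphi_{\uA}$ is Lusin measurable in the sense of Def.~\ref{def:stateRegular}. This is where separability of $\qhcal$ should be indispensable: for separable $\qhcal$, $L^2(\zcal,\nu,\qhcal)$ is the ordinary Bochner space, whose elements are precisely the square-integrable Lusin-measurable $\qhcal$-valued functions, so membership in it literally is Lusin measurability; without separability one is forced to use the general direct integral of Appendix~\ref{sec:diApp}, and this identification is no longer automatic. The only remaining, minor point is the upgrade from ``for each $g$, a.e.\ $z$'' to ``a.e.\ $z$, for all $g$,'' which is routine given strong continuity and the separability of $\poincare$.
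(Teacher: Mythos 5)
Your route is the same as the paper's: specialize Def.~\ref{def:factorLimit} to multiplicative states to get uniqueness, then compare fibres of $V\Omega_0$ and of $VU_0(g)V\st$ (using Prop.~\ref{pro:poincareDecomp}) to get regularity. Two points, however, keep the proposal from closing as written. The first is the null set. Def.~\ref{def:stateRegular} asks for a choice of unitaries $V_z$ that satisfy the conditions of Def.~\ref{def:uniqueLimit} at \emph{every} $z$ --- they are ``the unitaries obtained from Def.~\ref{def:uniqueLimit}'' --- and which in addition make each $\varphi_{\uA}$ measurable. Your fibre comparison only yields $V_z\Omega_z = \qOmega$ and $V_z U_z(g) V_z\st = \qU(g)$ outside a null set $\ncal$, and you stop there; on $\ncal$ the operators supplied by Def.~\ref{def:factorLimit} may simply fail these identities, so the family $(V_z)$ is not yet an admissible choice. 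The repair is cheap but must be made: for $z \in \ncal$, replace $V_z$ by a unitary $\hat V_z$ obtained by applying Def.~\ref{def:factorLimit} (equivalently, the uniqueness you just proved) to the multiplicative state $\uomega_z$ itself. The patched family ($=V_z$ off $\ncal$, $=\hat V_z$ on $\ncal$) satisfies Def.~\ref{def:uniqueLimit} at every $z$, still decomposes $V$, and changes each $\varphi_{\uA}$ only on a null set, so measurability survives; the algebra condition causes no trouble since, as you note, Def.~\ref{def:factorLimit} demands $V_z\afk_z(\ocal)V_z\st \subset \qafk(\ocal)$ for all $z$. This is exactly the step the paper flags with ``we will have to modify the $V_z$ on a null set.''

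The second point is that you place the separability hypothesis at the wrong step. With the paper's conventions (Appendix~\ref{sec:diApp}), $L^2(\zcal,\nu,\qhcal)$ is \emph{defined} as the direct integral over the constant fundamental family, and its elements are precisely the square-integrable Lusin-measurable functions --- no separability required; hence $\varphi_{\uA} = V\pi_0(\uA)\Omega_0$ is Lusin measurable simply because $V$ is decomposable, which is the paper's one-line observation. Where separability genuinely enters is your step ``comparing fibres in $VU_0(g)V\st = \idop\otimes\qU(g)$ yields $V_zU_z(g)V_z\st = \qU(g)$ a.e.'': equality of two decomposable operators forces a.e.\ equality of their fibres only under uniform separability, and the appendix explicitly warns that uniqueness of decompositions fails for general direct integrals. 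Separability of $\qhcal$ makes $L^2(\zcal,\nu,\qhcal)$ uniformly separable and thereby legitimizes this fibrewise comparison (the vector-level comparison $V_z\Omega_z=\qOmega$ a.e.\ is harmless, since equality of vectors in the direct integral is a.e.\ equality of fibres by definition). Your countable-dense-subgroup plus continuity upgrade from ``for each $g$, a.e.\ $z$'' to ``a.e.\ $z$, for all $g$'' is fine and matches the paper.
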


\begin{proof}
It is clear that the scaling limit is unique by
Def.~\ref{def:uniqueLimit}, specializing the conditions of
Def.~\ref{def:factorLimit} to the case where $\uomega_0$ is
multiplicative, and $\zcal$ consists of a single point. Now let $\uomega_0$ be a limit state; we need to show it is regular. Let $V = \int_\zcal^{\Gamma,\oplus} d\nu(z)V_z$ be the unitary guaranteed by Def.~\ref{def:factorLimit}. By definition, the map $z \mapsto V_z \pi_z(\uA) \Omega_z$ is
measurable for any $A \in \uafk$. But we have to show that each $V_z$
fulfills the conditions of Def.~\ref{def:uniqueLimit}; in fact, we will have to
modify the $V_z$ on a null set.

First, we have $V \Omega_0 = \Omega_\zfk \otimes \qOmega$ by assumption. On the
other hand, $V \Omega_0 = \int_\zcal^{\oplus} d\nu(z)\, V_z \Omega_z $,
so that $V_z \Omega_z = \qOmega$ for $z \in \zcal\backslash\ncal_\Omega$, where
$\ncal_\Omega$ is a null set. Next we consider Poincaré transformations.
Starting from Def.~\ref{def:factorLimit}, we know that:
\begin{equation}
  V U_0(g) V\st = \idop \otimes \qU(g) \quad \text{for all } g \in \poincare.
\end{equation}
Since $U_0(g)$ factorizes by Prop.~\ref{pro:poincareDecomp}, we can
rewrite this equation as
\begin{equation}
  \int_\zcal^\oplus d\nu(z)\, V_z U_z(g) V_z\st  = \int_\zcal^\oplus d\nu(z)
  \, \qU (g).
\end{equation}
Now if $\qhcal$ is separable, and thus $L^2(\zcal,\nu,\qhcal)$ uniformly separable,
we can conclude that $V_z U_z(g) V_z\st = \qU(g)$ for all $z \in \zcal
\backslash \ncal_g$, with a null set $\ncal_g$ depending on $g$. We pick a
countable dense subset $\pcal_c$ of $\poincare$, and consider the null set
$\ncal := \ncal_\Omega \cup (\cup_{g \in \pcal_c} \ncal_g)$. Our results so far
are that
\begin{equation} \label{eqn:vZProperties}
  V_z \Omega_z = \qOmega, \quad
  V_z U_z(g) V_z\st = \qU(g)
  \quad \text{for all }
  z \in \zcal \backslash \ncal, \; g \in \pcal_c.
\end{equation}
Indeed, by continuity of the representations, the same holds for all $g \in
\poincare$. 
Now let $\hat V_z$ be those unitaries
obtained by evaluating Def.~\ref{def:factorLimit} for the multiplicative limit states $\uomega_z$. 
We set
\begin{equation}
   W_z := \begin{cases}
             V_z \quad &\text{for } z \in \zcal \backslash \ncal,
             \\
             \hat V_z & \text{for } z \in \ncal.
          \end{cases}
\end{equation}
Then we have $V =
\int_\zcal^{\Gamma,\oplus} d\nu(z) \, W_z $,  and the $W_z$ fulfill the
relations in Eq.~\eqref{eqn:vZProperties} for \emph{all} $z \in \zcal$ and $g \in \poincare$.
As a last point, $W_z \afk_z(\ocal) W_z\st \subset \qafk(\ocal)$ holds for
every $z$, since both $V_z$ and $\hat V_z$ have this property. Thus $\uomega_0$
is regular.
\end{proof}

Let us add some comments on the conditions required for $V_z$ in
Def.~\ref{def:factorLimit}, regarding Poincaré transformations and local
algebras. We could choose stricter conditions on $V_z$, requiring
that
\begin{equation}
  V_z U_z(g) V_z\st = \qU(g) \quad \text{for all }z \in \zcal \text{ and } g
  \in  \poincare.
\end{equation}
In this case, the countability problem in the proof above does not occur, and
Thm.~\ref{thm:factorToUnique} holds without the requirement that
$\qhcal$ is separable.

On the other hand, it does not seem reasonable to \emph{weaken} the conditions
on $V_z$ with respect to local algebras, requiring only that
\begin{equation} \label{eqn:vAlgWeaker}
  V \afk_0(\ocal) V\st \subset \zfk_0 \bar\otimes \qafk(\ocal) \quad
  \text{for all } \ocal.
\end{equation}
(We shall show below that this relation is implied by the chosen conditions on
$V_z$.) For if we require only \eqref{eqn:vAlgWeaker}, and we wish to apply 
the techniques used in the proof of Thm.~\ref{thm:factorToUnique}, it becomes
necessary not only to require separability of $\qhcal$ -- which seems reasonable for applications in physics --, but also separability of the algebras $\qafk(\ocal)$. That would however be too strict for our purposes, since the
local algebras are expected to be isomorphic to the hyperfinite type
$\mathrm{III}_1$ factor \cite{BDF:universal_structure}.

We now show that Eq.~\eqref{eqn:vAlgWeaker} follows from
Def.~\ref{def:factorLimit} as given.

\begin{proposition} \label{pro:algebraFactCommute}
Let $\afk$ have a factorizing scaling limit. With $V$ the unitary of
Def.~\ref{def:factorLimit}, one has $V \afk_0(\ocal) V\st \subset \zfk_0
\bar\otimes \qafk(\ocal)$ for any bounded open region $\ocal$.
\end{proposition}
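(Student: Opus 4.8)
The plan is to prove the inclusion $V \afk_0(\ocal) V\st \subset \zfk_0 \bar\otimes \qafk(\ocal)$ by reducing it, via the double commutant theorem, to a statement about commutants, and then exploiting the strong factorization property $V_z \afk_z(\ocal) V_z\st \subset \qafk(\ocal)$ guaranteed for \emph{every} $z$ by Def.~\ref{def:factorLimit}. First I would recall that, by Corollary~\ref{cor:reprDecomp}, the net algebra $\afk_0(\ocal) = \pi_0(\uafk(\ocal))''$ consists of decomposable operators, and that $\pi_0(\uA) = \int_\zcal^\Gamma d\nu(z)\,\pi_z(\uA)$. Conjugating by the decomposable unitary $V = \int_\zcal^{\Gamma,\oplus} d\nu(z)\,V_z$, the generators transform as $V \pi_0(\uA) V\st = \int_\zcal^{\Gamma,\oplus} d\nu(z)\, V_z \pi_z(\uA) V_z\st$, so that $V \afk_0(\ocal) V\st$ is a von Neumann algebra of decomposable operators on $L^2(\zcal,\nu,\qhcal)$ whose fibres lie in $V_z \afk_z(\ocal) V_z\st \subset \qafk(\ocal)$.

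The key step is then to identify the target algebra $\zfk_0 \bar\otimes \qafk(\ocal)$ intrinsically. Under the identification $\hcal_0 \isom L^2(\zcal,\nu,\qhcal)$, the center image $\zfk_0 = \pi_0(\zfk(\uafk))''$ acts as the full algebra of diagonalizable operators (multiplication by $L^\infty(\zcal,\nu)$ functions on the first tensor factor), since by Corollary~\ref{cor:reprDecomp} $\pi_0(\uC)$ for $\uC \in \zfk(\uafk)$ is diagonal with value $\uomega_z(\uC)$, and by Prop.~\ref{pro:stateDecomp} these values range over a set whose weak closure yields all of $L^\infty(\zcal,\nu)$. Consequently $\zfk_0 \bar\otimes \qafk(\ocal)$ is precisely the algebra of decomposable operators $\int_\zcal^\oplus d\nu(z)\,T(z)$ with $T(z) \in \qafk(\ocal)$ for almost every $z$, i.e.\ the von Neumann algebra generated by $L^\infty(\zcal,\nu)$ and the constant-fibre decomposable operators coming from $\qafk(\ocal)$. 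Once this identification is in place, the inclusion is immediate: every $V \pi_0(\uA) V\st$ has fibres in $\qafk(\ocal)$, hence lies in $\zfk_0 \bar\otimes \qafk(\ocal)$, and since the latter is weakly closed, the full von Neumann algebra $V \afk_0(\ocal) V\st$ generated by these elements is contained in it as well.

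The main obstacle I anticipate is the careful handling of measurability and almost-everywhere statements needed to justify that the algebra of decomposable operators with fibres in a fixed $\qafk(\ocal)$ really coincides with the spatial tensor product $\zfk_0 \bar\otimes \qafk(\ocal)$. In the separable setting this is the standard decomposable-operator description of a tensor product with the diagonal algebra, but here $\zfk_0 \isom L^\infty(\zcal,\nu)$ may act on a nonseparable $\hcal_\zfk$, so I would need to invoke the generalized direct integral framework of the appendix (in particular the characterization of decomposable operators via the fundamental family $\Gamma$, cf.\ Lemma~\ref{lem:measOnGamma}) rather than the classical theory. The delicate point is ensuring that the fibrewise inclusions $V_z \afk_z(\ocal) V_z\st \subset \qafk(\ocal)$, which hold for all $z$ by hypothesis, assemble into a genuine operator-algebraic inclusion without running into the measurability pathologies that motivated the strict formulation of Def.~\ref{def:factorLimit} in the first place. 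Since the definition was deliberately chosen to enforce the inclusion for \emph{every} $z$ rather than almost every $z$, no null-set exceptions intrude, and the argument should go through cleanly once the tensor-product identification is established.
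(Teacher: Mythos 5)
Your reduction has the same skeleton as the paper's proof: conjugate the generators, observe that $V\pi_0(\uA)V\st = \int_\zcal^\oplus d\nu(z)\,V_z\pi_z(\uA)V_z\st$ has all its fibres in $\qafk(\ocal)$ by Def.~\ref{def:factorLimit}, and finish by weak closure. The gap sits exactly in your ``key step''. The direction of the identification you actually need---that a decomposable operator whose fibres lie in $\qafk(\ocal)$ belongs to $\zfk_0\bar\otimes\qafk(\ocal)$---is the entire nontrivial content of the proposition (the fibrewise inclusion is immediate from the hypothesis), and neither of your proposed sources provides it. Lemma~\ref{lem:measOnGamma} only lets one verify measurability of an operator field on a fundamental family; it says nothing about membership in a tensor product. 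The ``standard decomposable-operator description'' of $\zfk_0\bar\otimes\qafk(\ocal)$, i.e.\ $\int_\zcal^\oplus \qafk(\ocal)\,d\nu = L^\infty(\zcal,\nu)\bar\otimes\qafk(\ocal)$, is a theorem of the \emph{separable} direct-integral theory, whereas Prop.~\ref{pro:algebraFactCommute} carries no separability hypothesis: Def.~\ref{def:factorLimit} does not assume $\qhcal$ separable, $L^2(\zcal,\nu)$ is typically nonseparable, and the appendix explicitly warns that the commutation characterization of decomposable operators (Thm.~\ref{thm:decompCommute}) fails in this generality. So you have reduced the proposition to an unproven assertion that is essentially equivalent to it. (A smaller inaccuracy: Cor.~\ref{cor:reprDecomp} gives decomposability of the generators $\pi_0(\uA)$ only, not of all of $\afk_0(\ocal)$; this one is harmless, since your final step uses only the generators plus weak closure.)

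The paper closes precisely this gap with a commutant computation that needs no separability at all. For $A'\in\qafk(\ocal)'$ one has $[\idop\otimes A',\,V\pi_0(\uA)V\st] = \int_\zcal^\oplus d\nu(z)\,[A',\,V_z\pi_z(\uA)V_z\st] = 0$, because $V_z\pi_z(\uA)V_z\st\in\qafk(\ocal)$ for \emph{every} $z$ (here your remark that the ``for all $z$'' formulation avoids null-set exceptions is indeed the relevant one). Since $V\pi_0(\uA)V\st$ is decomposable, it also commutes with the diagonalizable algebra $\zfk_0\otimes\idop$, hence with the von Neumann algebra generated by $\zfk_0\otimes\idop$ and $\idop\otimes\qafk(\ocal)'$, which is $\zfk_0\bar\otimes\qafk(\ocal)'$. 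Now Tomita's commutation theorem, valid for arbitrary von Neumann algebras, together with the maximal abelianness of $\zfk_0\isom L^\infty(\zcal,\nu)$ on $L^2(\zcal,\nu)$, gives $\bigl(\zfk_0\bar\otimes\qafk(\ocal)'\bigr)' = \zfk_0\bar\otimes\qafk(\ocal)$, so $V\pi_0(\uafk(\ocal))V\st\subset\zfk_0\bar\otimes\qafk(\ocal)$, and weak closure finishes the proof. This commutant argument is exactly the missing proof of (the needed direction of) your identification; substituting it for the appeal to the separable theory turns your outline into the paper's proof.
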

\begin{proof}
Let $\uA \in \uafk(\ocal)$, and $A' \in \qafk(\ocal)'$. We compute the
commutator $[\idop \otimes A', V\pi_0(\uA)V\st]$ as a direct integral:
\begin{equation}
[\idop \otimes A', V\pi_0(\uA)V\st]
= \int_\zcal^{\oplus} d\nu(z) \,[A', V_z \pi_z(\uA) V_z\st ].
\end{equation}
Now by our requirements on the $V_z$, we have $V_z\pi_z(\uA)V_z\st \in
\qafk(\ocal)$ for all $z$, hence the commutator under the integral vanishes.
Since $\uA \in \uafk(\ocal)$ was arbitrary, this means
\begin{equation}
   V \pi_0(\uafk(\ocal))V\st \subset (\idop\otimes\qafk(\ocal)')' = \zfk_0
   \bar\otimes \qafk(\ocal).
\end{equation}
By weak closure, this inclusion extends to $V \afk_0(\ocal) V\st$.
\end{proof}

\subsection[Convergent limit => unique limit]{Convergent limit
$\Rightarrow$ unique limit}

We now assume that the theory has a convergent scaling limit, and show that our
other conditions follow. The main simplification in the convergent case is as
follows: For every $\uA \in \uafk_\mathrm{conv}$, the function $\lambda \mapsto \omega(\uA_\lambda)$
converges to a finite limit as $\lambda \to 0$; so all asymptotic means applied to this function
yield the same value. Hence the value of $\uomega_0(\uA)$ is the same for
all limit states $\uomega_0$, multiplicative or not. 

\begin{theorem} \label{thm:convergentImpliesUnique}
If the scaling limit of $\afk$ is convergent, then it is unique. 
If further a multiplicative limit state exists such that the associated limit
space $\hcal_0$ is separable, then all limit states are regular, and $\hcal_0$
is uniformly separable for \emph{any} limit state.
\end{theorem}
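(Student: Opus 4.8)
The plan is to prove the three assertions of Theorem~\ref{thm:convergentImpliesUnique} in sequence, exploiting the fact that on the convergent subalgebra $\uafk_\mathrm{conv}$ the scaling limit is genuinely canonical. First I would address uniqueness. The key observation, already noted above, is that for $\uA \in \uafk_\mathrm{conv}$ the value $\uomega_0(\uA) = \mean(\omega(\uA_\cdotarg))$ is independent of the mean, since $\lambda \mapsto \omega(\uA_\lambda)$ converges. Thus, fixing any one \emph{multiplicative} limit state and calling the associated net $(\qafk,\qhcal,\qOmega,\qU)$, I would show that for \emph{every} multiplicative limit state $\uomega_0$ there is a unitary $V:\hcal_0 \to \qhcal$ as required by Def.~\ref{def:uniqueLimit}. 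The natural candidate is determined by $V\pi_0(\uA)\Omega_0 := \qpi(\uA)\qOmega$ for $\uA \in \uafk_\mathrm{conv}$; this is well-defined and isometric precisely because $\uomega_0$ and the reference state agree on $\uafk_\mathrm{conv}^\ast\uafk_\mathrm{conv} \subset \uafk_\mathrm{conv}$, and it extends to a unitary because $\pi_0(\uafk_\mathrm{conv})\Omega_0$ and $\qpi(\uafk_\mathrm{conv})\qOmega$ are dense by condition~\ref{it:convWeaklyDense} (as remarked directly after Def.~\ref{def:convLimit}). The intertwining $VU_0(g)V\st = \qU(g)$ follows from $\ualpha$-invariance of $\uafk_\mathrm{conv}$ together with the defining relation $U_0(g)\pi_0(\uA)\Omega_0 = \pi_0(\ualpha_g\uA)\Omega_0$, and the algebraic inclusion $V\afk_0(\ocal)V\st \subset \qafk(\ocal)$ follows by taking weak closures, again using condition~\ref{it:convWeaklyDense} so that $\pi_0(\uafk(\ocal)\cap\uafk_\mathrm{conv})$ is weakly dense in $\afk_0(\ocal)$.

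For the second assertion I would assume the existence of one multiplicative limit state with separable $\hcal_0$; by uniqueness this $\hcal_0 \isom \qhcal$, so $\qhcal$ is separable. I then need to produce, for an arbitrary limit state, a measurable choice of the $V_z$ in Def.~\ref{def:stateRegular}. The idea is to make the $V_z$ \emph{uniform} on $\uafk_\mathrm{conv}$: since $\uomega_z(\uA)$ is the same finite limit for every $z$ when $\uA \in \uafk_\mathrm{conv}$, the composite map $\varphi_{\uA}(z) = V_z\pi_z(\uA)\Omega_z$ can be arranged to equal the \emph{constant} vector $\qpi(\uA)\qOmega \in \qhcal$, exactly as the concluding remark after Def.~\ref{def:stateRegular} anticipates. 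A constant $\qhcal$-valued function is trivially Lusin measurable, so regularity holds for vectors coming from $\uafk_\mathrm{conv}$; for general $\uA \in \uafk$ I would approximate $\pi_z(\uA)\Omega_z$ in the fibre norm by vectors $\pi_z(\uB)\Omega_z$ with $\uB \in \uafk_\mathrm{conv}$ (using density on the reference fibre and the direct-integral structure of Thm.~\ref{thm:hilbertDecomp}), and conclude Lusin measurability of $\varphi_{\uA}$ as an a.e.-limit of continuous, indeed constant, functions.

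For the third assertion, uniform separability of $\hcal_0 = \int_\zcal^\Gamma d\nu(z)\,\hcal_z$, I would produce a single countable family $\{\chi_j\}$ whose evaluations are dense in every fibre. Because all fibres are isomorphic to the fixed separable space $\qhcal$ via the $V_z$ chosen above, and because on $\uafk_\mathrm{conv}$ these isomorphisms carry $\pi_z(\uA)\Omega_z$ to the $z$-independent vector $\qpi(\uA)\qOmega$, a countable subset of $\uafk_\mathrm{conv}$ whose image $\qpi(\uafk_\mathrm{conv})\qOmega$ is dense in $\qhcal$ yields the desired fundamental sequence $\chi_j(z) = \pi_z(\uA_j)\Omega_z = V_z\st\,\qpi(\uA_j)\qOmega$, with $\{\chi_j(z)\}$ dense in each $\hcal_z$ simultaneously.

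The main obstacle I anticipate is the measurability bookkeeping in the second part: one must ensure that a \emph{single} family of unitaries $V_z$ can be chosen simultaneously (i) to intertwine the Poincar\'e representations for almost all $z$, (ii) to send $\pi_z(\uA)\Omega_z$ to $\qpi(\uA)\qOmega$ for $\uA$ in a countable generating subset of $\uafk_\mathrm{conv}$, and (iii) to satisfy the algebraic inclusion; and then to patch these choices on a common null set, exactly in the style of the proof of Thm.~\ref{thm:factorToUnique}. The separability of $\qhcal$ is what makes such a countable, null-set-controlled selection possible, which is why the separability hypothesis on one fibre is indispensable. Once the $V_z$ are fixed with these properties, both regularity and uniform separability follow essentially formally from the $z$-independence of the limit on $\uafk_\mathrm{conv}$.
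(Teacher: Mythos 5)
Your first and third steps are correct and essentially identical to the paper's own proof: the unitaries $V_z$ are defined by $V_z\pi_z(\uA)\Omega_z := \qpi(\uA)\qOmega$ for $\uA\in\uafk_\mathrm{conv}$ (well-defined and isometric because all asymptotic means agree on $\uafk_\mathrm{conv}$, unitary by condition \ref{it:convWeaklyDense}), and the fundamental sequence is $\pi_0(\uafk_\mathrm{count})\Omega_0$ for a countable $\uafk_\mathrm{count}\subset\uafk_\mathrm{conv}$ with $\qpi(\uafk_\mathrm{count})\qOmega$ dense in $\qhcal$. The gap is in your regularity step. You claim that for general $\uA\in\uafk$ the field $\varphi_{\uA}(z)=V_z\pi_z(\uA)\Omega_z$ is an a.e.\ pointwise limit of \emph{constant} functions, obtained by approximating $\pi_z(\uA)\Omega_z$ in fibre norm by $\pi_z(\uB)\Omega_z$ with $\uB\in\uafk_\mathrm{conv}$. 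This cannot work: an a.e.\ limit of constant $\qhcal$-valued functions is itself a.e.\ constant, whereas $\varphi_{\uA}$ is in general not. Take $\uA=\uC\in\zfk(\uafk)$: then $\varphi_{\uC}(z)=\uomega_z(\uC)\,\qOmega$, and by Proposition~\ref{pro:stateDecomp} the functions $z\mapsto\uomega_z(\uC)$ exhaust $\ccal(\zcal)$, so $\varphi_{\uC}$ is typically nonconstant. The fibrewise density of $\pi_z(\uafk_\mathrm{conv})\Omega_z$ in each $\hcal_z$ only yields $z$-dependent approximating sequences, and no single sequence from $\uafk_\mathrm{conv}$ can converge a.e.\ unless the limit is constant; relatedly, $\pi_0(\uafk_\mathrm{conv})\Omega_0$ is \emph{not} dense in $\hcal_0$ for non-multiplicative $\uomega_0$ --- its image under the $V_z$ is exactly the space of constant fields.

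The paper circumvents this by approximating in the \emph{second} argument of the inner product instead of the first: for fixed $\chi\in\qhcal$ choose $\uA_n\in\uafk_\mathrm{conv}$ with $\qpi(\uA_n)\qOmega\to\chi$, and observe
\begin{equation*}
  \hrskp{V_z\pi_z(\uB)\Omega_z}{\qpi(\uA_n)\qOmega} = \uomega_z(\uB\st\uA_n),
\end{equation*}
which is continuous in $z$; since $\|V_z\pi_z(\uB)\Omega_z\|\leq\|\uB\|$, the convergence as $n\to\infty$ is even uniform in $z$, so $z\mapsto\hrskp{\varphi_{\uB}(z)}{\chi}$ is measurable for every $\chi$. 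This gives \emph{weak} measurability of $\varphi_{\uB}$, and separability of $\qhcal$ upgrades weak to Lusin measurability (the criterion recalled in Appendix~\ref{sec:diApp}) --- this is precisely where the separability hypothesis enters. Note also that, contrary to your closing paragraph, no null-set patching in the style of Thm.~\ref{thm:factorToUnique} is needed here: in the convergent case the $V_z$ are canonical and intertwine vacuum, Poincar\'e symmetries and local algebras exactly, for \emph{every} $z$. If you prefer to keep your approximate-the-vector strategy, it can be repaired by approximating with elements of $\lspan\big(\zfk(\uafk)\,\uafk_\mathrm{conv}\big)$, whose fields $z\mapsto\uomega_z(\uC)\,\qpi(\uB)\qOmega$ are continuous (not constant) and can be shown total in $\hcal_0$ using separability of $\qhcal$; then $L^2$-convergence yields a.e.\ convergence along a subsequence, and Egoroff's theorem gives Lusin measurability of the limit. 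That repair, however, is no shorter than the paper's weak-measurability argument.
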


\begin{proof}
  We pick a fixed multiplicative limit state $\quomega$ and denote the
  corresponding representation objects as $\qhcal$, $\qpi$, $\qU$, $\qOmega$. 
  Given any other multiplicative limit state
  $\uomega_0$, we define a map $V$ by
  \begin{equation}
    V: \hcal_0 \to \qhcal, \quad \pi_0(\uA) \Omega_0 \mapsto \qpi(\uA) \qOmega
    \quad \text{for all } \uA \in \uafk_\mathrm{conv}. 
  \end{equation}
  The convergence property of $\uA \in \uafk_\mathrm{conv}$ implies
  \begin{equation} \label{eqn:uniqueUnitary}
	 \|\pi_0(\uA) \Omega_0\|^2 = \uomega_0(\uA\st\uA)
	 =\quomega(\uA\st\uA) =  \|\qpi(\uA) \qOmega\|^2,
  \end{equation}
so the linear map $V$ is both well-defined and isometric. It is also
densely defined and surjective by assumption (Def.~\ref{def:convLimit}).
Hence $V$ extends to a unitary. Using the $\ualpha$-invariance of
$\uafk_\mathrm{conv}$, one checks by direct computation that $V U_0(g)
  V\st = \qU(g)$ for all $g \in \poincare$. Also, $V \Omega_0=\qOmega$ is
  clear. Further, if $\uA \in \uafk(\ocal)\cap \uafk_\mathrm{conv}$, it is clear
  that $V \pi_0(\uA) V\st = \qpi(\uA)$. By weak density, this means $V
  \afk_0(\ocal) V\st = \afk_0(\ocal)$. Thus the scaling limit is unique.
  
  Now let $\uomega_0$ not necessarily be multiplicative. Decomposing it into
  multiplicative states $\uomega_z$ as in Prop.~\ref{pro:stateDecomp}, 
  the above construction gives us unitaries $V_z:\hcal_z \to \qhcal$ for
  every $z$. In fact, the functions $z \mapsto V_z \pi_z(\uA) \Omega_z = \qpi(\uA) \qOmega$ 
  are \emph{constant} for all $\uA \in
  \uafk_\mathrm{conv}$, in particular measurable. 
  Now let $\chi\in\qhcal$ and $\uB \in \uafk$. We can choose a sequence
  $(\uA_n)_{n \in \nbb}$ in $\uafk_\mathrm{conv}$ such that $\qpi(\uA_n)
  \qOmega\to \chi$ in norm. Noticing that
  \begin{equation}
    \hrskp{V_z \pi_z(\uB) \Omega_z}{ \chi }
    =  \lim_{n \to \infty} \hrskp{V_z \pi_z(\uB) \Omega_z}{ \qpi(\uA_n)
    \qOmega } =  \lim_{n \to \infty} \uomega_z(\uB\st \uA_n),
  \end{equation} 
  we see that the left-hand side, as a function of $z$, is the pointwise limit
  of continuous functions, and hence measurable. Thus $z \mapsto V_z \pi_z(\uB)
  \Omega_z$ is weakly measurable. Now if $\qhcal$ was chosen separable, which
  is possible by assumption, weak measurability implies Lusin measurability
  of the function (cf. Appendix). Thus $\uomega_0$ is regular.
  
  Finally, in the separable case, we remark that we can pick a countable subset
  of $\uafk_\mathrm{count} \subset \uafk_\mathrm{conv}$ such that
  $\qpi(\uafk_\mathrm{count}) \qOmega$ is dense in $\qhcal$.
  Then $\pi_0(\uafk_\mathrm{count})\Omega_0$ becomes a fundamental sequence in
  $\hcal_0$, so that this space is uniformly separable.
\end{proof}

Of course, it follows as a corollary to the preceding sections that the limit is
also factorizing. Let us spell this out more explicitly.

\begin{proposition} \label{pro:convToFactorizing}
Suppose that $\afk$ has a convergent scaling limit, and that there exists a
multiplicative limit state $\quomega$ for which the representation space
$\qhcal$ is separable. Let $\uomega_0$ be any scaling limit state. There exists
a unitary $V = \int_\zcal^{\Gamma,\oplus} d\nu(z)\, V_z : \hcal_0 \to L^2(\zcal,\nu,\qhcal)$
such that
\begin{equation*}
  V \pi_0(\uA\,\uC) \Omega_0 = \pi_0(\uC) \Omega_\zfk \otimes \qpi(\uA) \qOmega
  \quad \text{for all } \uA \in \uafk_\mathrm{conv}, \, \uC \in \zfk(\uafk),
\end{equation*} 
and such that the $V_z$ fulfill all requirements of Def.~\ref{def:factorLimit}.
\end{proposition}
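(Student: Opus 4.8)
The plan is to obtain this as an essentially immediate consequence of Theorem~\ref{thm:convergentImpliesUnique} together with Proposition~\ref{pro:vWelldef}, the only real work being the verification of the explicit formula for $V$.

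First I would feed the hypotheses into Theorem~\ref{thm:convergentImpliesUnique}: a convergent scaling limit together with a multiplicative limit state $\quomega$ whose space $\qhcal$ is separable yields, in one stroke, that the scaling limit is unique, that every limit state --- in particular the given $\uomega_0$ --- is regular, and that $\hcal_0$ is uniformly separable. These are precisely the hypotheses of Proposition~\ref{pro:vWelldef}, so that result makes $V = \int_\zcal^{\Gamma,\oplus} d\nu(z)\, V_z$ a unitary from $\hcal_0$ onto $L^2(\zcal,\nu,\qhcal)$. The fibre unitaries $V_z : \hcal_z \to \qhcal$ are the ones produced by the construction in the proof of Theorem~\ref{thm:convergentImpliesUnique}, characterized by $V_z \pi_z(\uA)\Omega_z = \qpi(\uA)\qOmega$ for every $\uA \in \uafk_\mathrm{conv}$. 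Since each $\uomega_z$ is itself a multiplicative limit state, that same construction gives $V_z\Omega_z = \qOmega$ and $V_z U_z(g) V_z\st = \qU(g)$, while uniqueness of the limit yields $V_z\afk_z(\ocal)V_z\st \subset \qafk(\ocal)$; hence the $V_z$ meet all the demands of Def.~\ref{def:factorLimit}, exactly as summarized in Theorem~\ref{thm:uniqueImpliesFactor}.

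It then remains to check the explicit action formula, and here the one substantive point is the centrality of $\uC$. I would fix $\uA \in \uafk_\mathrm{conv}$, $\uC \in \zfk(\uafk)$, and evaluate $V\pi_0(\uA\,\uC)\Omega_0$ fibrewise. By Corollary~\ref{cor:reprDecomp} one has $\pi_z(\uC) = \uomega_z(\uC)\idop$, so that $V_z\pi_z(\uA\,\uC)\Omega_z = \uomega_z(\uC)\, V_z\pi_z(\uA)\Omega_z = \uomega_z(\uC)\,\qpi(\uA)\qOmega$; thus $V\pi_0(\uA\,\uC)\Omega_0$ is the element $z \mapsto \uomega_z(\uC)\,\qpi(\uA)\qOmega$ of $L^2(\zcal,\nu,\qhcal)$.

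Finally I would identify this with the asserted right-hand side under the canonical identifications $\hcal_\zfk \isom L^2(\zcal,\nu)$ (from Sec.~\ref{sec:decompTheory}) and $L^2(\zcal,\nu,\qhcal) \isom L^2(\zcal,\nu)\otimes\qhcal$. Under the first, $\Omega_\zfk = \Omega_0$ is the constant function $1$, and since $\pi_0(\uC)$ is diagonal with symbol $z \mapsto \uomega_z(\uC)$ (again Corollary~\ref{cor:reprDecomp}), the vector $\pi_0(\uC)\Omega_\zfk$ becomes the function $z \mapsto \uomega_z(\uC)$; tensoring with $\qpi(\uA)\qOmega$ reproduces exactly $z \mapsto \uomega_z(\uC)\,\qpi(\uA)\qOmega$, matching the fibrewise computation. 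I do not expect a genuine obstacle here: the difficulty is purely one of bookkeeping --- keeping the several identifications straight --- together with the harmless observation that the vectors $\pi_0(\uA\,\uC)\Omega_0$ span a dense subspace (take $\uC = \idop$ and use that $\pi_0(\uafk_\mathrm{conv})\Omega_0$ is dense by the remark following Def.~\ref{def:convLimit}), so that the formula indeed characterizes $V$ consistently with its construction.
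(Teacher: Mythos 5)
Your argument is correct in its essentials and is very close to the paper's: both take the fibre unitaries $V_z$ produced in the proof of Thm.~\ref{thm:convergentImpliesUnique} (characterized by $V_z\pi_z(\uA)\Omega_z=\qpi(\uA)\qOmega$ for $\uA\in\uafk_\mathrm{conv}$), assemble $V=\int_\zcal^{\Gamma,\oplus}d\nu(z)\,V_z$, and verify the displayed formula fibrewise via $\pi_z(\uC)=\uomega_z(\uC)\idop$. The only real divergence is how unitarity and decomposability of $V$ are obtained. You route this through Prop.~\ref{pro:vWelldef} (regularity plus uniform separability, both delivered by Thm.~\ref{thm:convergentImpliesUnique}), which is exactly the ``corollary'' path the paper alludes to in the sentence preceding the proposition. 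The paper's written proof instead first shows that $z\mapsto V_z\st$ is a measurable family of operators -- checked on the fundamental family of $L^2(\zcal,\nu,\qhcal)$ via $V_z\st\,\qpi(\uA)\qOmega=\pi_z(\uA)\Omega_z\in\Gamma$ and Lemma~\ref{lem:measOnGamma} -- defines $V\st$ as a direct integral, and then uses uniform separability to conclude that its adjoint $V$ is decomposable with fibres $V_z$. Both arguments rest on the same separability input from Thm.~\ref{thm:convergentImpliesUnique}, so the difference is one of packaging, not substance.

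One side remark in your final paragraph is false, although it does not damage the proof of the proposition. For a general (e.g.\ invariant) limit state $\uomega_0$, the set $\pi_0(\uafk_\mathrm{conv})\Omega_0$ is \emph{not} dense in $\hcal_0$: the remark following Def.~\ref{def:convLimit} rests on condition (ii) there, which is stated only for \emph{multiplicative} limit states. Indeed, your own fibrewise computation shows that $V\pi_0(\uA)\Omega_0$ is the \emph{constant} function $z\mapsto\qpi(\uA)\qOmega$, so the closure of $V\pi_0(\uafk_\mathrm{conv})\Omega_0$ is only $\cbb\,\Omega_\zfk\otimes\qhcal$, a proper subspace of $L^2(\zcal,\nu,\qhcal)$ whenever $\dim\hcal_\zfk>1$ (which is always the case for invariant limit states, where $\hcal_\zfk\isom L^2(\zcal,\nu)$ is even nonseparable). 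Hence ``take $\uC=\idop$'' does not yield density. The span of the vectors $\pi_0(\uA\,\uC)\Omega_0$ \emph{is} dense, but only because both entries vary: $\pi_0(\zfk(\uafk))\Omega_\zfk$ is dense in $\hcal_\zfk$ by construction, and $\qpi(\uafk_\mathrm{conv})\qOmega$ is dense in $\qhcal$ by the remark after Def.~\ref{def:convLimit} applied to the multiplicative state $\quomega$. Since the proposition only asserts the formula (which you verify directly) and not that the formula determines $V$, this slip is harmless, but the justification should be repaired as indicated.
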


\begin{proof}
We use notation as in the proof of Thm.~\ref{thm:convergentImpliesUnique}.
Let $V_z:\hcal_z \to \qhcal$ be the unitaries constructed there. 
Then, $z \mapsto V_z\st$ is a measurable family of operators.
Namely, for any $\uA \in \uafk_\mathrm{conv}$, we find
\begin{equation}
 V_z\st \qpi(\uA) \qOmega = \pi_z(\uA) \Omega_z
\end{equation}
which is in $\Gamma$; hence measurability is checked on the
fundamental family (cf.~Lemma~\ref{lem:measOnGamma}). So the operator 
\begin{equation}
  V\st := \int_\zcal^{\oplus,\Gamma} d\nu(z) \, V_z\st
\end{equation}
is well-defined. Domain and range of $V\st$ are both uniformly separable, see
Thm.~\ref{thm:convergentImpliesUnique}. Thus also the adjoint of $V\st$,
denoted as $V$, is decomposable with $V = \int_\zcal^{\Gamma,\oplus} d\nu(z)\, V_z $. 
It is then clear that $V$ is unitary. Also, we have for $\uA \in \uafk_\mathrm{conv}$ and
$\uC \in \zfk(\uafk)$,
\begin{multline}
  V \pi_0(\uA\, \uC) \Omega_0 
  = V \int_\zcal^\Gamma d\nu(z) \, \pi_z(\uC) \pi_z(\uA) \Omega_z
  = \int_\zcal^\oplus d\nu(z)\, \pi_z(\uC) V_z \pi_z(\uA) \Omega_z
  \\
  = \int_\zcal^\oplus d\nu(z)\, \pi_z(\uC) \qpi(\uA) \qOmega
  = (\pi_0(\uC)  \Omega_\zfk) \otimes ( \qpi(\uA) \qOmega).
\end{multline}
As a direct consequence of the discussion following
Eq.~\eqref{eqn:uniqueUnitary}, the $V_z$ have all the properties 
required in Def.~\ref{def:factorLimit} regarding vacuum vector, symmetries, and
local algebras.
\end{proof}

\section{Dilation covariance in the limit} \label{sec:dilations}

Our next aim is to analyze the structure of dilation symmetries in the limit
theory. To that end, we consider a scaling limit state $\uomega_0$ which
is invariant under $\udelta_\mu$. 
As shown in~\cite[sec.\ 2]{BDM:scaling_fields}, the associated limit
theory is covariant with respect to a strongly continuous unitary representation
$g \in \ugcal \mapsto U_0(g)$ of the extended symmetry group $\ugcal$,
including both Poincar\'e symmetries and dilations. Our interest is how the
dilation unitaries $U_0(\mu)$ relate to decomposition theory in
Sec.~\ref{sec:decompTheory}, and how they behave in the more specific
situations analyzed in Sec.~\ref{sec:uniqueFactorLim}. We
will consider three cases of decreasing scope: first, the general situation; 
second, the factorizing scaling limit; third, the convergent scaling limit.

We first consider a general theory as in Sec.~\ref{sec:decompTheory}, and
analyze the decomposition of the dilation operators corresponding to the direct
integral decomposition of $\hcal_0$ introduced in Thm.~\ref{thm:hilbertDecomp}. To
this end, we first note that $\udelta_\mu$ leaves $\zfk(\uafk)$ invariant;
thus we have a representation of the dilations $U_\zfk(\mu) := U_0(\mu)
\restrict \hcal_\zfk$ on $\hcal_\zfk$. Identifying $\hcal_\zfk$ with $L^2(\zcal,\nu)$
as before, the $U_\zfk(\mu)$ act on a function space.
This action, and its extension to the entire Hilbert space, can be described in more
detail.

\begin{proposition}\label{prop:dilationDecomp}
Let $\uomega_0$ be an invariant limit state. There exist an
action of the dilations through homeomorphisms $z \mapsto \mu.z$ of $\zcal$, 
and unitary operators $U_z(\mu) : \hcal_z \to
\hcal_{\mu. z}$ for $\mu \in \rbb_+$, $z \in \zcal$, such that:
\begin{enumerate}
\localitemlabels
\item \label{it:measureInvar}the measure $\nu$ is invariant under
the transformation $z\mapsto \mu.z$;
\item \label{it:dilationHomeo}$\big(U_\zfk(\mu)\chi\big)(z) =
\chi(\mu^{-1}.z)$ for all $\chi \in L^2(\zcal,\nu)$, as an equation in the
$L^2$ sense;
\item \label{it:UzProperties}$U_z(1) = \idop$, $U_z(\mu)\st = U_{\mu.
z}(\mu^{-1})$, 
$U_{\mu.z}(\mu')U_z(\mu) = U_z(\mu'\mu)$ for all $z \in \zcal$,
$\mu,\mu' \in \rbb_+$;
\item \label{it:UzPoincare}
$U_z(\mu) U_z(x,\Lambda) = U_{\mu.z}(\mu x,\Lambda)U_z(\mu)$
for all $z \in \zcal$, $(x,\Lambda) \in \poincare$;
\item \label{it:dilationDecomp}
$U_0(\mu)\chi = \int_\zcal^\Gamma d\nu(z)\,U_{\mu^{-1}.z}(\mu)
\chi(\mu^{-1}.z)$ for all $\chi \in \int_\zcal^\Gamma d\nu(z)\,\hcal_z$.
\end{enumerate}
\end{proposition}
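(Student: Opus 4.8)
The plan is to build everything from the dilation implementer $U_0(\mu)$, exploiting that the conditional expectation $\omega$ commutes with dilations. Since $\uomega_0$ is invariant, $U_0(\mu)$ is the unitary on $\hcal_0$ fixed by $U_0(\mu)\pi_0(\uA)\Omega_0 = \pi_0(\ualpha_\mu\uA)\Omega_0$, with $U_0(\mu)\Omega_0=\Omega_0$. As $\ualpha_\mu$ leaves $\zfk(\uafk)$ invariant, $U_0(\mu)$ preserves $\hcal_\zfk$ and restricts to $U_\zfk(\mu)$. First I would produce the homeomorphism: conjugation $\beta_\mu:=\ad U_0(\mu)$ is an automorphism of the commutative \cistar{} algebra $\pi_0(\zfk(\uafk))\isom\ccal(\zcal)$ satisfying $\beta_\mu(\pi_0(\uC))=\pi_0(\ualpha_\mu\uC)$, so by Gelfand duality it is induced by a homeomorphism $z\mapsto\mu.z$ of $\zcal$, fixed by $\rho_z\circ\beta_\mu=\rho_{\mu^{-1}.z}$; since $\beta_\mu\beta_{\mu'}=\beta_{\mu\mu'}$ this is a left action of $\rbb_+$. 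Transcribing through the identification $\hcal_\zfk\isom L^2(\zcal,\nu)$ yields property~\ref{it:dilationHomeo}, and the unitarity of $U_\zfk(\mu)$ then forces the image of $\nu$ under $z\mapsto\mu.z$ to equal $\nu$, which is property~\ref{it:measureInvar}.

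The crux is to establish the \emph{pointwise} relation
\[
  \uomega_z\circ\ualpha_\mu = \uomega_{\mu^{-1}.z}\qquad\text{for every }z\in\zcal.
\]
This follows from $\omega(\ualpha_\mu\uA)=\ualpha_\mu(\omega(\uA))$, a direct check at each scale $\lambda$, whence $\uomega_z(\ualpha_\mu\uA)=\rho_z(\beta_\mu\pi_0(\omega(\uA)))=\rho_{\mu^{-1}.z}(\pi_0(\omega(\uA)))=\uomega_{\mu^{-1}.z}(\uA)$. The key point is that this holds for \emph{all} $z$, with no exceptional null set. I can then define the fibre maps by
\[
  U_z(\mu)\,\pi_z(\uA)\Omega_z := \pi_{\mu.z}(\ualpha_\mu\uA)\,\Omega_{\mu.z}.
\]
Applying the displayed relation with $\mu.z$ in place of $z$ (i.e. $\uomega_{\mu.z}\circ\ualpha_\mu=\uomega_z$) shows that $U_z(\mu)$ is isometric from $\hcal_z$ into $\hcal_{\mu.z}$; surjectivity holds because $\ualpha_\mu$ is an automorphism of $\uafk$, so $U_z(\mu)$ is unitary.

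Properties~\ref{it:UzProperties} and~\ref{it:UzPoincare} are then routine verifications on the dense set $\pi_z(\uafk)\Omega_z$. For~\ref{it:UzProperties} I use $\ualpha_1=\id$, $\ualpha_{\mu^{-1}}\ualpha_\mu=\id$, and $\ualpha_{\mu'}\ualpha_\mu=\ualpha_{\mu'\mu}$ together with $\mu'.(\mu.z)=(\mu'\mu).z$; for~\ref{it:UzPoincare} everything reduces to the commutation $\ualpha_\mu\ualpha_{(x,\Lambda)}=\ualpha_{(\mu x,\Lambda)}\ualpha_\mu$ in $\ugcal$, which is exactly the rescaling of translations built into the definition of $\ualpha$ and is checked scale by scale. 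Finally, for~\ref{it:dilationDecomp} I would evaluate both sides on a vector $\chi=\int_\zcal^\Gamma d\nu(z)\,\pi_z(\uA)\Omega_z$ from the fundamental family: the left-hand side is the section $z\mapsto\pi_z(\ualpha_\mu\uA)\Omega_z$ representing $U_0(\mu)\pi_0(\uA)\Omega_0=\pi_0(\ualpha_\mu\uA)\Omega_0$, while the right-hand side gives $U_{\mu^{-1}.z}(\mu)\pi_{\mu^{-1}.z}(\uA)\Omega_{\mu^{-1}.z}=\pi_z(\ualpha_\mu\uA)\Omega_z$ by definition of the fibre maps; as such $\chi$ are total and $z\mapsto\mu^{-1}.z$ is $\nu$-preserving, the identity extends to all of $\hcal_0$.

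I expect the most delicate step to be making the right-hand integral in~\ref{it:dilationDecomp} legitimate within the nonseparable direct-integral framework of the appendix, i.e.\ checking that $z\mapsto U_{\mu^{-1}.z}(\mu)\chi(\mu^{-1}.z)$ is a measurable section. This is handled by verifying measurability on the fundamental family $\Gamma$ via Lemma~\ref{lem:measOnGamma}, where the section is precisely $z\mapsto\pi_z(\ualpha_\mu\uA)\Omega_z\in\Gamma$, and then extending by $L^\infty(\zcal)\cdot\Gamma$; the measure invariance~\ref{it:measureInvar} guarantees that the substitution $z\mapsto\mu^{-1}.z$ preserves the relevant $L^2$ norms. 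By contrast, the null-set pitfalls that usually obstruct such fibrewise constructions are avoided here exactly because the master relation $\uomega_z\circ\ualpha_\mu=\uomega_{\mu^{-1}.z}$ holds for every single $z$.
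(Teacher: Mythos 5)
Your proposal is correct and essentially reproduces the paper's own proof: the homeomorphism is obtained by Gelfand duality from $\ad U_0(\mu)$ acting on $\pi_0(\zfk(\uafk))$, measure invariance comes from invariance of $\uomega_0$, the pointwise (no null set) relation $\uomega_{\mu.z}\circ\udelta_\mu=\uomega_z$ is first read off on the center and then extended to all of $\uafk$ because the projector $\omega$ commutes with dilations, the fibre unitaries are defined on the GNS vectors exactly as in the paper, and item~\ref{it:dilationDecomp} is verified on $\Gamma$ and extended by density using measure invariance. The one cosmetic imprecision is your appeal to Lemma~\ref{lem:measOnGamma}, which is stated for fibre-preserving operator fields while your map mixes fibres; however, the technique you describe (check on $\Gamma$, extend through $L^\infty(\zcal)\cdot\Gamma$ and a.e.\ limits, using that the measure-preserving homeomorphism takes null sets to null sets) is precisely the approximation argument the paper uses, so this does not affect correctness.
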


\begin{proof}
Recalling that $\zcal$ is the spectrum of the commutative
\cistar{} algebra $\pi_0(\zfk(\uafk))$, we define the homeomorphism $z \mapsto
\mu.z$ as the one induced by the automorphism $\ad U_\zfk(\mu^{-1})$ of $\pi_0(\zfk(\uafk))$. 
For $\uC \in \zfk(\uafk)$, we know that $\pi_0(\uC)\Omega_0 \in \hcal_\zfk$
corresponds to the function $\chi_{\uC}(z)= \uomega_z(\uC)$, precisely
the image of $\pi_0(\uC)$ in the Gelfand isomorphism. Applying $U_0(\mu^{-1})$
to this vector, one obtains 
\begin{equation} \label{eqn:chiUC}
 (U_\zfk(\mu^{-1}) \chi_{\uC})(z)=\chi_{\uC}(\mu.z);
\end{equation}
thus~\ref{it:dilationHomeo} holds for all $\chi \in \ccal(\zcal)$.
Taking the scalar product of Eq.~\eqref{eqn:chiUC} with $\Omega_0$,
one sees that $\int_\zcal d\nu(z) \chi(z) = \int_\zcal d\nu(z)
\chi(\mu.z)$ for all $\mu$ and $\chi \in \ccal(\zcal)$, so \ref{it:measureInvar} follows.  
Now for general $\chi \in L^2(\zcal,\nu)$, statement~\ref{it:dilationHomeo}
follows by density.

Translating the action of $z \mapsto \mu.z$ to the level of algebras, it is
easy to see that
\begin{equation}
\uomega_{\mu.z} \circ \udelta_\mu \restrict \zfk(\uafk)
= \uomega_z \restrict \zfk(\uafk).
\end{equation}
Since however $\udelta_\mu$ commutes with the projector
$\omega:\uafk\to\zfk(\uafk)$, the same equation holds on all of $\uafk$.
Therefore, the maps $U_z(\mu) :\hcal_z \to \hcal_{\mu.z}$ given by
\begin{equation}
U_z(\mu)\pi_z(\uA)\Omega_z := \pi_{\mu. z}(\udelta_\mu(\uA))\Omega_{\mu. z}
\end{equation}
are well-defined and unitary. The properties of $U_z(\mu)$ listed
in~\ref{it:UzProperties} and~\ref{it:UzPoincare} then follow from this definition by easy
computations.

Now for~\ref{it:dilationDecomp}: As before, we identify $\hcal_0$ with
$\int_\zcal^\Gamma d\nu(z)\,\hcal_z$. Then we have, for all $\uA \in \uafk$,
\begin{equation}
\begin{split}
U_0(\mu)\pi_0(\uA)\Omega_0 &= \pi_0(\udelta_\mu(\uA))\Omega_0 
   = \int_\zcal^\Gamma d\nu(z)\,\pi_z(\udelta_\mu(\uA))\Omega_z\\
   &= \int_\zcal^\Gamma
    d\nu(z)\, U_{\mu^{-1}.z}(\mu)\pi_{\mu^{-1}. z}(\uA)\Omega_{\mu^{-1}. z}.
\end{split}
\end{equation}
Given now a vector $\chi \in \int_\zcal^\Gamma d\nu(z)\,\hcal_z$, we can find a
sequence $(\pi_0(\uA_n)\Omega_0)_{n \in \nbb}$ converging in norm to $\chi$. 
Passing to a subsequence, we can also assume that $\pi_z(\uA_n)\Omega_z \to
\chi(z)$ in norm for almost every $z \in \zcal$. Hence, using the dominated
convergence theorem and \ref{it:measureInvar}, we see that
\begin{equation*}
\lim_{n\to+\infty}\int_\zcal^\Gamma d\nu(z)\, U_{\mu^{-1}.z}(\mu)
\pi_{\mu^{-1}.z}(\uA_n)\Omega_{\mu^{-1}.z} = \int_\zcal^\Gamma
d\nu(z)\, U_{\mu^{-1}. z}(\mu)\chi(\mu^{-1}. z),
\end{equation*}
which gives~\ref{it:dilationDecomp}.
\end{proof} 

Thus dilations act between the fibers of the direct integral decomposition by
unitaries $U_z(\mu)$, which depend on the fiber. They fulfill the cocycle-type
composition rule $U_{\mu.z}(\mu')U_z(\mu) = U_z(\mu'\mu)$ that one would
naively expect; cf.~also the theory of equivariant disintegrations
for separable $C\st$ algebras \cite[Ch.~X §3]{Tak:TOA2}.

We shall now further restrict to the situation of a factorizing scaling limit,
as in Def.~\ref{def:factorLimit}, in which the fiber spaces $\hcal_z$ are all identified with a unique space
$\qhcal$. By this identification, we can regard the unitaries $U_z(\mu)$ as
endomorphisms $\hat U_z(\mu)$ of $\qhcal$. Our result for these endomorphisms
is as follows.

\begin{proposition}
Let $\uomega_0$ be an invariant limit state. Suppose that the scaling
limit of $\uafk$ is factorizing, and let $V=\int d\nu(z) V_z$ be the unitary of
Def.~\ref{def:factorLimit}. Then, the unitary operators
\begin{equation*}
    \hat U_z(\mu): \qhcal \to \qhcal, \quad
    \hat U_z(\mu) = V_{\mu.z} U_z(\mu) V_z\st
\end{equation*}
fulfill for any $z \in \zcal$, $\mu,\mu' \in \rbb_+$
the relations
\begin{equation*}
\hat U_z(1) = \idop, 
\quad 
\hat U_z(\mu)\st = \hat U_{\mu.z}(\mu^{-1}), 
\quad
\hat U_{\mu.z}(\mu') \hat U_z(\mu) = \hat U_z(\mu'\mu).
\end{equation*}
If $\hcal_0$ is uniformly separable, one has
\begin{equation*}
  V U_0(\mu) V\st = (U_\zfk(\mu) \otimes \idop) \int_\zcal^{\oplus} d\nu(z)
  \hat U_z(\mu);
\end{equation*}
and for every $\mu>0$, there is a null set $\ncal
\subset\zcal$ such that for any $(x,\Lambda)\in\poincare$ and any $z \in
\zcal\backslash\ncal$,
\begin{equation*}
   \hat U_z(\mu) \qU(x,\Lambda) = \qU(\mu x,\Lambda) \hat U_z(\mu).
\end{equation*}
\end{proposition}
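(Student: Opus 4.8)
The plan is to establish the three assertions in turn, each resting on Proposition~\ref{prop:dilationDecomp} together with the factorization data $V=\int_\zcal^{\Gamma,\oplus}d\nu(z)\,V_z$.

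First, the cocycle relations are purely formal consequences of the corresponding relations for the $U_z(\mu)$ in Proposition~\ref{prop:dilationDecomp}\ref{it:UzProperties}, the unitarity of the $V_z$, and the action relations $1.z=z$, $\mu^{-1}.(\mu.z)=z$, $\mu'.(\mu.z)=(\mu'\mu).z$ for the $\zcal$-action. Concretely, $\hat U_z(1)=V_z U_z(1) V_z\st=\idop$; for the adjoint one computes $\hat U_z(\mu)\st = V_z U_{\mu.z}(\mu^{-1}) V_{\mu.z}\st$ and recognizes the right-hand side as $\hat U_{\mu.z}(\mu^{-1})$ after rewriting $\mu^{-1}.(\mu.z)=z$; and for the composition
\begin{equation*}
\hat U_{\mu.z}(\mu')\hat U_z(\mu) = V_{(\mu'\mu).z} U_{\mu.z}(\mu') U_z(\mu) V_z\st = V_{(\mu'\mu).z} U_z(\mu'\mu) V_z\st = \hat U_z(\mu'\mu),
\end{equation*}
where the middle equality is the cocycle rule for $U_z$. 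No separability is needed for this part.

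For the decomposition formula, the one delicate point is that $z\mapsto\hat U_z(\mu)$ must be a measurable field, so that $\int_\zcal^\oplus d\nu(z)\,\hat U_z(\mu)$ is a well-defined decomposable operator. I would test measurability on the fundamental family $V\Gamma=\{z\mapsto V_z\pi_z(\uA)\Omega_z:\uA\in\uafk\}$ of $L^2(\zcal,\nu,\qhcal)$, using (as in Lemma~\ref{lem:measOnGamma}) that an operator field need only be checked there. The key computation is
\begin{equation*}
\hat U_z(\mu)\,\big(V_z\pi_z(\uA)\Omega_z\big) = V_{\mu.z}\,U_z(\mu)\pi_z(\uA)\Omega_z = V_{\mu.z}\,\pi_{\mu.z}(\udelta_\mu(\uA))\Omega_{\mu.z},
\end{equation*}
whose right-hand side is the measurable section $w\mapsto V_w\pi_w(\udelta_\mu(\uA))\Omega_w$ (measurable since $\udelta_\mu(\uA)\in\uafk$) precomposed with the measure-preserving homeomorphism $z\mapsto\mu.z$ of Proposition~\ref{prop:dilationDecomp}\ref{it:measureInvar}, hence measurable. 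With the field established, I would verify the identity by evaluating both operators on an arbitrary $\xi\in L^2(\zcal,\nu,\qhcal)$: writing $V\st\xi=\int_\zcal^\Gamma d\nu(z)\,V_z\st\xi(z)$ and applying Proposition~\ref{prop:dilationDecomp}\ref{it:dilationDecomp}, one finds $(VU_0(\mu)V\st\xi)(z)=\hat U_{\mu^{-1}.z}(\mu)\,\xi(\mu^{-1}.z)$, while the right-hand side $(U_\zfk(\mu)\otimes\idop)\int_\zcal^\oplus d\nu(z)\,\hat U_z(\mu)$ produces the same function, since by Proposition~\ref{prop:dilationDecomp}\ref{it:dilationHomeo} the factor $U_\zfk(\mu)\otimes\idop$ shifts the argument by $z\mapsto\mu^{-1}.z$. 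Uniform separability of $\hcal_0$, hence of $L^2(\zcal,\nu,\qhcal)$, is what legitimizes both the measurable-field formalism and the fiberwise manipulations here.

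For the Poincar\'e intertwining, I would first extract from $VU_0(g)V\st=\idop\otimes\qU(g)$ (Definition~\ref{def:factorLimit}), together with the decomposability of $U_0(g)$ (Proposition~\ref{pro:poincareDecomp}), the fiberwise relation $V_z U_z(x,\Lambda) V_z\st=\qU(x,\Lambda)$, valid off a $g$-dependent null set; using uniform separability, a countable dense subset of $\poincare$ and continuity collapse these into a single null set $\ncal_0$ off which the relation holds for all $(x,\Lambda)$. Then, for $z\notin\ncal_0$ with also $\mu.z\notin\ncal_0$, the chain
\begin{equation*}
\hat U_z(\mu)\qU(x,\Lambda) = V_{\mu.z} U_z(\mu) U_z(x,\Lambda) V_z\st = V_{\mu.z} U_{\mu.z}(\mu x,\Lambda) U_z(\mu) V_z\st = \qU(\mu x,\Lambda)\hat U_z(\mu)
\end{equation*}
gives the claim, the middle step being Proposition~\ref{prop:dilationDecomp}\ref{it:UzPoincare}. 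Since $z\mapsto\mu.z$ preserves $\nu$, the set $\ncal:=\ncal_0\cup\mu^{-1}.\ncal_0$ is again null and contains all exceptional points, yielding the asserted null set. The main obstacle throughout is this measurability issue in the non-separable setting: because $\hat U_z(\mu)$ couples the fibers over $z$ and $\mu.z$ through $V_{\mu.z}$ and the fiber-changing $U_z(\mu)$, measurability cannot be read off directly, and the argument hinges on reducing it to the sections $V_w\pi_w(\cdot)\Omega_w$ via the measure-preserving $\zcal$-action, with uniform separability available to pass between integral identities and their almost-everywhere fiberwise counterparts.
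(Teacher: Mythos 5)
Your proof is correct. For the cocycle relations and for the factorization formula $V U_0(\mu)V\st = (U_\zfk(\mu)\otimes\idop)\int_\zcal^{\oplus} d\nu(z)\,\hat U_z(\mu)$ you follow essentially the same path as the paper: the algebraic relations are read off from Proposition~\ref{prop:dilationDecomp}~\ref{it:UzProperties}, and the factorization is obtained by computing $(VU_0(\mu)V\st\xi)(z)=\hat U_{\mu^{-1}.z}(\mu)\,\xi(\mu^{-1}.z)$ fiberwise via Proposition~\ref{prop:dilationDecomp}~\ref{it:dilationDecomp} and the decomposability of $V\st$, and then recognizing the argument shift $z\mapsto\mu^{-1}.z$ as the action of $U_\zfk(\mu)\otimes\idop$ per Proposition~\ref{prop:dilationDecomp}~\ref{it:dilationHomeo}. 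Your additional step of verifying measurability of the field $z\mapsto\hat U_z(\mu)$ on the fundamental family $V\Gamma$, using invariance of $\nu$ under the homeomorphism $z\mapsto\mu.z$, addresses a point the paper leaves implicit (there the field arises as the fiber decomposition of the manifestly bounded operator $(U_\zfk(\mu)^{-1}\otimes\idop)VU_0(\mu)V\st$), so making it explicit is sound and slightly more careful. Where you genuinely diverge is the Poincar\'e intertwining: the paper computes $VU_0(\mu,x,\Lambda)V\st$ in two ways, using the group law $U_0(\mu)U_0(x,\Lambda)=U_0(\mu x,\Lambda)U_0(\mu)$ together with the already-established factorizations of $U_0(\mu)$ and $U_0(x,\Lambda)$, and then invokes uniqueness of decompositions (valid under uniform separability) to conclude that the integrands $\hat U_z(\mu)\qU(x,\Lambda)$ and $\qU(\mu x,\Lambda)\hat U_z(\mu)$ agree a.e., the null set being made uniform over a countable dense subset of $\poincare$ and then over the whole group by continuity. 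You instead first extract the fiberwise relation $V_zU_z(x,\Lambda)V_z\st=\qU(x,\Lambda)$ off a single null set $\ncal_0$ (in effect reproducing the regularization argument of Theorem~\ref{thm:factorToUnique}), and then argue purely algebraically in each fiber via Proposition~\ref{prop:dilationDecomp}~\ref{it:UzPoincare}, at the price of enlarging the null set to $\ncal_0\cup\mu^{-1}.\ncal_0$ --- which is again null precisely because the action preserves $\nu$. Both arguments consume the same ingredients (uniqueness of decompositions under uniform separability, countable density plus strong continuity of the representations); the paper's version is shorter because it never needs the fiberwise Poincar\'e covariance of the $V_z$, whereas yours makes the origin of the exceptional null set more transparent and shows explicitly how it behaves under the dilation action on $\zcal$.
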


\begin{proof}
It is clear that $\hat U_z(\mu)$, defined as above, are unitary, and their
composition relations follow from Prop.~\ref{prop:dilationDecomp}
\ref{it:UzProperties}. Now let $\hcal_0$ be uniformly separable. Then, together with
$V$, also $V\st$ is decomposable. By a short computation, one finds for any
$\chi \in L^2(\zcal,\nu,\qhcal)$:
\begin{equation} \label{eqn:uZeroTransf}
  V U_0(\mu) V\st \chi =
  \int_\zcal^{\oplus} d\nu(z) \, V_z U_{\mu^{-1}.z}(\mu) V_{\mu^{-1}.z}\st \,
  \chi(\mu^{-1}.z).
\end{equation}
Now, following Prop.~\ref{prop:dilationDecomp} \ref{it:dilationHomeo}, the
operator $U_\zfk(\mu) \otimes \idop$ acts on vectors $\chi'$ via
\begin{equation}
  (U_\zfk(\mu) \otimes \idop ) \chi' =
  \int_\zcal^{\oplus} d\nu(z) \, \chi'(\mu^{-1}.z). 
\end{equation}
Together with Eq.~\eqref{eqn:uZeroTransf}, this entails
\begin{equation} \label{eqn:dilFactor}
  V U_0(\mu) V\st \chi =
  (U_\zfk(\mu) \otimes \idop )  \int_\zcal^{\oplus} d\nu(z) \, V_{\mu.z}
  U_{z}(\mu) V_{z}\st \, \chi(z),
\end{equation}
of which the second assertion follows. Further, one computes from $V
U_0(x,\Lambda) V\st = \idop \otimes \qU(x,\Lambda)$ and from
Eq.~\eqref{eqn:dilFactor} that 
\begin{equation}
  \int_\zcal^\oplus d\nu(z) \hat U_z(\mu) \qU(x,\Lambda)
  = V U_0(\mu,x,\Lambda) V\st
  = \int_\zcal^\oplus d\nu(z) \qU(\mu x,\Lambda) \hat U_z(\mu).
\end{equation}
Uniform separability implies that the integrands agree except on a null
set. This null set may depend on $x,\Lambda$. However, we can choose it uniformly on
a countable dense set of the group, and hence, by continuity, uniformly for all
group elements.
\end{proof}

This shows that the dilation symmetries factorize into a central part,
$U_\zfk(\mu)\otimes 1$, which ``mixes'' the fibers of the direct integral,
and a decomposable part, $\int_\zcal^{\oplus} d\nu(z) \hat U_z(\mu)$. The
unitaries $\hat U_z(\mu)$ will generally depend on $z$; and like the
$U_z(\mu)$ before, they do not necessarily fulfill a group relation, but a
cocycle equation
\begin{equation} \label{eqn:cocycle}
   \hat U_{\mu.z}(\mu') \hat U_z(\mu) = \hat U_z(\mu'\mu),
\end{equation}
as shown above, where $\mu.z$ can in general not be replaced with $z$.
However, using the commutation relations with the other parts of the
symmetry group, one sees that $\hat U_{z}(\mu') \hat U_z(\mu)  \hat
U_z(\mu'\mu)\st$ is (a.~e.) an inner symmetry of the theory $\qafk$. 
On the other hand, this representation property ``up to an inner symmetry''
cannot be avoided if such symmetries exist in the theory at all; 
for they might be multiplied to $V_z$ in a virtually arbitrary fashion at any
point $z$. In this respect, we encounter a similar situation with respect to dilation symmetries as Buchholz and Verch
\cite{BucVer:scaling_algebras}. In the present context, however, it
seems more transparent how this cocycle arises.

Under somewhat stricter assumptions, we can prove a stronger result that avoids
the ambiguities discussed above. Let us consider the case of a \emph{convergent}
scaling limit, per Def.~\ref{def:convLimit}. In this case, we shall see that the $U_z(\mu)$ can
actually be chosen independent of $z$, and yield a group representation in
the usual sense.

\begin{proposition} \label{pro:ConvergentDilation}
Let $\afk$ have a convergent scaling limit, and let $\quomega$ be a
multiplicative limit state with separable representation space $\qhcal$. Then
the Poincaré group representation $\qU$ on $\qhcal$ extends to a representation
of the extended symmetry group $\ugcal$.
For any invariant limit state $\uomega_0$ with associated representation $U_0$
of $\ugcal$, one has
\begin{equation*}
V U_0(\mu) V\st = (U_0(\mu)\restrict \hcal_\zfk) \otimes \qU(\mu),
\end{equation*}
where $V$ is the unitary introduced in Proposition \ref{pro:convToFactorizing}.
\end{proposition}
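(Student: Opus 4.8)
The plan is to first build the dilation unitaries on $\qhcal$, and then check the factorization formula on a total set of vectors. The decisive observation — and the only genuinely new input — is that, although the multiplicative limit state $\quomega$ is \emph{not} dilation invariant in general, its restriction to $\uafk_\mathrm{conv}$ \emph{is}. Indeed, for $\uA \in \uafk_\mathrm{conv}$ one has $(\udelta_\mu \uA)_\lambda = \uA_{\mu\lambda}$, so $\omega\big((\udelta_\mu \uA)_\lambda\big) = \omega(\uA_{\mu\lambda})$ has the same limit as $\lambda \to 0$ as does $\omega(\uA_\lambda)$; hence $\quomega(\udelta_\mu \uA) = \quomega(\uA)$. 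Since $\uafk_\mathrm{conv}$ is a $\udelta_\mu$-invariant \cistar{} subalgebra, applying this to $\uA\st\uA$ gives $\quomega\big(\udelta_\mu(\uA\st\uA)\big) = \quomega(\uA\st\uA)$ for all $\uA \in \uafk_\mathrm{conv}$.

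Using this, I would define $\qU(\mu) \qpi(\uA)\qOmega := \qpi(\udelta_\mu \uA)\qOmega$ for $\uA \in \uafk_\mathrm{conv}$ (noting $\udelta_\mu\uA \in \uafk_\mathrm{conv}$ by $\ualpha$-invariance). The invariance just established shows $\|\qpi(\udelta_\mu\uA)\qOmega\|^2 = \quomega(\udelta_\mu(\uA\st\uA)) = \quomega(\uA\st\uA) = \|\qpi(\uA)\qOmega\|^2$, so the map is isometric; as $\udelta_\mu$ is an automorphism of $\uafk_\mathrm{conv}$ and $\qpi(\uafk_\mathrm{conv})\qOmega$ is dense in $\qhcal$ (Def.~\ref{def:convLimit}), it has dense range and extends to a unitary. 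Because $\qU(\mu)\qpi(\uA)\qOmega = \qpi(\ualpha_{(\mu,0,\idop)}\uA)\qOmega$ has exactly the form of the defining relation $\qU(x,\Lambda)\qpi(\uA)\qOmega = \qpi(\ualpha_{(x,\Lambda)}\uA)\qOmega$ of the Poincar\'e unitaries, and $g \mapsto \ualpha_g$ is a representation of $\ugcal$, the group relations of $\ugcal$ transfer to the operators $\qU(g)$, $g \in \ugcal$, since they all intertwine $\qpi$ with $\ualpha_g$ on the dense set. Strong continuity in $\mu$ follows from norm continuity of $\mu \mapsto \udelta_\mu \uA$ on this set together with uniform boundedness. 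This yields the asserted extension of $\qU$ to $\ugcal$.

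For the factorization formula I would verify $V U_0(\mu) = \big((U_0(\mu)\restrict\hcal_\zfk)\otimes \qU(\mu)\big) V$ on the total set of vectors $\pi_0(\uA\,\uC)\Omega_0$ with $\uA \in \uafk_\mathrm{conv}$, $\uC \in \zfk(\uafk)$. These are total because $V$ sends them to $\pi_0(\uC)\Omega_\zfk \otimes \qpi(\uA)\qOmega$ (Prop.~\ref{pro:convToFactorizing}), where $\qpi(\uafk_\mathrm{conv})\qOmega$ is dense in $\qhcal$ and $\uC \mapsto (z \mapsto \uomega_z(\uC))$ is onto $\ccal(\zcal)$, dense in $L^2(\zcal,\nu) \isom \hcal_\zfk$. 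On the left,
\[
  U_0(\mu)\pi_0(\uA\,\uC)\Omega_0 = \pi_0(\udelta_\mu\uA \cdot \udelta_\mu\uC)\Omega_0,
\]
with $\udelta_\mu\uA \in \uafk_\mathrm{conv}$ and $\udelta_\mu\uC \in \zfk(\uafk)$ (dilations preserve the center); applying Prop.~\ref{pro:convToFactorizing} turns this into $\pi_0(\udelta_\mu\uC)\Omega_\zfk \otimes \qpi(\udelta_\mu\uA)\qOmega$. On the right, the same property of $V$ gives $(U_0(\mu)\restrict\hcal_\zfk)\big(\pi_0(\uC)\Omega_\zfk\big) \otimes \qU(\mu)\qpi(\uA)\qOmega$. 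The second tensor factors agree by the definition of $\qU(\mu)$, and the first agree because $U_0(\mu)$ preserves $\hcal_\zfk = \clos(\pi_0(\zfk(\uafk))\Omega_0)$ (as $U_0(\mu)\pi_0(\uC)\Omega_0 = \pi_0(\udelta_\mu\uC)\Omega_0 \in \hcal_\zfk$) and hence $(U_0(\mu)\restrict\hcal_\zfk)\pi_0(\uC)\Omega_\zfk = \pi_0(\udelta_\mu\uC)\Omega_\zfk$. Multiplying by $V\st$ on the right gives the claim.

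The main obstacle is the first step: extracting dilation invariance of $\quomega$ on $\uafk_\mathrm{conv}$ from convergence, which is exactly what makes $\qU(\mu)$ well defined on $\qhcal$. Everything after that is bookkeeping with the tensor decomposition of Prop.~\ref{pro:convToFactorizing}; the only points needing a little care are the totality of the vectors $\pi_0(\uA\,\uC)\Omega_0$ and the correct action of the central part $U_0(\mu)\restrict\hcal_\zfk$ on $\Omega_\zfk$.
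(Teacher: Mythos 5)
Your proposal is correct and follows essentially the same route as the paper: the key point in both is that the convergence property forces $\quomega$ (equivalently, every limit state) to take dilation-invariant values on $\uafk_\mathrm{conv}$, so that $\qU(\mu)\qpi(\uA)\qOmega := \qpi(\udelta_\mu\uA)\qOmega$ extends to a unitary, and the factorization is then checked on the total set of vectors $\pi_0(\uA\,\uC)\Omega_0$ via Proposition~\ref{pro:convToFactorizing}. Your direct computation $\omega\big((\udelta_\mu\uA)_\lambda\big)=\omega(\uA_{\mu\lambda})$ is just an unpacking of the paper's observation that $\quomega\circ\udelta_\mu$ is again a limit state and all limit states agree on $\uafk_\mathrm{conv}$.
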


\begin{proof}
With $\quomega$, also every $\quomega \circ \udelta_\mu$ is a
scaling limit state. 
Thanks to the invariance of $\uafk_\mathrm{conv}$ under dilations,
we thus have for each $\uA \in \uafk_\mathrm{conv}$, 
\begin{equation}
\| \qpi(\udelta_\mu(\uA))\qOmega \|^2 = \quomega\circ\udelta_\mu(\uA^*\uA) =
\quomega(\uA^*\uA) = \| \qpi(\uA)\qOmega \|^2.
\end{equation} 
This yields the existence of a unitary strongly continuous representation
$\mu \mapsto \qU(\mu)$ on $\qhcal$ such that
\begin{equation}
\qU(\mu)\qpi(\uA)\qOmega = \qpi(\udelta_\mu(\uA))\qOmega, \qquad \uA \in \uafk_\mathrm{conv}.
\end{equation}
That also implies
\begin{equation}
\qU(\mu)\qU(x,\Lambda)\qpi(\uA)\qOmega =
\qpi(\ualpha_{\mu,x,\Lambda}(\uA))\qOmega, \qquad \uA \in \uafk_\mathrm{conv},
\end{equation}
which shows that $(\mu,\Lambda,x) \mapsto \qU(\mu)\qU(\Lambda,x)$ is a unitary
representation of $\ugcal$ on $\qhcal$, extending the representation of the
Poincar\'e group.

Now if $V:\hcal_0 \to \hcal_{\zfk}\otimes\qhcal$ is the unitary of
Prop.~\ref{pro:convToFactorizing}, a calculation shows that
\begin{equation}
VU_0(\mu)V\st \big( \pi_0(\uC)\Omega_0 \otimes \qpi(\uA)\qOmega \big)=
\pi_0(\udelta_\mu(\uC))\Omega_0\otimes\qpi(\udelta_\mu(\uA))\qOmega, 
\quad \uC \in \zfk(\uafk), \uA \in \uafk_\mathrm{conv},
\end{equation}
which entails that $VU_0(\mu)V^* = (U_0(\mu)\restrict \hcal_\zfk) \otimes \qU(\mu)$.
\end{proof}

Thus, the limit theory is 
``dilation covariant'' in the usual sense, with a unitary acting on $\qhcal$.
Considering the unitaries $\idop \otimes \qU(g)$, we actually get a unitary
representation in \emph{any} limit theory, even corresponding to
multiplicative states. Only for compatibility with the scaling limit
representation $\pi_0$ it is necessary to consider invariant means, and to take
$U_0(\mu)\restrict \hcal_\zfk$ into account.

\section{Phase space properties} \label{sec:compactness}

In this section, we wish to investigate how the notion of phase space
conditions, specifically the (quite weak) Haag-Swieca compactness condition
\cite{HaaSwi:compactness}, fits into our context, and how it transfers to the
limit theory. An important aspect here is that Haag-Swieca compactness of a
quantum field theory guarantees that the corresponding Hilbert space is
separable; this property transfers to \emph{multiplicative} limit states
in certain circumstances \cite{Buc:phase_space_scaling}. We shall give a
strengthened version of the compactness condition that guarantees
our general limit spaces to be uniformly separable, a property that turned out to be
valuable in the previous sections.

We need some extra structures to that end. First, we consider
``properly rescaled'' vector-valued functions $\uchi: \rbb_+ \to \hcal$.
Specifically, for $\uA \in \uafk$, let $\uA\Omega$ denote the function $\lambda \mapsto
\uA_\lambda\Omega$. We set
\begin{equation}
  \uhcal = \clos \{ \uA \Omega \,|\, \uA \in \uafk \},
\end{equation}
where the closure is taken in the supremum norm $\|\uchi\| =
\sup_\lambda \|\uchi_\lambda\|$. Then $\uhcal$ is a Banach space, in fact a
Banach module over $\zfk(\uafk)$ in a natural way. Given a limit state, we
transfer the limit representation $\pi_0$ to vector-valued functions. To that
end, consider the space $\ccal(\Gamma)$ of $\Gamma$-continuous vector fields, as
defined in the Appendix. We define $\eta_0: \uhcal \to \ccal(\Gamma)$ on a
dense set by
\begin{equation}
  \eta_0 (\uA \Omega ) := \pi_0(\uA) \Omega_0.
\end{equation} 
This is well-defined, since one computes
\begin{multline}
  \|\pi_0(\uA) \Omega_0\|_\infty = 
  \sup_{z \in \zcal} \|\pi_z(\uA) \Omega_z\|
  = \big( \sup_{z \in \zcal} \uomega_z(\uA\st\uA) \big)^{1/2}
  \\
  \leq \| \omega(\uA\st\uA) \|^{1/2} 
  = \big( \sup_{\lambda > 0} \|\uA_\lambda \Omega\|^2 \big)^{1/2}
  = \|\uA \Omega\|.
\end{multline}
That also shows $\|\eta_0\| \leq 1$. Note that $\eta_0$ fulfills
\begin{equation} \label{eqn:etaModuleHom}
  \eta_0( \uC \uchi ) = \pi_0(\uC) \eta_0(\uchi) \quad
  \text{for all } \uC \in \zfk(\uafk), \uchi \in \uhcal,
\end{equation}
this easily being checked for $\uchi = \uA\Omega$. So $\eta_0$ preserves the
module structure in this sense. Further,  $\eta_0 : \uhcal \to
\ccal(\Gamma)$ clearly has dense range.

It is important in our context that $\uhcal$ is left invariant under
multiplication with suitably rescaled functions of the Hamiltonian. More
precisely, we denote these functions as $f(\uH)$ for $f \in \scal(\rbb_+)$;
they are defined as elements of $\ubfk$ by  $f(\uH)_\lambda = f(\lambda H)$,
with norm $\|f(\uH)\|\leq \|f\|_\infty$. They act on $\uhcal$ by pointwise
multiplication. The following lemma generalizes an observation in
\cite{Buc:phase_space_scaling}.

\begin{lemma} \label{lem:fBeta}
  Let $f \in \scal(\rbb_+)$. Then, for each $\uchi \in \uhcal$, we have 
  $f(\uH) \uchi \in \uhcal$. There exists a test function $g \in \scal(\rbb)$
  such that for all $\uA \in \uafk$,
  \begin{equation*}
  	f(\uH) \uA \Omega = \ualpha_g \uA \Omega := \big(\int dt \,g(t)\, \ualpha_t
  	\uA\big) \Omega.
  \end{equation*} 
\end{lemma}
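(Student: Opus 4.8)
The plan is to prove the integral formula first and to deduce the module invariance from it. The key observation is that the properly rescaled function $f(\uH)$ of the Hamiltonian, when applied to $\Omega$, can be rewritten as a smeared rescaled time translation, provided one exploits the spectrum condition of $\afk$. So first I would treat the case $\uchi = \uA\Omega$ with $\uA \in \uafk$. For a pure time translation by $t \in \rbb$, i.e.\ $\ualpha_t = \ualpha_{(1,(t,\vec 0),\idop)}$, the defining action of $\ugcal$ gives $(\ualpha_t\uA)_\lambda = \alpha_{(\lambda t,\vec 0)}(\uA_\lambda) = e^{\i H\lambda t}\,\uA_\lambda\,e^{-\i H\lambda t}$, where $H$ is the generator of time translations in $\afk$. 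Smearing with a test function $g \in \scal(\rbb)$ and applying to $\Omega$, the spectrum condition yields $H\Omega = 0$, so $e^{-\i H\lambda t}\Omega = \Omega$ and therefore
\[
  (\ualpha_g\uA)_\lambda\,\Omega
  = \int dt\, g(t)\, e^{\i H\lambda t}\,\uA_\lambda\,\Omega
  = \tilde g(\lambda H)\,\uA_\lambda\,\Omega,
  \qquad \tilde g(E) := \int dt\, g(t)\, e^{\i Et}.
\]
Thus the asserted formula $f(\uH)\uA\Omega = \ualpha_g\uA\Omega$ holds pointwise in $\lambda$ precisely when $\tilde g$ agrees with $f$ on the spectrum of $\lambda H$.

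The existence of a suitable $g \in \scal(\rbb)$ is where the positivity of energy enters. Since the joint spectrum of the translation generators lies in the closed forward light cone $\bar\vcal_+$, one has $H \geq 0$, hence $\lambda H \geq 0$ for all $\lambda > 0$, and only the values of $\tilde g$ on $[0,\infty)$ are relevant. I would therefore extend $f \in \scal(\rbb_+)$ to a function $\bar f \in \scal(\rbb)$ and set $g$ equal to the inverse Fourier transform of $\bar f$, so that $\tilde g = \bar f \in \scal(\rbb)$ and $\tilde g(\lambda H) = \bar f(\lambda H) = f(\lambda H)$, the last equality holding because $\bar f$ coincides with $f$ on $[0,\infty)$. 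This establishes the integral formula, the second assertion of the lemma.

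For the first assertion, I would first observe that $\ualpha_g\uA = \int dt\, g(t)\,\ualpha_t\uA$ lies in $\uafk$: the map $t \mapsto \ualpha_t\uA$ is norm continuous (by the continuity requirement in the definition of the local scaling algebras, extended to the quasilocal algebra by isometry of the $\ualpha_t$ and density), and $g \in L^1(\rbb)$, so the Bochner integral converges in norm inside the \cistar{} algebra $\uafk$. Hence $f(\uH)\uA\Omega = \ualpha_g\uA\Omega \in \{\uB\Omega \mid \uB \in \uafk\} \subset \uhcal$ for every $\uA \in \uafk$. To pass to an arbitrary $\uchi \in \uhcal$, I would use that $f(\uH)$ acts on $\uhcal$ by pointwise multiplication, $(f(\uH)\uchi)_\lambda = f(\lambda H)\uchi_\lambda$, with the uniform bound $\|f(\uH)\uchi\| \leq \|f\|_\infty\,\|\uchi\|$ coming from $\|f(\lambda H)\| \leq \|f\|_\infty$. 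Since $\{\uA\Omega \mid \uA \in \uafk\}$ is dense in the closed space $\uhcal$ and $f(\uH)$ is bounded and maps this dense set into $\uhcal$, continuity forces $f(\uH)\uchi \in \uhcal$.

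The main obstacle, though a mild one, is the realization of $f$ as a restriction of a Fourier transform: one must verify that the spectrum condition confines the relevant arguments of the Hamiltonian to $[0,\infty)$, so that extending $f$ off $\rbb_+$ causes no loss and produces a genuine Schwartz test function $g \in \scal(\rbb)$. The remaining ingredients --- norm continuity of $t \mapsto \ualpha_t\uA$, convergence of the smearing integral in $\uafk$, the operator-norm bound on $f(\uH)$, and the density--closedness argument --- are routine.
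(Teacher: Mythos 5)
Your proposal is correct and follows essentially the same route as the paper's proof: extend $f$ to a Schwartz function on $\rbb$, take $g$ as its (inverse) Fourier transform, use the spectral calculus of $H$ together with positivity of the energy and the invariance of $\Omega$ to identify $f(\uH)\uA\Omega$ with $\ualpha_g\uA\Omega$, and then pass to general $\uchi\in\uhcal$ by the uniform bound $\|f(\lambda H)\|\leq\|f\|_\infty$ and density. The only remark worth making is cosmetic: $H\Omega=0$ comes from the translation invariance of the vacuum vector (Def.~\ref{def:localnet}(v)) rather than from the spectrum condition itself, which instead enters---as you correctly use elsewhere---to confine the spectral support to $[0,\infty)$ so that the extension of $f$ is immaterial.
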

\begin{proof}
We continue $f$ to a test function $\hat f \in \scal(\rbb)$, and choose $g$ as
the Fourier transform of $\hat f$. One finds by spectral analysis of $H$
that for any $\uA \in \uafk$,
\begin{equation}
   f(\lambda H) \uA_\lambda \Omega = \int_0^\infty \hat f(\lambda E) dP(E)
   \uA_\lambda \Omega = \int dt\, g (t) e^{\imath \lambda H t} \uA_\lambda
   \Omega = (\ualpha_g \uA)_\lambda \Omega .
\end{equation}
This shows that $f(\uH) \uA \Omega$ has the proposed form, and is an element of
$\uhcal$. Since $\|f(\lambda H)\| \leq \|f\|_\infty$ uniformly in $\lambda$,
we may pass to limits in $\uA\Omega$ and obtain that $f(\uH) \uchi \in
\uhcal$ for all $\uchi \in \hcal$.
\end{proof}

As a next step towards phase space conditions, let us explain a notion of
compact maps adapted to our context. To that end, let $\ecal$ be a Banach space
and $\fcal$ a Banach module over the commutative Banach algebra $\rcal$. We say
that a linear map $\psi:\ecal \to \fcal$ is \emph{of uniform rank~1} if it
is of the form $\psi = e(\cdotarg) f$ with $e: \ecal \to \rcal$ linear and
continuous, and $f \in \fcal$. Sums of $n$ such terms are called \emph{of
uniform rank $n$.}\footnote{Note that the ``uniform rank'' is rather an
upper estimate, in the sense that a map of uniform rank $n$ may at the
same time be of uniform rank $n-1$.} 
We say that $\psi$ is \emph{uniformly compact} if it is an infinite sum of
terms of uniform rank 1, $\psi = \sum_{j=0}^\infty e_j(\cdotarg) f_j$,
where the sum converges in the Banach norm. For $\rcal=\cbb$, these definitions reduce to the usual notions
of compact or finite-rank maps.

We are now in the position to consider Haag-Swieca compactness. We fix, once
and for all, an element $\cutoff \in \zfk(\uafk)$ with $\|\cutoff\|\leq 1$,
$\cutoff_\lambda=0$ for $\lambda>1$, and $\cutoff_\lambda=1$ for $\lambda <
1/2$. For a given $\beta>0$ and any bounded region $\ocal$, we consider
the map
\begin{equation}
  \uTheta^{(\beta,\ocal)}: \uafk(\ocal) \to \uhcal,\quad
  \uA \mapsto e^{-\beta \uH} \,\cutoff \uA \Omega.
\end{equation}
This is indeed well-defined due to Lemma~\ref{lem:fBeta}. Our variant of
the Haag-Swieca compactness condition, uniform at small scales, is then as
follows.
\begin{definition} \label{def:uniformCompactness}
A quantum field theory fulfills the \emph{uniform Haag-Swieca
compactness condition} if, for each bounded region $\ocal$, there is $\beta>0$
such that the map $\uTheta^{(\beta,\ocal)}$ is uniformly compact.
\end{definition}
We note that this property is independent of the choice of $\cutoff$; the
role of that factor is to ensure that we restrict our attention to the
short-distance rather than the long-distance regime. We do not discuss
relations of uniform Haag-Swieca compactness with other versions of phase space
conditions here. Rather, we show in Sec.~\ref{sec:examples} that the
condition is fulfilled in some simple models.

We now investigate how the compactness property
transfers to the scaling limit. 
To that end, we consider the corresponding phase space map in
the limit theory,
\begin{equation}
  \Theta_0^{(\beta,\ocal)}:
  \afk_0(\ocal) \to \hcal_0,
  \quad
  A \mapsto e^{-\beta H_0} A \Omega_0.
\end{equation}
Its relation to $\uTheta^{(\beta,\ocal)}$ is rather direct.

\begin{proposition} \label{pro:thetaToLimit}
For any fixed $\ocal$ and $\beta>0$, one has $\eta_0 \circ
\uTheta^{(\beta,\ocal)} = \Theta_0^{(\beta,\ocal)} \circ \pi_0$. If
$\uTheta^{(\beta,\ocal)}$ is uniformly compact, so is $\Theta_0^{(\beta,\ocal)}
\circ \pi_0$.
\end{proposition}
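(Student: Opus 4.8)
The plan is to establish the two assertions in turn, deriving uniform compactness from the intertwining identity together with the module properties of $\eta_0$. For the identity $\eta_0 \circ \uTheta^{(\beta,\ocal)} = \Theta_0^{(\beta,\ocal)} \circ \pi_0$, the first step is to invoke Lemma~\ref{lem:fBeta} with $f(E) = e^{-\beta E}$ and with $\cutoff\uA$ in place of $\uA$, producing a test function $g \in \scal(\rbb)$ such that
\begin{equation*}
  \uTheta^{(\beta,\ocal)}(\uA) = e^{-\beta\uH}\,\cutoff\uA\,\Omega = \ualpha_g(\cutoff\uA)\,\Omega,
  \qquad \ualpha_g(\cutoff\uA) = \int dt\, g(t)\,\ualpha_t(\cutoff\uA) \in \uafk .
\end{equation*}
Since the right-hand side is literally of the form $\uB\,\Omega$ with $\uB = \ualpha_g(\cutoff\uA) \in \uafk$, the defining prescription of $\eta_0$ applies directly and gives $\eta_0\big(\uTheta^{(\beta,\ocal)}(\uA)\big) = \pi_0\big(\ualpha_g(\cutoff\uA)\big)\Omega_0$.

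Next I would push the time-averaging into the limit representation. Using the defining relation $\pi_0(\ualpha_t(\uC))\Omega_0 = U_0(t)\pi_0(\uC)\Omega_0$ for the limit unitaries $U_0(t) = e^{\imath H_0 t}$ of time translations, the norm-convergent integral yields
\begin{equation*}
  \pi_0\big(\ualpha_g(\cutoff\uA)\big)\Omega_0 = \Big(\int dt\, g(t)\, U_0(t)\Big)\pi_0(\cutoff\uA)\Omega_0 = \hat f(H_0)\,\pi_0(\cutoff)\,\pi_0(\uA)\Omega_0 ,
\end{equation*}
where $\hat f$ is the Schwartz continuation of $f$ chosen in Lemma~\ref{lem:fBeta}. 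Two facts then close the computation. First, the spectrum condition for $\afk_0$ gives $\mathrm{spec}(H_0) \subset [0,\infty)$, so $\hat f(H_0) = f(H_0) = e^{-\beta H_0}$. Second, $\pi_0(\cutoff) = \idop$: since $\cutoff_\lambda = 1$ for small $\lambda$, every asymptotic mean gives $\uomega_z(\cutoff) = 1$, whence $\pi_0(\cutoff) = \int_\zcal^\Gamma d\nu(z)\,\idop = \idop$ by Corollary~\ref{cor:reprDecomp}. Combining these, $\eta_0\big(\uTheta^{(\beta,\ocal)}(\uA)\big) = e^{-\beta H_0}\pi_0(\uA)\Omega_0 = \Theta_0^{(\beta,\ocal)}(\pi_0(\uA))$, which is the first claim.

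For the second assertion I would exploit that $\eta_0$ is a contractive ($\|\eta_0\| \leq 1$) homomorphism of $\zfk(\uafk)$-modules, as recorded in Eq.~\eqref{eqn:etaModuleHom}. Writing the uniformly compact map as $\uTheta^{(\beta,\ocal)} = \sum_{j=0}^\infty \ue_j(\cdotarg)\,\uf_j$ with $\ue_j : \uafk(\ocal) \to \zfk(\uafk)$ continuous linear and $\uf_j \in \uhcal$, the module property gives $\eta_0\big(\ue_j(\uA)\uf_j\big) = \pi_0(\ue_j(\uA))\,\eta_0(\uf_j)$. Setting $e_j := \pi_0 \circ \ue_j$ (valued in $\pi_0(\zfk(\uafk)) \isom \ccal(\zcal)$) and $f_j := \eta_0(\uf_j) \in \ccal(\Gamma)$, the first part yields
\begin{equation*}
  \Theta_0^{(\beta,\ocal)}\circ\pi_0 = \eta_0 \circ \uTheta^{(\beta,\ocal)} = \sum_{j=0}^\infty e_j(\cdotarg)\,f_j ,
\end{equation*}
a sum of terms of uniform rank $1$ with respect to the Banach module $\ccal(\Gamma)$ over $\pi_0(\zfk(\uafk)) \isom \ccal(\zcal)$. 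Convergence in the Banach norm of maps is inherited from the original series, since $\|\eta_0\| \leq 1$ bounds the tails, $\big\|\sum_{j=m}^n e_j(\cdotarg)f_j\big\| \leq \big\|\sum_{j=m}^n \ue_j(\cdotarg)\uf_j\big\|$; hence $\Theta_0^{(\beta,\ocal)}\circ\pi_0$ is uniformly compact.

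The main obstacle is the first identity, and within it the two normalization points: verifying that the averaged operator $\int dt\, g(t)U_0(t)$ really reproduces $e^{-\beta H_0}$ (which rests on the positivity of $H_0$ and on $U_0(t)\Omega_0 = \Omega_0$), and confirming that the cutoff $\cutoff$ acts trivially in the limit representation. Once these are settled the remaining computation is routine, and the second assertion is a formal consequence of the module-homomorphism and contractivity properties of $\eta_0$.
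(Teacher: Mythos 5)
Your proof is correct and follows essentially the same route as the paper's: Lemma~\ref{lem:fBeta} converts the energy damping into a time average $\ualpha_g$, the defining relation of $\eta_0$ and $U_0$ together with $\pi_0(\cutoff)=\idop$ and the spectrum condition for $H_0$ give the intertwining identity, and the module property \eqref{eqn:etaModuleHom} plus $\|\eta_0\|\leq 1$ let you push $\eta_0$ through the uniformly compact expansion term by term. The paper compresses these steps into a single chain of equalities; you have merely made the normalization points (positivity of $H_0$, triviality of the cutoff in the limit representation) explicit.
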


\begin{proof}
Given $\beta$, we choose a function $g_\beta$ relating to
$f_\beta(E)=\exp(-\beta E)$ per Lemma~\ref{lem:fBeta}. For any
$\uA \in \uafk(\ocal)$, we compute
\begin{equation}
   \eta_0 \uTheta^{(\beta,\ocal)}(\uA) 
   = \eta_0( \cutoff \ualpha_{g_\beta} \uA \Omega)
   = \pi_0(\cutoff) \pi_0(\ualpha_{g_\beta} \uA) \Omega_0
   = \alpha_{0,g_\beta} \pi_0(\uA) \Omega_0
   = \Theta_0^{(\beta,\ocal)} \pi_0(\uA).
\end{equation}
Thus $\eta_0 \circ
\uTheta^{(\beta,\ocal)} = \Theta_0^{(\beta,\ocal)} \circ \pi_0$ as proposed.
Now let $\uTheta^{(\beta,\ocal)}$ be uniformly compact, $\uTheta^{(\beta,\ocal)} =
\sum_j e_j(\cdotarg) f_j$. Then $\eta_0$ can be exchanged with the infinite
sum due to continuity, which yields
\begin{equation} \label{eqn:theta0Compact}
  \Theta_0^{(\beta,\ocal)} \circ \pi_0
  = \sum_j \eta_0(e_j(\cdotarg) f_j)
  = \sum_j (\pi_0 \circ e_j(\cdotarg)) (\eta_0 f_j),
\end{equation}
using Eq.~\eqref{eqn:etaModuleHom}. Thus $\Theta_0^{(\beta,\ocal)} \circ \pi_0$ 
is uniformly compact.
\end{proof}

The above results show in particular that $\img \Theta_0^{(\beta,\ocal)} \circ
\pi_0 \subset \ccal(\Gamma)$. Since we can write
\begin{equation}
  \Theta_0^{(\beta,\ocal)} \circ \pi_0(\uA)
  = \int d\nu(z) \, \Theta_z^{(\beta,\ocal)} \circ \pi_z(\uA)
\end{equation}
with the obvious definition of $\Theta_z^{(\beta,\ocal)}$, the above
proposition establishes a rather strong form of compactness in
the limit theory, uniform in $z$; note that the sum in
Eq.~\eqref{eqn:theta0Compact} converges with respect to the supremum norm. 

We now come to the main result of the section, showing that compactness in the
above form implies uniform separability of the limit Hilbert space.
\begin{theorem} \label{thm:compactSeparable}
Suppose that the theory $\afk$ fulfils uniform Haag-Swieca compactness. Then,
for any limit state $\uomega_0$, the representation space $\hcal_0$ is
uniformly separable, where the fundamental sequence can be chosen from
$\ccal(\Gamma)$.
\end{theorem}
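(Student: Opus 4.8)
The plan is to read off the fundamental sequence directly from the uniformly compact expansion of the phase space map, and then to check fibrewise density using the Reeh--Schlieder theorem. First I would fix, once and for all, a nonempty bounded open region $\ocal$ together with a value $\beta>0$ for which $\uTheta^{(\beta,\ocal)}$ is uniformly compact, as furnished by Def.~\ref{def:uniformCompactness}. By Prop.~\ref{pro:thetaToLimit} the composite $\Theta_0^{(\beta,\ocal)}\circ\pi_0$ is then uniformly compact too, and by Eq.~\eqref{eqn:theta0Compact} it expands as $\Theta_0^{(\beta,\ocal)}\circ\pi_0 = \sum_j \pi_0(e_j(\cdotarg))\,\eta_0(f_j)$, converging in the supremum norm of $\ccal(\Gamma)$, with $e_j:\uafk(\ocal)\to\zfk(\uafk)$ and $f_j\in\uhcal$. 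I would take as candidate fundamental sequence the countable family $\{\eta_0(f_j)\}_j\subset\ccal(\Gamma)$, enlarged by its $\qbb+\i\qbb$-linear combinations, which again lie in the vector space $\ccal(\Gamma)$ and remain countable.

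Next I would evaluate the expansion fibrewise. Since each $e_j(\uA)$ is central, $\pi_0(e_j(\uA))$ is diagonal by Corollary~\ref{cor:reprDecomp} and acts on $\hcal_z$ by the scalar $\uomega_z(e_j(\uA))$; recalling that $\pi_z(\cutoff)=\idop$ (because $\cutoff_\lambda\to 1$ as $\lambda\to 0$), this yields $e^{-\beta H_z}\pi_z(\uA)\Omega_z = \sum_j \uomega_z(e_j(\uA))\,(\eta_0 f_j)(z)$ with the sum converging uniformly in $z$. Consequently the entire image $\img\Theta_z^{(\beta,\ocal)} = e^{-\beta H_z}\,\pi_z(\uafk(\ocal))\Omega_z$ is contained in $\overline{\lspan}\{(\eta_0 f_j)(z)\}_j$ for every $z$. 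It then remains only to show that this image is dense in $\hcal_z$ for each $z$: granting this, the closed span of the candidate family is all of $\hcal_z$, its rational combinations are dense in $\hcal_z$, and uniform separability in the sense of Def.~\ref{def:fundamentalSequence} follows, with the fundamental sequence drawn from $\ccal(\Gamma)$ as required.

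The fibrewise density is the crux, and I would establish it in two steps. First, each $\uomega_z$ is a multiplicative limit state, so $\afk_z$ is a local net in a positive energy representation with translation-invariant cyclic vector $\Omega_z$ satisfying the spectrum condition; the Reeh--Schlieder theorem then gives that $\Omega_z$ is cyclic for $\afk_z(\ocal)=\pi_z(\uafk(\ocal))''$, whence $\pi_z(\uafk(\ocal))\Omega_z$ is dense in $\hcal_z$ for our single fixed region $\ocal$. Second, $H_z\geq 0$ with $H_z\Omega_z=0$, so $e^{-\beta H_z}$ is a bounded, injective, self-adjoint operator; it therefore maps a dense set to a dense set, since $\xi\perp e^{-\beta H_z}\pi_z(\uafk(\ocal))\Omega_z$ forces $e^{-\beta H_z}\xi$ to be orthogonal to a dense set, hence $e^{-\beta H_z}\xi=0$ and $\xi=0$. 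Combining the two steps yields density of $\img\Theta_z^{(\beta,\ocal)}$ in each $\hcal_z$ and completes the argument.

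I expect the main obstacle to be exactly this fibrewise density. The most natural alternative route --- exhausting $\rbb^{s+1}$ by regions $\ocal_n$ and invoking global cyclicity of $\Omega_z$ --- founders, because Def.~\ref{def:uniformCompactness} supplies a possibly different, and possibly unbounded, damping parameter $\beta_n$ for each $\ocal_n$, whereas the orthogonality argument above genuinely needs one \emph{common} $\beta$; restricting attention to a single fixed region, at the cost of invoking Reeh--Schlieder, is precisely what circumvents the varying-$\beta$ difficulty. A minor point to confirm is the applicability of Reeh--Schlieder to the fibre nets $\afk_z$; but since it is applied separately in each fibre and relies only on the spectrum condition and the cyclicity of $\Omega_z$ --- both of which hold for every multiplicative limit state --- no uniformity in $z$ is needed at that stage, the uniformity being already encoded in the supremum-norm convergence of the expansion from Prop.~\ref{pro:thetaToLimit}.
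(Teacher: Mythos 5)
Your overall architecture (fibrewise evaluation of the uniformly convergent expansion, then density of the image of a damped fibre map, then totality) parallels the paper's proof, but the step you yourself call the crux --- fibrewise density via Reeh--Schlieder --- is a genuine gap. The Reeh--Schlieder theorem for a \emph{bounded} region does not follow from translation covariance, the spectrum condition, and cyclicity of $\Omega_z$ for the quasilocal algebra alone: it additionally requires weak additivity of the net, i.e.\ that the translates $\afk_z(\ocal_0+x)$ of a fixed double cone algebra generate a weakly dense subalgebra of the global algebra. This is exactly why the analyticity argument does work for \emph{wedges} (every bounded region lies in some translate of a given wedge, so translates of a wedge algebra automatically generate everything; this is the content of the adaptation of \cite[Lemma~6.1]{BucVer:scaling_algebras} in Sec.~2.3 of the paper), whereas for a fixed double cone $\ocal$ it only yields that a vector orthogonal to $\afk_z(\ocal)\Omega_z$ is orthogonal to the algebra generated by the translates of a slightly smaller region --- and without additivity that algebra need not contain $\afk_z(\ocal_1)$ for larger double cones $\ocal_1$. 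Weak additivity of the limit nets $\afk_z$ is neither part of Def.~\ref{def:localnet} nor established anywhere in the paper (or in \cite{BucVer:scaling_algebras}), and there is no obvious way to transport it from the underlying theory through the scaling limit; so your Step 1 is unsupported.

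Moreover, the route you explicitly dismiss is precisely the one the paper takes, and the obstacle you cite (one $\beta_k$ per region $\ocal_k$, no common $\beta$) is surmountable. The paper exhausts $\mkraum$ by regions $\ocal_k$ with dampings $\beta_k$ and then replaces the varying dampings by the single injective operator $e^{-H_z^2}$: since $e^{-H_z^2}=e^{-H_z^2+\beta_k H_z}\,e^{-\beta_k H_z}$ with $e^{-H_z^2+\beta_k H_z}$ bounded, the expansion of $\Theta_0^{(\beta_k,\ocal_k)}\circ\pi_0$ gives
\begin{equation*}
e^{-H_z^2}\,\pi_z(\uafk(\ocal_k))\Omega_z\subset\clos\lspan\{\,e^{-H_z^2+\beta_k H_z}f_j^{(k)}(z)\;|\;j\in\nbb\,\},
\end{equation*}
and taking the union over $k$ and using only GNS cyclicity of $\Omega_z$ for $\pi_z(\uafk)$ --- no Reeh--Schlieder input at all --- one finds that $e^{-H_z^2}\hcal_z$ lies in the closed span of $\{e^{-H_z^2+\beta_k H_z}f_j^{(k)}(z)\,|\,j,k\in\nbb\}$. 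Since $e^{-H_z^2}$ is selfadjoint with trivial kernel, its range is dense, so this countable family is total in every $\hcal_z$, and by Lemma~\ref{lem:fBeta} applied to $f(E)=e^{-E^2+\beta_k E}$ its members lie in $\ccal(\Gamma)$. Your own orthogonality argument for a single $\beta$ is just the special case of this; the idea you missed is that a single injective damping operator dominating all the $e^{-\beta_k H_z}$ up to bounded factors serves the role of a ``common $\beta$''. To salvage your single-region version one would instead have to prove weak additivity (hence Reeh--Schlieder) for the limit nets, which is not available in this generality.
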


\begin{proof} 
We choose a sequence of regions $\ocal_k$ such that $\ocal_k
\nearrow \mkraum$, and a sequence $(\beta_k)_{k\in\nbb}$ in $\rbb_+$ such
that all $\uTheta^{(\beta_k,\ocal_k)}$ are uniformly compact. 
By Prop.~\ref{pro:thetaToLimit} above, also $\Theta_0^{(\beta_k,\ocal_k)}\circ\pi_0$ are uniformly compact. Explicitly, choose $e_j^{(k)}: \uafk(\ocal_k) \to \ccal(\zcal)$ and $f_j^{(k)} \in \ccal(\Gamma)$
such that
\begin{equation}
  \Theta_0^{(\beta_k,\ocal_k)}\circ\pi_0    
  = \sum_j e_j^{(k)}(\cdotarg) f_j^{(k)}.
\end{equation}
We will construct a fundamental sequence using the $f_j^{(k)}$.
To that end, let $\uA \in \uafk(\ocal)$ for some $\ocal$. For $k$ large enough,
we know that
\begin{equation} 
  e^{-\beta H_0} \pi_0(\uA) \Omega_0 
  = \Theta_0^{(\beta_k,\ocal_k)} (\pi_0(\uA))
  = \sum_j e_j^{(k)}(\uA) f_j^{(k)}.
\end{equation}
The sum converges in the supremum norm, i.e., uniformly at all points $z$. 
Let us choose a fixed $z$. Then, it is clear that 
\begin{equation} 
  e^{- H_z^2} \pi_z(\uA) \Omega_z 
  = \sum_j \big(e_j^{(k)}(\uA)\big)(z)\,    e^{- H_z^2+\beta_k H_z}
  f_j^{(k)}(z),
\end{equation}
noting that $\exp(-H_z^2+\beta_k H_z)$ is a bounded operator. Observe that
$(e_j^{(k)}(\uA))(z)$ are merely numerical factors.
Since $\uA$ and $\ocal$ were arbitrary, and $\cup_\ocal
\pi_z(\uafk(\ocal))\Omega_z$ is dense in $\hcal_z$, this means
\begin{equation}
  e^{- H_z^2} \hcal_z \subset 
  \clos\, \lspan \{\,e^{- H_z^2+\beta_k H_z} f_j^{(k)}(z) |\, j,k\in\nbb\}.
\end{equation}
Now $\exp(- H_z^2) $ is a selfadjoint operator with trivial kernel, thus its
image is dense.  Hence the $\exp(- H_z^2+\beta_k H_z) f_j^{(k)}(z)$ are total
in $\hcal_z$. This holds for all $z$, thus 
$\{ \exp(- H_0^2+\beta_k H_0) f_j^{(k)}\,|\,j,k\in\nbb\}$ is a fundamental
sequence. Applying Lemma~\ref{lem:fBeta} to $f(E)=\exp(- E^2+\beta_k E)$, we
find that the elements of the fundamental sequence lie in $\ccal(\Gamma)$.
\end{proof}

\section{Examples} \label{sec:examples}

We are now going to investigate the structures discussed in simple models.
Particularly, we wish to show that our conditions on ``convergent scaling
limits'' (Def.~\ref{def:convLimit}) and ``uniform Haag-Swieca compactness''
(Def.~\ref{def:uniformCompactness}) can be fulfilled at least in simple
situations. To that end, we first consider the situation where the
theory $\afk$ ``at finite scales'' is equipped with a dilation symmetry.
Then, we investigate the real scalar free field as a concrete example.

\subsection{Dilation covariant theories}
We now consider the case where the net $\afk$, which our investigation
starts from, is already dilation covariant. One expects that the scaling limit
construction reproduces the theory $\afk$ in this case, and that the dilation
symmetry obtained from the scaling algebra coincides with the original one. We
shall show that this is indeed the case under a mild phase space condition,
and also that this implies the stronger phase space condition in
Def.~\ref{def:uniformCompactness}. This extends a discussion in
\cite[Sec.~5]{BucVer:scaling_algebras}. 

Technically, we will assume in the following that $\afk$ is a local net in the
vacuum sector with symmetry group $\gcal$, which is generated by the
Poincaré group and the dilation group. We shall denote the corresponding
unitaries as $U(\mu,x,\Lambda)=U(\mu)U(x,\Lambda)$. The mild phase space condition referred to is the Haag-Swieca compactness condition
for the original theory: We assume that for each bounded region $\ocal$ in
Minkowski space, there exists $\beta>0$ such that the map
$\Theta^{(\beta,\ocal)}:\afk(\ocal) \to \hcal$, $A \mapsto \exp(-\beta H) A
\Omega$ is compact. (This is equivalent to a formulation where the factor
$\exp(-\beta H)$ is replaced with a sharp energy cutoff, as used in
\cite{HaaSwi:compactness}.)

\begin{theorem}\label{thm:dilation}
Let $\afk$ be a dilation covariant net in the vacuum sector which satisfies the Haag-Swieca compactness condition. Then $\afk$ has a convergent scaling limit.
\end{theorem}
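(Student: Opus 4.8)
The plan is to realise the convergent part of the scaling algebra as a homomorphic image of the original theory under the dilation flow. Since $\afk$ is dilation covariant, write $\alpha_{(\mu,x,\Lambda)}=\ad U(\mu,x,\Lambda)$ for the action of $\gcal$, and let $\afk_{\mathrm{c}}\subset\afk$ be the C$^\ast$-subalgebra of elements $A$ for which $h\mapsto\alpha_h(A)$ is norm continuous. This $\afk_{\mathrm{c}}$ is $\gcal$-invariant, contains all smeared elements $A_f=\int dh\,f(h)\,\alpha_h(A)$ with $f$ a test function on $\gcal$, and hence $\afk_{\mathrm{c}}\Omega$ is dense in $\hcal$. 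For $A\in\afk_{\mathrm{c}}$ I define $\iota(A)\in\ubfk$ by $\iota(A)_\lambda:=\alpha_{(\lambda,0,\id)}(A)$, the dilation of $A$ to scale $\lambda$.

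First I would check that $\iota$ maps into the scaling algebra. Boundedness and $\iota(A)_\lambda\in\afk(\lambda\ocal)$ for $A\in\afk(\ocal)$ are immediate. For the norm continuity of $g\mapsto\ualpha_g\iota(A)$ one computes, using the group law of $\gcal$, that $(\ualpha_{(\mu,x,\Lambda)}\iota(A))_\lambda=\alpha_{(\lambda\mu,\lambda\mu x,\Lambda)}(A)$; factoring out the isometric automorphism at $g_0=(\mu_0,x_0,\Lambda_0)$ reduces the increment to $\|\alpha_{h}(A)-A\|$ with $h=(\mu/\mu_0,\Lambda_0^{-1}(\tfrac{\mu}{\mu_0}x-x_0),\Lambda_0^{-1}\Lambda)$ \emph{independent of $\lambda$}. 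Hence the orbit is norm continuous precisely because $A\in\afk_{\mathrm{c}}$, and $\iota(A)\in\uafk(\ocal)$ whenever $A\in\afk_{\mathrm{c}}\cap\afk(\ocal)$. Clearly $\iota$ is a $\ast$-homomorphism, so $\uafk_{\mathrm{conv}}:=\iota(\afk_{\mathrm{c}})$ is a C$^\ast$-subalgebra of $\uafk$. The factorisation $(\ualpha_g\iota(A))_\lambda=\iota(\alpha_{(\mu,\mu x,\Lambda)}(A))_\lambda$ together with $\gcal$-invariance of $\afk_{\mathrm{c}}$ shows that $\uafk_{\mathrm{conv}}$ is $\ualpha$-invariant, as required by Def.~\ref{def:convLimit}.

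Property (i) is then immediate: since $\Omega$ is dilation invariant, $\omega(\iota(A)_\lambda)=\omega(\alpha_{(\lambda,0,\id)}(A))=\omega(A)$ is \emph{constant} in $\lambda$, a fortiori convergent as $\lambda\to 0$, and the same holds for norm limits, so $\lambda\mapsto\omega(\uB_\lambda)$ converges for every $\uB\in\uafk_{\mathrm{conv}}$. For the weak-density property (ii) I would first note that $\iota$ intertwines the vacuum with the limit state: for any limit state $\uomega_0$ one has $\uomega_0(\iota(A)\st\iota(B))=\uomega_0(\iota(A\st B))=\omega(A\st B)$, since $\lambda\mapsto\omega(\iota(A\st B)_\lambda)$ is constant. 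Thus there is a canonical isometry $V:\hcal\to\hcal_0$ with $V A\Omega=\pi_0(\iota(A))\Omega_0$, implementing the vacuum representation of $\afk_{\mathrm{c}}$ inside $\pi_0\circ\iota$. With this in hand, (ii) for a multiplicative $\uomega_0$ reduces to two statements: that $V$ is onto (equivalently, $\pi_0(\uafk_{\mathrm{conv}})\Omega_0$ is dense in $\hcal_0$), and that $V$ carries $\afk(\ocal)$ onto $\afk_0(\ocal)$, so that the strong density of the continuous local elements in $\afk(\ocal)$ (up to arbitrarily small enlargement of $\ocal$) transports to weak density of $\pi_0(\uafk_{\mathrm{conv}}\cap\uafk(\ocal))$ in $\afk_0(\ocal)$.

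The main obstacle is exactly this density/reproduction, and it is where the Haag--Swieca compactness hypothesis is needed. Surjectivity of $V$ amounts to approximating each $\pi_0(\uA)\Omega_0$, $\uA\in\uafk(\ocal)$, by vectors $\pi_0(\iota(C))\Omega_0$; using dilation invariance of $\omega$, this is the problem of approximating the norm-bounded family $B_\lambda:=\alpha_{(\lambda^{-1},0,\id)}(\uA_\lambda)\in\afk(\ocal)$ by a single element $C\in\afk_{\mathrm{c}}\cap\afk(\ocal)$, in the mean defining $\uomega_0$. Here I would invoke compactness of the map $\Theta^{(\beta,\ocal)}\colon A\mapsto e^{-\beta H}A\Omega$ on $\afk(\ocal)$, together with the scale-covariant energy cutoff of Lemma~\ref{lem:fBeta}, to control the rescaled family $\{B_\lambda\Omega\}$ and approximate it uniformly by the norm-continuous, dilation-lifted elements, whence $V$ is unitary; the reproduction $V\afk(\ocal)V\st=\afk_0(\ocal)$ then follows from the same approximation applied on a core, together with locality. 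Controlling the scale dependence of the energy content of $B_\lambda\Omega$ uniformly in $\lambda$ is the delicate technical point; once it is settled, (ii) follows and, with (i), establishes that $\afk$ has a convergent scaling limit.
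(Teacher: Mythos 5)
Your construction coincides with the paper's: the same algebra of dilation lifts $\lambda\mapsto U(\lambda)AU(\lambda)\st$ of norm-continuous elements, the same observation that dilation invariance of $\Omega$ makes $\lambda\mapsto\omega(\uA_\lambda)$ constant (condition (i) of Def.~\ref{def:convLimit}), and the same strategy for condition (ii), namely identifying the limit theory with $\afk$ itself by a vacuum-preserving map whose existence rests on Haag--Swieca compactness. (A cosmetic point: define $\uafk_{\mathrm{conv}}$ as the inductive limit of the local algebras $\iota(\afk_{\mathrm{c}}\cap\afk(\ocal))$, as the paper does; for a merely quasilocal $A\in\afk_{\mathrm{c}}$ it is not immediate that $\iota(A)$ lies in $\uafk$.)

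The gap is that the step on which the entire theorem rests is announced but never carried out. You write that you ``would invoke'' compactness of $\Theta^{(\beta,\ocal)}$ together with Lemma~\ref{lem:fBeta} to control the family $\{B_\lambda\Omega\}$, call the uniform-in-$\lambda$ control of its energy content ``the delicate technical point,'' and conclude ``once it is settled, (ii) follows.'' That unsettled point is exactly where the compactness hypothesis does its work; without it your text proves only condition (i). The paper closes this step by citation, constructing via \cite[Prop.~5.1]{BucVer:scaling_algebras} a net isomorphism $\phi:\afk_0\to\afk$ with $\phi(\pi_0(\uA))=A$ whenever $\uA_\lambda=U(\lambda)AU(\lambda)\st$, and then pulling the strong density of $\hat\afk(\ocal)$ in $\afk(\ocal)$ back through $\phi$. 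For the record, your sketch is completable with precisely the ingredients you name, and the point you flag as delicate is actually the easy part: since $U(\lambda)e^{-\beta H}U(\lambda)\st=e^{-\beta\lambda H}$, Lemma~\ref{lem:fBeta} gives $\|e^{-\beta H}B_\lambda\Omega-B_\lambda\Omega\| = \|\big(f_\beta(\uH)\uA\Omega-\uA\Omega\big)_\lambda\| \le \|\ualpha_{g_\beta}\uA-\uA\|$, which vanishes as $\beta\to 0$ uniformly in $\lambda$ by norm continuity in $\uafk$. Noting also that compactness of $\Theta^{(\beta_0,\ocal)}$ for one $\beta_0$ implies it for every $\beta>0$ (compose with $P(E)e^{\beta_0 H}$ and let $E\to\infty$), the set $\{B_\lambda\Omega\}_\lambda$ is totally bounded, hence norm convergent along the ultrafilter underlying the multiplicative state to $B_\infty\Omega$, where $B_\infty\in\afk(\ocal)$ is the weak-$\ast$ limit of the $B_\lambda$ along that ultrafilter; strong density of $\afk_{\mathrm{c}}\cap\afk(\ocal)$ in $\afk(\ocal)$ then yields surjectivity of $V$, and the identity $V\st\pi_0(\uA)V=B_\infty$, checked on matrix elements over $\afk_{\mathrm{c}}\Omega$, yields the reproduction property and hence (ii). These arguments (or, in their place, the citation of Buchholz--Verch) are what your proposal still owes.
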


\begin{proof}
We introduce the \cistar-subalgebra $\hat{\afk}(\ocal) \subset \afk(\ocal)$ of
those elements $A \in \afk(\ocal)$ for which $g \mapsto \alpha_g(A)$ is
norm continuous. Since the symmetries are implemented by
continuous unitary groups, $\hat\afk(\ocal)$ is strongly dense in
$\afk(\ocal)$. We then define a \cistar-subalgebra of the scaling algebra
$\uafk(\ocal)$,
\begin{equation}
\uafk_\mathrm{conv}(\ocal) := \{ \lambda \mapsto U(\lambda)AU(\lambda)^*\,|\, A \in \hat\afk(\ocal)\},
\end{equation}
and the $\ualpha$-invariant algebra $\uafk_\mathrm{conv}\subset \uafk$ is defined as the \cistar-inductive limit of the $\uafk_\mathrm{conv}(\ocal)$.

It is evident that condition (i) in Def.~\ref{def:convLimit} is
fulfilled by $\uafk_\mathrm{conv}$, as the functions $\lambda \mapsto
\omega(\uA_\lambda)$ are constant in the present case. Now let $\uomega_0$ be
a multiplicative limit state. With similar arguments\footnote{%
Since in contrast to \cite{BucVer:scaling_algebras}, we here take
the $\afk_0(\ocal)$ to be \wstar{} algebras, we need to amend the argument in
step (d) of \cite[Prop.~5.1]{BucVer:scaling_algebras} slightly: We first construct the isomorphism $\phi$ on the \cistar{} algebra
$\pi_0(\uafk(\ocal))$, and then continue it to the weak closure;
cf.~\cite[Lemma~10.1.10]{KadRin:algebras2}.}
as in~\cite[Prop.~5.1]{BucVer:scaling_algebras}, using the Haag-Swieca
compactness condition, we can construct a net isomorphism $\phi$ from $\afk_0$
to $\afk$, which has the property that if $\uA_\lambda = U(\lambda) A U(\lambda)^*$ with $A \in \hat\afk(\ocal)$, then
$\phi(\pi_0(\uA)) = A$. From this, and from the strong density of
$\hat\afk(\ocal)$ in $\afk(\ocal)$, it follows that $\pi_0(\uafk_\mathrm{conv}(\ocal))$ is strongly dense in $\afk_0(\ocal)$. 
Thus condition (ii) in Def.~\ref{def:convLimit} is satisfied as well.
\end{proof}

Since the isomorphism $\phi$ above can be shown to intertwine the
respective vacuum states, it is actually the adjoint action of a unitary $W:
\hcal_0 \to \hcal$. We remark that $\hcal$, and then also $\hcal_0$, is
separable due to the Haag-Swieca compactness condition. Then as a consequence of
Prop.~\ref{pro:ConvergentDilation}, $\afk$ has a factorizing scaling limit and the representation 
of the symmetry group $\gcal_0$
factorizes too. It is also clear from the proof above and from that of
Thm.~\ref{thm:convergentImpliesUnique}, that $\qafk$ is unitarily
equivalent to $\afk$ through the operator $W$, taken here for $\qpi$ in place
of $\pi_0$. Furthermore, this $W$ also intertwines the dilations in the scaling
limit with those of the underlying theory.

\begin{corollary}\label{cor:dilation}
Under the hypothesis of Thm.~\ref{thm:dilation}, there holds
\begin{equation*}\label{eq:dilationequivalence}
W \qU(\mu) W^* = U(\mu).
\end{equation*}
\end{corollary}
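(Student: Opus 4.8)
The plan is to prove the intertwining relation by checking it on the total set of vectors $\qpi(\uA)\qOmega$ with $\uA \in \uafk_\mathrm{conv}$; these span a dense subspace of $\qhcal$ by condition~(ii) of Def.~\ref{def:convLimit}. Two ingredients enter. First, Prop.~\ref{pro:ConvergentDilation} gives the action of the limit dilation unitary, $\qU(\mu)\qpi(\uA)\qOmega = \qpi(\udelta_\mu(\uA))\qOmega$. Second, the intertwiner $W \colon \qhcal \to \hcal$ implements the net isomorphism via $\phi = \ad W$ and preserves the vacuum, so that $W\qOmega = \Omega$; combined with the defining property $\phi(\qpi(\uA)) = A$ whenever $\uA_\lambda = U(\lambda)AU(\lambda)^{*}$ with $A \in \hat\afk(\ocal)$, this yields $W\qpi(\uA)\qOmega = A\Omega$ on such generators.

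The computational heart is to identify $\udelta_\mu(\uA)$ for these generators. Using $\udelta_\mu(\uA)_\lambda = \uA_{\mu\lambda}$, the factorization $U(\mu\lambda) = U(\mu)U(\lambda)$, and the commutativity of the dilation group with itself, I obtain
\begin{equation*}
  \udelta_\mu(\uA)_\lambda = U(\mu\lambda)\,A\,U(\mu\lambda)^{*} = U(\lambda)\big(U(\mu)AU(\mu)^{*}\big)U(\lambda)^{*} = U(\lambda)\,\alpha_\mu(A)\,U(\lambda)^{*}.
\end{equation*}
Here $\alpha_\mu(A) = U(\mu)AU(\mu)^{*}$ lies in $\hat\afk(\mu\ocal)$: membership in $\afk(\mu\ocal)$ is dilation covariance, and $g \mapsto \alpha_g(\alpha_\mu(A)) = \alpha_{g\mu}(A)$ is norm continuous because $g \mapsto \alpha_g(A)$ is. Hence $\udelta_\mu(\uA) \in \uafk_\mathrm{conv}(\mu\ocal)$ is again a generator of the admissible form, with $\alpha_\mu(A)$ replacing $A$, so the identity $W\qpi(\,\cdot\,)\qOmega = (\,\cdot\,)\Omega$ applies to it as well.

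Putting the pieces together, I would chain
\begin{equation*}
  W\qU(\mu)\qpi(\uA)\qOmega = W\qpi(\udelta_\mu(\uA))\qOmega = \alpha_\mu(A)\Omega = U(\mu)AU(\mu)^{*}\Omega = U(\mu)A\Omega = U(\mu)W\qpi(\uA)\qOmega,
\end{equation*}
where the fourth equality uses invariance $U(\mu)^{*}\Omega = \Omega$ of the vacuum under dilations and the last uses $W\qpi(\uA)\qOmega = A\Omega$. Since the vectors $\qpi(\uA)\qOmega$ are dense, this gives $W\qU(\mu) = U(\mu)W$, i.e.\ $W\qU(\mu)W^{*} = U(\mu)$, and strong continuity in $\mu$ is inherited from $\qU$ and $U$ so no separate limit argument is required. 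The only step demanding genuine care --- and thus the main obstacle --- is the displayed identity for $\udelta_\mu(\uA)$ together with the verification that $\alpha_\mu(A)$ stays inside the norm-continuity subalgebra $\hat\afk(\mu\ocal)$; everything else is a mechanical substitution.
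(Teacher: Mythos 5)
Your proposal is correct and follows essentially the same route as the paper's proof: verify the relation on the dense set of vectors coming from generators $\uA_\lambda = U(\lambda)AU(\lambda)^*$, $A \in \hat\afk(\ocal)$, using the key identity $\udelta_\mu(\uA)_\lambda = U(\lambda)\alpha_\mu(A)U(\lambda)^*$ with $\alpha_\mu(A) \in \hat\afk(\mu\ocal)$, together with $W\qpi(\uA)\qOmega = A\Omega$ and the dilation invariance of $\Omega$. You merely spell out the steps that the paper leaves implicit (the chain of equalities, the norm-continuity check for $\alpha_\mu(A)$, and the density argument), which is exactly what its terse proof intends.
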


\begin{proof}
It is sufficient to verify the relation on vectors of the form $A \Omega$ with
$A \in \hat\afk(\ocal)$. For such vectors it follows by noting that 
$\uA_\lambda = U(\lambda)AU(\lambda)^*$ is an element of
$\uafk_\mathrm{conv}(\ocal)$, and that $\udelta_\mu(\uA)_\lambda =
U(\lambda)\alpha_\mu(A)U(\lambda)^*$ with $\alpha_\mu(A) \in \hat\afk(\mu \ocal)$.
\end{proof}

For showing the consistency of our definitions, we now prove that the
Haag-Swieca compactness condition at finite scales, together with dilation
covariance, implies our uniform compactness condition of Def.~\ref{def:uniformCompactness}.

\begin{proposition} \label{pro:compactnessFiniteToUniform}
If the dilation covariant local net $\afk$ fulfills the Haag-Swieca compactness
condition, then it also fulfills uniform Haag-Swieca compactness.
\end{proposition}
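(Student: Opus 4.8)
The plan is to exploit dilation covariance to reduce the rescaled phase-space map, at all scales $\lambda$ simultaneously, to the \emph{single} finite-scale compact map $\Theta^{(\beta,\ocal)}\colon A \mapsto e^{-\beta H}A\Omega$. Writing $U(\mu)$ for the dilation unitaries of $\afk$, covariance gives $U(\lambda)HU(\lambda)\st = \lambda H$ and $U(\lambda)\Omega = \Omega$, hence $e^{-\beta\lambda H} = U(\lambda)e^{-\beta H}U(\lambda)\st$. Therefore, for $\uA \in \uafk(\ocal)$ with $\|\uA\|\le 1$,
\begin{equation*}
 (\uTheta^{(\beta,\ocal)}\uA)_\lambda
 = \cutoff_\lambda\, e^{-\beta\lambda H}\uA_\lambda\Omega
 = \cutoff_\lambda\, U(\lambda)\, e^{-\beta H} B_\lambda \Omega
 = \cutoff_\lambda\, U(\lambda)\, \Theta^{(\beta,\ocal)}(B_\lambda),
 \qquad B_\lambda := U(\lambda)\st \uA_\lambda U(\lambda) \in \afk(\ocal).
\end{equation*}
Here $\|B_\lambda\|\le 1$, so all vectors $\Theta^{(\beta,\ocal)}(B_\lambda)$ lie in $K := \clos\,\Theta^{(\beta,\ocal)}(\text{unit ball of }\afk(\ocal))$, which is \emph{compact} by the Haag--Swieca condition assumed for $\afk$.

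To build a uniform-rank decomposition I would fix an orthonormal basis $(\xi_j)_j$ of the separable closed span of $K$ and expand $\Theta^{(\beta,\ocal)}(B_\lambda) = \sum_j \langle \xi_j, \Theta^{(\beta,\ocal)}(B_\lambda)\rangle\, \xi_j$. This suggests setting
\begin{equation*}
 e_j(\uA):\ \lambda \mapsto \cutoff_\lambda\, \langle \xi_j, \Theta^{(\beta,\ocal)}(B_\lambda)\rangle,
 \qquad
 f_j:\ \lambda \mapsto U(\lambda)\xi_j,
\end{equation*}
so that $\uTheta^{(\beta,\ocal)} = \sum_j e_j(\cdotarg) f_j$. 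That $f_j \in \uhcal$ follows by approximating $\xi_j$ in norm with vectors $C\Omega$, $C \in \hat\afk(\ocal')$ (using strong density of $\hat\afk(\ocal')$ and cyclicity of $\Omega$), since then $U(\lambda)C\Omega = U(\lambda)CU(\lambda)\st\Omega$ and $\lambda \mapsto U(\lambda)CU(\lambda)\st$ lies in $\uafk_\mathrm{conv}(\ocal') \subset \uafk$; thus $\lambda \mapsto U(\lambda)\xi_j$ is a sup-norm limit of vectors $\uB\Omega$. Moreover $e_j$ is linear with $\|e_j\| \le \|\xi_j\|$.

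The crux, and the step I expect to be the main obstacle, is to verify that $e_j(\uA)$ actually lies in $\zfk(\uafk)$, i.e.\ that $\lambda \mapsto \langle \xi_j, \Theta^{(\beta,\ocal)}(B_\lambda)\rangle$ is \emph{uniformly continuous} on the multiplicative group $\rbb_+$. This is delicate precisely because $\lambda \mapsto B_\lambda$ need \emph{not} be continuous: elements of $\uafk(\ocal)$ carry only the index-shift continuity built into $\ualpha$, not any genuine-dilation continuity of $\uA_\lambda$. The way out is to avoid $B_\lambda$ and argue with the full vector field. By Lemma~\ref{lem:fBeta}, $\uchi := e^{-\beta\uH}\uA\Omega = \ualpha_{g_\beta}\uA\,\Omega$ is an element of $\uhcal$, so $\lambda \mapsto \uchi_\lambda$ is uniformly continuous (as is every scaling-algebra vector field, by dilation continuity). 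Since $\Theta^{(\beta,\ocal)}(B_\lambda) = U(\lambda)\st \uchi_\lambda$, I would estimate
\begin{equation*}
 \|\Theta^{(\beta,\ocal)}(B_{\mu\lambda}) - \Theta^{(\beta,\ocal)}(B_\lambda)\|
 \le \|\uchi_{\mu\lambda} - \uchi_\lambda\|
 + \|(U(\mu)\st - \idop)\, \Theta^{(\beta,\ocal)}(B_\lambda)\|,
\end{equation*}
using $(U(\mu)\st-\idop)U(\lambda)\st\uchi_\lambda = U(\lambda)\st(U(\mu)\st-\idop)\Theta^{(\beta,\ocal)}(B_\lambda)$. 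The first summand tends to $0$ uniformly in $\lambda$ by uniform continuity of $\uchi$, and the second tends to $0$ uniformly in $\lambda$ because $\Theta^{(\beta,\ocal)}(B_\lambda)$ ranges in the compact set $K$, on which the strongly continuous $U$ is uniformly continuous. Hence $\lambda \mapsto \Theta^{(\beta,\ocal)}(B_\lambda)$ is uniformly continuous; so is its pairing with $\xi_j$, and multiplication by the (uniformly continuous, eventually vanishing) cutoff $\cutoff$ keeps $e_j(\uA) \in \zfk(\uafk)$.

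It then remains to check norm convergence of the series. With $P_N$ the projection onto $\lspan\{\xi_0,\dots,\xi_N\}$, the $N$-th remainder sends $\uA$ to $\lambda \mapsto \cutoff_\lambda U(\lambda)(\idop - P_N)\Theta^{(\beta,\ocal)}(B_\lambda)$, whose sup-norm is bounded by $\sup_{v \in K}\|(\idop - P_N)v\|$; since $K$ is compact, the orthonormal expansion converges uniformly on $K$, so this bound vanishes as $N \to \infty$. Taking the supremum over $\|\uA\| \le 1$ shows $\sum_j e_j(\cdotarg) f_j \to \uTheta^{(\beta,\ocal)}$ in operator norm, so that $\uTheta^{(\beta,\ocal)}$ is uniformly compact. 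As $\ocal$ and $\beta$ were the ones furnished by finite-scale Haag--Swieca compactness, this is exactly the uniform Haag--Swieca compactness of Definition~\ref{def:uniformCompactness}.
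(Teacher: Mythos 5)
Your proof is correct, and it shares with the paper's the same covariance skeleton: the identity $(\uTheta^{(\beta,\ocal)}(\uA))_\lambda = \cutoff_\lambda\, U(\lambda)\,\Theta^{(\beta,\ocal)}(B_\lambda)$ with $B_\lambda = U(\lambda)\st\uA_\lambda U(\lambda)\in\afk(\ocal)$, and the verification that $\lambda\mapsto U(\lambda)\xi$ lies in $\uhcal$ by approximating $\xi$ with vectors $C\Omega$, $C\in\hat\afk(\ocal')$. But the three substantive steps are carried out by genuinely different means. The paper reads compactness of $\Theta^{(\beta,\ocal)}$ as a norm-convergent rank-one expansion $\sum_j e_j(\cdotarg)f_j$ with $e_j\in\afk(\ocal)\st$, replaces the $e_j$ by their \emph{normal} parts (a step it delegates to a cited lemma), lifts each term to $\ue_j(\uA)_\lambda = e_j\big(U(\lambda)\st\cutoff_\lambda\uA_\lambda U(\lambda)\big)$ and $\uf_{j\lambda}=U(\lambda)f_j$, proves $\ue_j(\uA)\in\zfk(\uafk)$ from norm continuity of $\udelta_\mu$ together with normality of $e_j$ and strong continuity of $U(\mu)$, and then inherits norm convergence of the lifted series verbatim from the finite-scale one. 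You instead read compactness as precompactness of the image $K$, expand along an orthonormal basis of $\clos\lspan K$ --- so your coefficients $A\mapsto\langle \xi_j, e^{-\beta H}A\Omega\rangle$ are vector functionals and the normal-part step disappears --- prove centrality of $e_j(\uA)$ from the uniform dilation-continuity of elements of $\uhcal$ combined with the fact that the strong convergence $U(\mu)\st\to\idop$ is uniform on the compact set $K$, and obtain norm convergence of the series from uniform convergence of the orthonormal expansion on $K$. Both routes are sound: yours is fully self-contained and sidesteps the normality issue entirely, while the paper's is shorter because its convergence statement is literally the finite-scale bound $\|\Theta^{(\beta,\ocal)}-\sum_{j\leq J}e_j(\cdotarg)f_j\|\,\|\uA\|$, whereas you must re-derive, through the uniform-on-$K$ arguments, the standard equivalence between ``precompact image'' and ``norm limit of finite-rank maps.'' Two slips, both harmless: the commutation identity quoted in your continuity step carries a spurious extra $U(\lambda)\st$ on its right-hand side; what you actually use (and what is true, since dilations commute) is $(U(\mu\lambda)\st - U(\lambda)\st)\,\uchi_\lambda = (U(\mu)\st-\idop)\,\Theta^{(\beta,\ocal)}(B_\lambda)$. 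And the bound $\|e_j\|\le\|\xi_j\|$ should read $\|e_j\|\le\|\xi_j\|\,\|\Theta^{(\beta,\ocal)}\|$; only boundedness is needed there.
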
 

\begin{proof}
Let $\ocal$ be fixed, and let $\beta>0$ such that $\Theta^{(\beta,\ocal)}$ is
compact;
\begin{equation}
  \Theta^{(\beta,\ocal)} = \sum_{j=1}^\infty e_j(\cdotarg) f_j
  \quad \text{with } e_j \in \afk(\ocal)\st, \; f_j \in \hcal.
\end{equation}
Taking the normal part, we can in fact arrange that $e_j \in
\afk(\ocal)_\ast$. (See \cite[Lemma~2.2]{BDF:universal_structure} for a similar
argument.) Now define $\ue_j: \uafk(\ocal) \to \zfk(\uafk)$ by
\begin{equation}
  \ue_j(\uA)_\lambda = e_j\big( U(\lambda)\st \cutoff_\lambda \uA_\lambda
  U(\lambda) \big).
\end{equation}
That the image is indeed in $\zfk(\uafk)$, i.e., continuous under $\udelta_\mu$,
is seen as follows. We compute for $\lambda,\mu>0$,
\begin{multline}
  \big| \ue_j(\uA)_{\lambda\mu} - \ue_j(\uA)_\lambda  \big|
  = 
  \big| e_j\Big( U(\lambda\mu)\st (\cutoff\uA)_{\lambda\mu} U(\lambda\mu) -
  U(\lambda)\st (\cutoff\uA)_\lambda U(\lambda) \Big)
  \big|
  \\
  \leq
  \|e_j\| \, \|\udelta_\mu(\cutoff\uA) - \cutoff\uA\| + 
  \|e_j\big( U(\mu)\st \cdotarg U(\mu) \big) 
  - e_j\|\, \|\cutoff\uA\|.
\end{multline}
Now as $\mu \to 0$, the first summand vanishes due to norm continuity of
$\udelta_\mu$ on $\uafk$, and the second due to strong continuity of
$U(\mu)$; both limits are uniform in $\lambda$. Thus
$\udelta_\mu$ acts continuously on $\ue_j(\uA)$.

Further, we define $\uf_j \in \uhcal$ by
\begin{equation}
  \uf_{j \lambda} = U(\lambda) f_j.
\end{equation}
This is indeed an element of $\uhcal$: Namely, given $\epsilon >0$, choose $A
\in\hat\uafk(\ocal)$ for suitable $\ocal$ such that $\|A\Omega-f_j\|<\epsilon$;
here $\hat\afk$ is as in the proof of Thm.~\ref{thm:dilation}.
Then, $\uA_\lambda = U(\lambda)AU(\lambda)\st$ defines an element of $\uafk$,
and $\|\uA\Omega-\uf_j\| \leq \|A\Omega-f_j\|<\epsilon$. Hence $\uf_j$ is
contained in the closure of $\uafk\Omega$.

Now we are in the position to show that $\uTheta^{(\beta,\ocal)} = \sum_j \ue_j
\uf_j$. Let $J\in\nbb$ be fixed. It is straightforward to compute that for
any $\uA \in \uafk(\ocal)$ and $\lambda>0$,
\begin{multline}
  \Big( \uTheta^{(\beta,\ocal)}(\uA) - \sum_{j=1}^{J} \ue_j(\uA)
  \uf_j\Big)_\lambda
  =  U(\lambda) \Big( \Theta^{(\beta,\ocal)}(B_\lambda) -\sum_{j=1}^J
  e_j\big(B_\lambda) f_j \Big),
    \\
    \text{where } B_\lambda = U(\lambda)\st \cutoff_\lambda \uA_\lambda
    U(\lambda).
\end{multline}
Note here that $B_\lambda \in \afk(\ocal)$ for any $\lambda$. This entails
\begin{equation}
  \big\| \uTheta^{(\beta,\ocal)} - \sum_{j=1}^{J} \ue_j(\cdotarg) \uf_j \big\|
  \leq \big\| \Theta^{(\beta,\ocal)} - \sum_{j=1}^{J} e_j(\cdotarg)f_j
  \big\|\,\|\uA\|.
\end{equation} 
The right-hand sides vanishes as $J \to \infty$, as a consequence of the
compactness condition at finite scales. This shows that
$\uTheta^{(\beta,\ocal)}$ is uniformly compact.
\end{proof}

\subsection{The scaling limit of a free field} 

We now show in a simple, concrete example from free field theory that the model
has a convergent scaling limit in the sense of Def.~\ref{def:convLimit}.
Specifically, we consider a real scalar free field of mass $m>0$, in
2+1 or 3+1 space-time dimensions. The algebraic scaling limit of this model is
the massless real scalar field; this was as already discussed in
\cite{BucVer:scaling_examples}, and in parts we rely on the arguments given
there. However, we need to consider several aspects that were not handled in that work, in particular continuity aspects of
Poincaré and dilation transformations. Also, as
mentioned before, in contrast to \cite{BucVer:scaling_examples} we deal with
weakly closed local algebras at fixed scales and in the limit theory. 

We start by recalling, for convenience, the necessary notations and definitions
from~\cite{BucVer:scaling_examples}. We consider the Weyl algebra $\wfk$ over $\dcal(\rbb^s)$, $s=2,3$:
\begin{equation} \label{eqn:WeylAlg}
W(f)W(g) = e^{-\frac{\imath}{2}\sigma(f,g)}W(f+g),\qquad\sigma(f,g) = \im
\int d^{s}x\, f(\xv)g(\xv),
\end{equation}
Then, we define a mass dependent automorphic action of $\poincare$ on $\wfk$
by
\begin{equation}
\alpham_{x,\Lambda}(W(f)) = W(\taum_{x,\Lambda}f),
\end{equation}
where the action $\taum$ of $\poincare$ on $\dcal(\rbb^s)$ is defined by the
following formulas. In those, we write $\tilde f(\pv) = (2\pi)^{-s/2}
\int d\xv\,f(\xv)e^{-i\xv\pv}$ for the Fourier transform of $f$, which we
split into $\tilde f = \tilde f_R + \imath \tilde f_I$, where $f_R = \re
f$ and $f_I = \im f$; also,  $\omega_m(\pv) :=
\sqrt{m^2+|\pv|^2}$.
\begin{align}
(\taum_\xv f)(\yv) &:= f(\yv-\xv),\\
\begin{split} \label{eq:timetrans}
(\taum_t f)\sptilde_R(\pv) &:= \cos(t\omega_m(\pv)) \tilde f_R (\pv) -
\omega_m(\pv)\sin(t\omega_m(\pv))\tilde{f}_I (\pv),\\ 
(\taum_t
f)\sptilde_I(\pv) &:=\cos(t\omega_m(\pv)) \tilde{f}_I(\pv) +
\omega_m(\pv)^{-1}\sin(t\omega_m(\pv))\tilde{f}_R(\pv), \end{split}
\\
\label{eq:lorentz}\begin{split}
(\taum_\Lambda f)\sptilde_R(\pv) &:= \varphi^f_\Lambda(\omega_m(\pv),\pv),\\
(\taum_\Lambda f)\sptilde_I(\pv) &:=
\omega_m(\pv)^{-1}\psi^f_\Lambda(\omega_m(\pv),\pv).
\end{split}
\end{align}
Here the functions $\varphi^f_\Lambda, \psi^f_\Lambda : \rbb^{s+1}\to \cbb$
are defined by
\begin{equation}\label{eq:lorentz2}
\begin{split}
\varphi^f_\Lambda(p) &:=
\frac{1}{2}\Big(\tilde{f}_R(\Lambdav^{-1}\pv)+\tilde{f}_R(\Lambdav^T\pv)\Big)
+\frac{1}{2\imath}\Big( (\Lambda^{-1}p)_0\tilde{f}_I
(\Lambdav^{-1}\pv)-(\Lambda^Tp)_0\tilde{f}_I(\Lambdav^T\pv)\Big),\\
\psi^f_\Lambda(p) &:=
-\frac{1}{2\imath}\Big(\tilde{f}_R(\Lambdav^{-1}\pv)-\tilde{f}_R(\Lambdav^T\pv)\Big)
+\frac{1}{2}\Big((\Lambda^{-1}p)_0\tilde{f}_I
(\Lambdav^{-1}\pv)+(\Lambda^Tp)_0\tilde{f}_I (\Lambdav^T\pv)\Big),
\end{split}
\end{equation}
where we use the notation $\Lambda p = ((\Lambda p)_0, \Lambdav \pv)$ for $\Lambda \in \lcal$, $p \in \rbb^{s+1}$. 
One verifies that all the above expressions are even in
$\omega_m(\pv)$, which, due to the analytic properties of $\tilde f$, 
implies that $\taum_{x,\Lambda}\dcal(\rbb^s) \subset \dcal(\rbb^s)$. 
We also introduce the action $\sigma$ of dilations on $\wfk$ by
\begin{equation}
\sigma_\lambda (W(f)) = W(\delta_\lambda f),
\end{equation}
with
\begin{equation}
(\delta_\lambda f)(\xv) := \lambda^{-(s+1)/{2}}(\re f)(\lambda^{-1}\xv) +
\imath\lambda^{-(s-1)/2}(\im f)(\lambda^{-1}\xv).
\end{equation}
It holds that $\alpham_{\lambda x,\Lambda}\circ \sigma_\lambda =
\sigma_\lambda \circ \alphalm_{x,\Lambda}$. Finally we define the vacuum state of mass $m \geq 0$ on $\wfk$ as
\begin{equation}
\omegam(W(f)) = e^{-\frac{1}{2}\|f\|_m^2},
\end{equation}
where
\begin{equation}\label{eq:norm}
\|f\|_m^2 := \frac{1}{2}\int_{\rbb^s}d\pv \big|
\omega_m(\pv)^{-1/2}\tilde{f}_R(\pv)+\imath\,\omega_m(\pv)^{1/2}\tilde{f}_I(\pv)\big|^2.
\end{equation}
There holds clearly $\omegam \circ \alpham_{x,\Lambda}=\omegam$, $\omegam \circ \sigma_\lambda = \omegalm$.

Proceeding now along the lines of~\cite{BucVer:scaling_examples}, we consider
the GNS representation $(\pi^{(0)}, \hcal^{(0)},\Omega^{(0)})$ of $\wfk$
induced by the massless vacuum state $\omega^{(0)}$. For each $m\geq 0$, we
define a net $\ocal \mapsto \afkm(\ocal)$ of von Neumann algebras on $\hcal^{(0)}$ as
\begin{equation}
\afkm(\Lambda \ocal_B + x) :=
\{\pi^{(0)}\big(\alpham_{x,\Lambda}(W(g))\big)\,:\,\supp g \subset B\}'',
\end{equation}
where $\ocal_B$ is any double cone with base the open ball $B$ in the time
$t=0$ plane. For other open regions we can define the algebras by taking unions,
but this will not be relevant for the following discussion. Due to the
local normality of the different states $\omegam$, $m\geq 0$, with respect to each other~\cite{EckFro:local_equiv}, these nets are isomorphic to the nets generated by the free scalar field of mass $m$ on the
respective Fock spaces. From now on, we will identify elements of $\wfk$ and of
$\afkm$, and therefore we will drop the indication of the representation
$\pi^{(0)}$. We denote the (dilation and Poincar\'e covariant) scaling algebra
associated to $\afkm$ by $\uafkm$. The next lemma generalizes the results
of~\cite[Lemma 3.2]{BucVer:scaling_examples} to the present situation.

\begin{lemma}\label{lem:weyllimit}
Let $a>1$ and $h^D \in \dcal((1/a,a))$, $h^P \in \dcal(\poincare)$, $f \in \dcal(\rbb^s)$, and consider the function $\uW : \rbb_+ \to \afkm$ given by
\begin{equation*}
\uW_\lambda := \int_{\rbb_+\times\poincare}\frac{d\mu}{\mu}\,dx\,d\Lambda\,
h^D(\mu)h^P(x,\Lambda)\alpham_{\mu\lambda x,\Lambda}\circ\sigma_{\mu\lambda}(W(f)),
\end{equation*}
where $d\Lambda$ is the left-invariant Haar measure on the Lorentz group and the
integral is to be understood in the weak sense. Then:
\begin{enumerate}
\localitemlabels
\item \label{Wlocal}there exists a double cone $\ocal$ such that $\uW \in \uafkm(\ocal)$;
\item \label{Wlimit}there holds in the strong operator topology,
\begin{equation*}
\lim_{\lambda \to 0^+}\sigma_\lambda^{-1}(\uW_\lambda) =
\int_{\rbb_+\times\poincare}\frac{d\mu}{\mu}\,dx\,d\Lambda\,
h^D(\mu)h^P(x,\Lambda)\alphaz_{\mu x,\Lambda}\circ\sigma_\mu(W(f)) =: W_0;
\end{equation*}
\item \label{Wdense}the span of the operators $W_0$ of the form above, with
$\uW \in \uafkm(\ocal)$ for fixed $\ocal$, is strongly dense in $\afkm(\ocal)$.
\end{enumerate} 
\end{lemma}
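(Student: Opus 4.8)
The three assertions rest on the single algebraic identity $\alpham_{\lambda x,\Lambda}\circ\sigma_\lambda = \sigma_\lambda\circ\alphalm_{x,\Lambda}$, which lets me move the dilation past the Poincar\'e action and convert the scale parameter into a shift of the mass. For part~\ref{Wlocal} I would separate localization from continuity. Since $\supp f$ lies in a fixed ball, $\sigma_{\mu\lambda}(W(f))=W(\delta_{\mu\lambda}f)$ is localized in the double cone $\mu\lambda\ocal_B$, and $\alpham_{\mu\lambda x,\Lambda}$ carries it into $\Lambda(\mu\lambda\ocal_B)+\mu\lambda x$; as $\mu$ ranges over $\supp h^D\subset(1/a,a)$ and $(x,\Lambda)$ over the compact set $\supp h^P$, all these regions lie in $\lambda\ocal$ for one fixed double cone $\ocal$, and weak closedness of $\afkm(\lambda\ocal)$ puts the weak integral $\uW_\lambda$ into it. For norm continuity of $g\mapsto\ualpha_g(\uW)$, the clean device is to recognize $\uW$ as a group average: setting $\uB_\lambda:=\sigma_\lambda(W(f))\in\ubfk$, the definition of $\ualpha$ gives $(\ualpha_{(\mu,x,\Lambda)}\uB)_\lambda=\alpham_{\mu\lambda x,\Lambda}\sigma_{\mu\lambda}(W(f))$, so that $\uW=\int_{\ugcal}dg\,h(g)\,\ualpha_g\uB$ with $h(g)=h^D(\mu)h^P(x,\Lambda)$ and $dg=\tfrac{d\mu}{\mu}\,dx\,d\Lambda$. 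Computing the composition law of $\ugcal$ shows that $dg$ is left-invariant, whence $\ualpha_{g'}\uW=\int dg\,h(g'^{-1}g)\,\ualpha_g\uB$ and $\|\ualpha_{g'}\uW-\uW\|\le\|\uB\|\int|h(g'^{-1}g)-h(g)|\,dg$; the right-hand side tends to $0$ as $g'\to e$ by continuity of translation in $L^1(\ugcal)$ for the smooth compactly supported density $h$. Together with localization this yields $\uW\in\uafkm(\ocal)$.

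For part~\ref{Wlimit} I would apply $\sigma_\lambda^{-1}$ under the integral and use $\sigma_\lambda^{-1}\alpham_{\mu\lambda x,\Lambda}\sigma_{\mu\lambda}=\sigma_\mu\,\alpha^{(\mu\lambda m)}_{x,\Lambda}$ (the intertwining identity with $\sigma_\lambda^{-1}\sigma_{\mu\lambda}=\sigma_\mu$), obtaining
\[
  \sigma_\lambda^{-1}(\uW_\lambda)=\int h^D(\mu)h^P(x,\Lambda)\,\sigma_\mu\,W\big(\tau^{(\mu\lambda m)}_{x,\Lambda}f\big)\,\tfrac{d\mu}{\mu}\,dx\,d\Lambda .
\]
Since $\alphaz_{\mu x,\Lambda}\sigma_\mu=\sigma_\mu\alphaz_{x,\Lambda}$, the claimed limit is $W_0=\int h^D h^P\,\sigma_\mu\,W(\tau^{(0)}_{x,\Lambda}f)$, so it suffices to show $W(\tau^{(\mu\lambda m)}_{x,\Lambda}f)\to W(\tau^{(0)}_{x,\Lambda}f)$ strongly as $\lambda\to0$, uniformly for $(\mu,x,\Lambda)$ in the compact support of $h^D\otimes h^P$. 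The key input is that $\mu\lambda m\to0$ uniformly there and that $\|\tau^{(m')}_{x,\Lambda}f-\tau^{(0)}_{x,\Lambda}f\|_0\to0$ as $m'\to0$, uniformly in $(x,\Lambda)$ on compacts, which I would read off from the explicit transformation formulas by dominated convergence in the integral defining $\|\cdot\|_0$, using $\omega_{m'}\to\omega_0=|\pv|$. One-particle convergence in $\|\cdot\|_0$ gives strong convergence of the Weyl operators; pulling the unitarily implemented $\sigma_\mu$ onto the (norm-compact, by strong continuity) family $\{V_\mu^\ast\Psi\}$ makes this convergence uniform on that vector set, which lets me pass the limit through the integral over the compact parameter region, where the integrands are bounded by $\|W(\cdot)\|=1$.

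For part~\ref{Wdense} the idea is to concentrate the smearing. Writing $W_0=\int h^D(\mu)h^P(x,\Lambda)\,\sigma_\mu\alphaz_{x,\Lambda}(W(f))$ and choosing normalized $h^D\to\delta_1$, $h^P\to\delta_{(0,\id)}$ while keeping the localization region strictly inside the fixed double cone $\ocal$, strong continuity of $(\mu,x,\Lambda)\mapsto\sigma_\mu\alphaz_{x,\Lambda}(W(f))$ yields $W_0\to W(f)$ strongly. Hence the strong closure of the span of the $W_0$ contains every $W(f)$ with $\supp f\subset\subset B$, where $B$ is the base of $\ocal$; since each $f$ with $\supp f\subset B$ already has compact support in the open set $B$, it contains all such $W(f)$. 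Because products of Weyl operators are scalar multiples of Weyl operators, the linear span of $\{W(f):\supp f\subset B\}$ is a unital $\ast$-algebra, so its strong closure is $\{W(f):\supp f\subset B\}''=\afkm(\ocal)$ (this algebra being mass-independent for a time-zero based double cone, as $\alpha^{(m)}_{0,\id}=\id$). As each $W_0$ is itself localized in $\ocal$ and hence lies in $\afkm(\ocal)$, the span is strongly dense in $\afkm(\ocal)$.

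The main obstacle is the uniform one-particle convergence $\tau^{(m')}_{x,\Lambda}f\to\tau^{(0)}_{x,\Lambda}f$ in $\|\cdot\|_0$ in part~\ref{Wlimit}. The delicate regions are near $\pv=0$, where the $\omega_{m'}^{-1}$ factors in the transformation formulas meet the vanishing of $\omega_0=|\pv|$, and the Lorentz sector, where $p_0=\omega_{m'}(\pv)$ enters $\varphi^f_\Lambda,\psi^f_\Lambda$; one must verify that the $\|\cdot\|_0$-integrand stays dominated uniformly in the parameters, which is precisely where the restriction $s=2,3$ (guaranteeing infrared integrability of $|\pv|^{-1}\,d^s\pv$ at the origin) enters.
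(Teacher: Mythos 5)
Your proposal is correct and takes essentially the same route as the paper's proof: part~(i) via the convolution (group-average) structure of $\uW$ under the $\ugcal$-action together with the support properties, part~(ii) by reducing strong convergence to the one-particle statement $\lim_{m'\to 0}\|\tau^{(m')}_{x,\Lambda}f-\tauz_{x,\Lambda}f\|_0=0$ through the $\|\cdot\|_0$-to-strong continuity of $g\mapsto W(g)$ and the $\delta_\mu$-invariance of $\|\cdot\|_0$, and part~(iii) by letting $h^D,h^P$ tend to delta functions and invoking density of the Weyl operators. The only substantive difference is that the uniformity in $(\mu,x,\Lambda)$ which you single out as the main obstacle is not actually needed: since each integrand is bounded by $2\|\chi\|\,|h^D(\mu)h^P(x,\Lambda)|$, the paper passes the limit through the parameter integral by dominated convergence, so convergence at each \emph{fixed} $(x,\Lambda)$ suffices, and this it establishes by exhibiting a class $\fcal$ of $m$-dependent test functions, invariant under $\tau^{(\cdot)}_{x,\Lambda}$, whose Fourier transforms converge pointwise as $m\to 0$ and are dominated by a single Schwartz function uniformly in $m\le 1$ --- precisely the rigorous form of the "dominated convergence from the explicit formulas" that you sketch.
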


\begin{proof}
Since $\uW$ is the convolution, with respect to the action $(\mu,x,\Lambda)\mapsto \ualpham_{\mu,x,\Lambda}$, of the function $h^D \otimes h^P$ with the bounded function $\lambda \mapsto \sigma_\lambda(W(f))$, and thanks to the support properties of $h^D$, $h^P$ and $f$,~\ref{Wlocal} follows.

In order to prove~\ref{Wlimit}, we start by observing that, for each vector $\chi \in \hcal^{(0)}$,
\begin{multline}
\|\Big(\sigma_\lambda^{-1}(\uW_\lambda)-W_0\Big)\chi\| \leq
\int_{\rbb_+\times\poincare}\frac{d\mu}{\mu}\,dx\,d\Lambda\,
|h^D(\mu)h^P(x,\Lambda)|
\\
\times \big\| \Big( W\big(\delta_\mu \tau^{(\mu\lambda
m)}_{x,\Lambda}f\big)-W\big(\delta_\mu \tauz_{x,\Lambda}f\big) \Big)\chi\big\|.
\end{multline}
Now $f \mapsto W(f)$ is known to be continuous with respect to
$\|\cdotarg\|_0$ on the initial space and the strong operator
topology on the target space \cite[Prop.~5.2.4]{BraRob:qsm2}.
Since the norm $\|\cdot\|_0$ is $\delta_\mu$-invariant, it therefore suffices to
show that for each fixed $(x,\Lambda)\in \poincare$,
\begin{equation}\label{eq:limitnorm}
\lim_{m\to 0^+}\big\| \taum_{x,\Lambda}f - \tauz_{x,\Lambda}f\big\|_0 = 0;
\end{equation}
for \ref{Wlimit} then follows from the dominated convergence theorem. In order
to show Eq.~\eqref{eq:limitnorm}, we introduce the following family of
functions $\fm{m}(\pv)$ of two arguments:
\begin{equation}
\begin{aligned}
   \fcal = \Big\{ f : [0,1] \times \rbb^s \to \cbb \,\Big|\,
         &\fm{m}(\cdotarg) \in \dcal(\rbb^s) \text{ for each fixed }m \in [0,1];
         \\
         & \lim_{m \to 0} \widetilde{\fm{m}}(\pv) = \widetilde{\fm{0}}(\pv)
         \text{ for each fixed }\pv \in\rbb^s;
         \\
         &\exists g \in \scal(\rbb^s) \, \forall m \in [0,1]\, \forall \pv \in
         \rbb^s: \;\;
         | \widetilde{\fm{m}}(\pv) | \leq g(\pv)
          \Big\}.
\end{aligned} 
\end{equation}
It is clear that for $f \in \fcal$, one has $\|\fm{m}-\fm{0}\|_0 \to 0$ as $m
\to 0$ per dominated convergence. Also, each $f \in \dcal(\rbb^s)$, with trivial
dependence on $m$, falls into $\fcal$. So it remains to show that the
(naturally defined) action of $\tausup{\cdotarg}_{x,\Lambda}$ leaves $\fcal$
invariant, where it suffices to check this for a set of generating subgroups. Indeed,
$\tausup{\cdotarg}_{x,\Lambda} \fcal \subset \fcal$ is clear for spatial
translations and rotations. For time translations and boosts, it was already remarked that
$\dcal(\rbb^s)$ is invariant under these at fixed $m$, and pointwise convergence
as $m \to 0$ is clear. Further, from Eqs.~\eqref{eq:timetrans} and \eqref{eq:lorentz2}, one
sees that $\tilde f$ is modified by at most polynomially growing functions,
uniform in $m\leq 1$, hence uniform $\scal$-bounds hold for
$\tausup{m}_{x,\Lambda} \fm{m}$ as well. (Again, it enters here that
all expressions are even in $\omega_m$, for which it is needed that $\tilde f$ is smooth.) This completes the proof of \ref{Wlimit}.
 
Finally,~\ref{Wdense} follows from the observation that as $h_D$ and $h_P$
converge to delta functions,  $W_0$ converges strongly to $W(f)$ thanks to the
strong continuity of the function $(\mu,x,\Lambda) \mapsto \alphaz_{\mu
x,\Lambda}\circ\sigma_\mu(W(f))$; and of course the span of the Weyl operators
with $\supp f \subset\subset\ocal$ is strongly dense in
$\afkm(\ocal)$.
\end{proof}

Using the above lemma, we can prove the following.

\begin{theorem}
The theory of the massive real scalar free field in $s=2,3$ spatial dimensions
has a convergent scaling limit.
\end{theorem}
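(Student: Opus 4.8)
The plan is to take as the distinguished subalgebra the one generated by the averaged Weyl elements produced by Lemma~\ref{lem:weyllimit}. For $f\in\dcal(\rbb^s)$ let $\uV_f\in\ubfk$ be the (generally non-$\ualpha$-continuous) element $(\uV_f)_\lambda=\sigma_\lambda(W(f))$, and for a test function $\phi$ on $\rbb_+\times\poincare$ with support away from $\mu=0$ set $\ualpha_\phi\uV_f:=\int dg\,\phi(g)\,\ualpha_g\uV_f$ with the invariant measure $dg=\tfrac{d\mu}{\mu}\,dx\,d\Lambda$. A direct check shows that $\ualpha_\phi\uV_f$ is exactly the element $\uW$ of Lemma~\ref{lem:weyllimit} when $\phi=h^D\otimes h^P$, and its proof applies verbatim to non-product $\phi$. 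I would then define $\uafk_\mathrm{conv}$ as the \cistar-subalgebra of $\uafk$ generated by all such $\ualpha_\phi\uV_f$. Since $\ualpha_{g'}(\ualpha_\phi\uV_f)$ is again of the same form, with $\phi$ replaced by a left-translated test function of the same support type, this generating set is permuted by the $\ualpha_{g'}$, so $\uafk_\mathrm{conv}$ is $\ualpha$-invariant; it lies in $\uafk$ because each generator lies in some $\uafkm(\ocal)$ by Lemma~\ref{lem:weyllimit}\ref{Wlocal}.

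For Def.~\ref{def:convLimit}(i) I would introduce the set $\mathcal{C}$ of all $\uA\in\uafkm$ for which $\sigma_\lambda^{-1}(\uA_\lambda)$ converges $\ast$-strongly as $\lambda\to0^+$. Because $\sigma_\lambda^{-1}$ is a $\ast$-automorphism and $\ast$-strong convergence of uniformly bounded families is stable under products, adjoints, and norm limits, $\mathcal{C}$ is a \cistar-algebra; by Lemma~\ref{lem:weyllimit}\ref{Wlimit} it contains all generators $\ualpha_\phi\uV_f$ (and their adjoints, which are of the same form), hence $\uafk_\mathrm{conv}\subset\mathcal{C}$. For $\uA\in\mathcal{C}$ with limit $A_0$, I would use $\omegam\circ\sigma_\lambda=\omegalm$ to rewrite
\[
  \omegam(\uA_\lambda)=\omegalm\big(\sigma_\lambda^{-1}(\uA_\lambda)\big),
\]
and show the right-hand side tends to $\omegaz(A_0)$. \emph{This is the step I expect to be the main obstacle:} one must combine the $\ast$-strong convergence $\sigma_\lambda^{-1}(\uA_\lambda)\to A_0$ with the convergence of the finite-mass vacua $\omegalm\to\omegaz$. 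Since each $\sigma_\lambda^{-1}(\uA_\lambda)$ lies in the mass-$\lambda m$ local algebra over a \emph{fixed} region $\ocal$, and the states $\omegalm$ are locally normal with respect to $\omegaz$ \cite{EckFro:local_equiv}, the natural route is to prove $\omegalm\to\omegaz$ in norm on that fixed local algebra (using $\|\cdot\|_{\lambda m}\to\|\cdot\|_0$ from \eqref{eq:norm}); norm convergence of the states together with $\ast$-strong convergence of the uniformly bounded $\sigma_\lambda^{-1}(\uA_\lambda)$ then gives $\omegalm(\sigma_\lambda^{-1}\uA_\lambda)\to\omegaz(A_0)$.

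For Def.~\ref{def:convLimit}(ii) I would observe that $\uA\mapsto A_0$, where $A_0$ is the $\ast$-strong limit of $\sigma_\lambda^{-1}(\uA_\lambda)$, is a $\ast$-homomorphism $\uafk_\mathrm{conv}\to\afkz$, and by the previous paragraph $\uomega_0\restrict\uafk_\mathrm{conv}=\omegaz\circ(\uA\mapsto A_0)$ for \emph{every} asymptotic mean, multiplicative or not. Hence for a multiplicative $\uomega_0$ the GNS data are identified with those of $\omegaz$: there is a unitary $V:\hcal_0\to\hcal^{(0)}$ with $V\pi_0(\uA)\Omega_0=A_0\Omega^{(0)}$, carrying in particular $\pi_0(\uW)\Omega_0$ to $W_0\Omega^{(0)}$. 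Under the known identification of the multiplicative scaling limit of the free field with the massless free net \cite{BucVer:scaling_examples} (obtainable here from Lemma~\ref{lem:weyllimit} by the Buchholz--Verch argument used in the proof of Thm.~\ref{thm:dilation}), one has $\afk_0(\ocal)\cong\afkz(\ocal)$, so that property (ii) becomes precisely the strong density statement of Lemma~\ref{lem:weyllimit}\ref{Wdense}. Together with (i) and the $\ualpha$-invariance established above, this exhibits $\uafk_\mathrm{conv}$ as a witness for a convergent scaling limit.
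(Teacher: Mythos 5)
Your proposal is correct and takes essentially the same route as the paper: the same generating elements $\uW$ of Lemma~\ref{lem:weyllimit} define $\uafk_\mathrm{conv}$; condition (i) is obtained by combining strong convergence of $\sigma_\lambda^{-1}(\uA_\lambda)$ with norm convergence of the massive to the massless vacuum on local algebras (your statement that $\omegalm\to\omegaz$ in norm on a fixed region is, by scaling covariance, identical to the paper's $\|(\omegam-\omegaz)\restrict\afkz(\lambda\ocal)\|\to 0$, both resting on local normality \cite{EckFro:local_equiv}); and condition (ii) is reduced, via the Buchholz--Verch identification of the multiplicative limit net with $\afkz$, to the density statement of Lemma~\ref{lem:weyllimit}~\ref{Wdense}, exactly as the paper does with its map $\phi$ extended by weak closure. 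The only caveat is presentational: your unitary $V$ is a priori only an isometry on $\clos\big(\pi_0(\uafk_\mathrm{conv})\Omega_0\big)$ until the density asserted in (ii) is known, but this causes no circularity because your subsequent appeal to the Buchholz--Verch isomorphism $\afk_0(\ocal)\cong\afkz(\ocal)$ is what carries the argument, just as in the paper.
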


\begin{proof}
Consider the \cistar-subalgebra $\huafkm(\ocal)$ of $\uafkm(\ocal)$ which is generated by the elements $\uW\in\uafkm(\ocal)$ defined in the previous lemma, and let $\huafkm$ be the corresponding quasi-local algebra. Since $\ualpham_{\mu,x,\Lambda}(\uW)$ is again an element of the
same form, just with shifted function $h^D\otimes h^P$, the algebra $\huafkm$ is
$\ualpham$ invariant. In order to verify that $\lambda \mapsto
\omegam(\uA_\lambda)$ has a limit, as $\lambda \to 0$, for all $\uA \in
\huafkm$, we start by observing that, thanks to
Lemma~\ref{lem:weyllimit}~\ref{Wlimit} and to the fact that $\sigma_\lambda$ is
unitarily implemented on $\hcal^{(0)}$, for each such $\uA$ there exists
$\lim_{\lambda \to 0^+}\sigma_\lambda^{-1}(\uA_\lambda) =: A$ in the strong
operator topology. Then if $\uA \in \huafkm(\ocal)$ there holds the inequality
\begin{equation}
|\omegam(\uA_\lambda) - \omegaz(A)| \leq \| (\omegam-\omegaz)\restrict
\afkz(\lambda \ocal)\| \|\uA\| + |\omegaz(\sigma_\lambda^{-1}(\uA_\lambda)) - \omegaz(A)|.
\end{equation}
Together with the fact that  $\lim_{\lambda\to 0^+} \|
(\omegam-\omegaz)\restrict \afkz(\lambda \ocal)\| = 0$ as a consequence of the
local normality of $\omegam$ with respect to $\omegaz$, this implies that
$\lim_{\lambda \to 0^+}\omegam(\uA_\lambda) = \omegaz(A)$ for all $\uA$ in some
local algebra $\huafkm(\ocal)$. This then extends to all of $\huafkm$ by
density.

It remains to show that for multiplicative limit states, $\pi_0(\huafkm \cap
\uafkm(\ocal))$ is weakly dense in $\afkm_0(\ocal)=\pi_0(\uafkm(\ocal))''$ for
any $\ocal$. To that end, we use similar methods as in Thm.~\ref{thm:dilation}.
With $\ocal$ fixed and $U$ the ultrafilter that underlies the limit state, we define
\begin{equation}
\phi: \pi_0(\uafkm(\ocal)) \to \afkz(\ocal), \quad
  \pi_0(\uA) \mapsto \lim_U \sigma_\lambda^{-1} (\uA_\lambda),
\end{equation}
with the limit understood in the weak operator topology.  
Using methods as in \cite[Sec.~3]{BucVer:scaling_examples}, one can show
that $\phi$ is indeed a well-defined isometric $\st$ homomorphism, which further
satisfies $\omega_0=\omegaz \circ \phi$ on the domain of $\phi$. Hence $\phi$
is given by the adjoint action of a partial isometry, and can be continued by
weak closure to a $\st$ homomorphism $\phi^-: \afkm_0(\ocal) \to
\afkz(\ocal)$. On the other hand, for $\uW \in \huafkm(\ocal)$ as in Lemma~\ref{lem:weyllimit}, one finds
$\phi(\pi_0(\uW))=W_0$ by \ref{Wlimit} of that lemma. However, the double
commutant of those $W_0$ is all of $\afkz(\ocal)$, see \ref{Wdense} of the same
lemma. So $\phi^-$ is in fact an isomorphism; and inverting $\phi^-$, one
obtains the proposed density.
\end{proof}

\subsection{Phase space conditions in the free field} \label{sec:pscExample}

Our aim in this section is to prove the uniform compactness condition of
Sec.~\ref{sec:compactness} in the case of a real scalar free field, again of
mass $m \geq 0$, in $3+1$ or higher dimensions. To that end, we will use a
short-distance expansion of local operators, very similar to the method used in 
the Appendix of \cite{Bos:short_distance_structure}, however in a refined
formulation.

In this section, we will consider a fixed mass $m\geq 0$ throughout, and
therefore we drop the label $(m)$ from the local algebras, the vacuum state, and
the Hilbert space norm for simplicity. We rewrite the Weyl operators of
Eq.~\eqref{eqn:WeylAlg} in terms of the familiar free field $\phi$ and its time
derivative $\partial_0\phi$ in the time-0 plane,
\begin{equation} \label{eqn:WeylAsFields}
  W(f) = \exp \imath \big( \phi(\re f) - \partial_0\phi(\im f) \big), \quad 
  f \in \dcal(\rbb^s).
\end{equation}
Also, we need to introduce some multi-index notation. Given $n \in \nbb_0$, we
consider multi-indexes $\nu = (\nu_1,\ldots,\nu_n) \in (\{0,1\} \times \nbb_0^s)^n$;
that is, each $\nu_j$ has the form
$\nu_j=(\nu_{j0},\nu_{j1},\ldots,\nu_{js})$ with
$\nu_{j0} \in \{0,1\}$, $\nu_{jk} \in \nbb_0$ for $1 \leq k
\leq s$. These indices will be used for labeling derivatives in configuration
space, $\partial^{\nu_j} = \partial_0^{\nu_{j0}}\ldots\partial_s^{\nu_{js}}$.
We denote
\begin{equation}
\nu_j! = \prod_{k=0}^{s} \nu_{jk}!\,, 
\quad
\nu! = \prod_{j=1}^{n} \nu_j!\,,
\quad
 |\nu_j| = \sum_{k=0}^{s} \nu_{jk}\,,
 \quad
 |\nu| = \sum_{j=1}^{n} |\nu_j|.
\end{equation}
Now we can define the following local
fields as quadratic forms on a dense domain.
\begin{equation}
   \phi_{n,\nu} = \;\;\wickprod{ \prod_{j=1}^n \partial^{\nu_j} \phi}(0).
\end{equation}
These will form a basis in the space of local fields at $x=0$. Further, for
given $r>0$, we choose a test function $h\in\dcal(\rbb^s)$
which is equal to 1 for $|\xv|\leq r$; then we set $h_{\nu_j} (\xv) = \prod_{k=1}^s
x_k^{\nu_{jk}}h(\xv)$. This is used to define the following functionals on
$\afk(\ocal_r)$.
\begin{equation} \label{eqn:sigmaNNuDef}
\sigma_{n,\nu} (A) = \frac{\imath^{n} (-1)^{\sum_j \nu_{j0}}}{n! \, \nu!}
  \bighrskp{\Omega}{  [ \partial_0^{(1-\nu_{10})} \phi(h_{\nu_1}),
  [ \ldots [\partial_0^{(1-\nu_{n0})} \phi(h_{\nu_n}) , A] \ldots ] \Omega}.
\end{equation}
One sees that this definition is independent of the choice of $h$. We
can therefore consistently consider $\sigma_{n,\nu}$ as a functional on
$\cup_\ocal \afk(\ocal)$, though its norm may increase as $\ocal$ grows large.
The significance of these functionals becomes clear in the following lemma.

\begin{lemma} \label{lem:taylorExpansionProto}
We have for all Weyl operators $A=W(f)$ with $f \in \dcal(\rbb^s)$,
\begin{equation*}
   A = \sum_{n=0}^\infty \sum_\nu \sigma_{n,\nu} (A) \, \phi_{n,\nu}
\end{equation*}
in the sense of matrix elements between vectors of finite energy and
finite particle number.
\end{lemma}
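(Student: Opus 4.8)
The plan is to obtain the identity as the normal-ordered (Wick) expansion of the Weyl operator, reorganised by a Taylor expansion of the smearing function about the origin, and then to identify the coefficients of this expansion with the functionals $\sigma_{n,\nu}$ through an orthogonality relation. First I would put $W(f)$ into Wick-ordered form. Writing $\Phi(f) = \phi(\re f) - \partial_0\phi(\im f)$ as in Eq.~\eqref{eqn:WeylAsFields}, the standard resummation of Wick's theorem gives $W(f) = e^{-\frac{1}{2}\|f\|_m^2}\,\wickprod{e^{\imath\Phi(f)}} = \sum_{n=0}^\infty \frac{\imath^n}{n!}\,e^{-\frac{1}{2}\|f\|_m^2}\wickprod{\Phi(f)^n}$, with $\|f\|_m$ the norm of Eq.~\eqref{eq:norm} (the $n=0$ term reproducing $\vacex{W(f)}$ and $\phi_{0,\emptyset}=\idop$). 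The key observation is that between two vectors of particle number at most $N$ only the terms with $n \leq 2N$ contribute, since $\wickprod{\Phi(f)^n}$ is a sum of monomials with exactly $n$ creation/annihilation operators in total; thus for fixed matrix elements the sum over $n$ is finite.

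Next, for each fixed $n$ I would Taylor-expand every field factor about $x=0$. Since $f \in \dcal(\rbb^s)$ has compact support, $\tilde f$ is entire, so in momentum space $\phi(g)$ and $\partial_0\phi(g)$ applied to the vacuum create one-particle wave functions entire in $\pv$; expanding $\tilde g(\pv)$ in powers of $\pv$ yields $\phi(g) = \sum_\alpha \frac{1}{\alpha!}\big(\int g(\xv)\xv^\alpha\,d\xv\big)\partial^\alpha\phi(0)$ in the sense of matrix elements, and similarly with one time derivative. The finite-energy restriction guarantees that the relevant one-particle wave functions have bounded energy support, i.e.\ $\pv$ ranges over a compact set, where these Taylor series converge uniformly; hence the expansion may be integrated against the wave functions and the resulting $\nu$-sum converges. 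Substituting into $\wickprod{\Phi(f)^n}$ and collecting terms produces exactly the Wick monomials $\phi_{n,\nu}$, so that $W(f) = \sum_{n,\nu} c_{n,\nu}\,\phi_{n,\nu}$ in matrix elements between finite-energy, finite-particle vectors, with explicit coefficients $c_{n,\nu}$ built from the moments of $\re f$ and $\im f$.

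It then remains to show $c_{n,\nu} = \sigma_{n,\nu}(W(f))$, which I would establish through the dual-basis relation $\sigma_{m,\mu}(\phi_{n,\nu}) = \delta_{mn}\delta_{\mu\nu}$. The nested commutator in Eq.~\eqref{eqn:sigmaNNuDef} uses the field $\partial_0^{(1-\nu_{j0})}\phi(h_{\nu_j})$, which is canonically conjugate to the $j$-th factor $\partial^{\nu_j}\phi$ of a target monomial; so by the equal-time canonical commutation relations each commutator contracts with exactly one factor, lowering the degree by one. After $n$ steps the fully contracted term is a c-number, while every partially contracted term is normal-ordered and annihilates $\Omega$. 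Since $h_{\nu_j}(\xv) = \xv^{\nu_j}h(\xv)$ coincides with the pure monomial near the origin, the surviving contractions evaluate the moments and, together with the prefactor $\imath^n(-1)^{\sum_j\nu_{j0}}/(n!\,\nu!)$ and the $n!$ pairings, yield precisely $\delta_{mn}\delta_{\mu\nu}$. Applying $\sigma_{m,\mu}$ to the expansion—legitimate because its defining matrix elements are themselves taken between finite-energy, finite-particle vectors (the smeared fields acting on $\Omega$ produce such vectors), so the $n$-sum is finite and the $\nu$-sum collapses by orthogonality—gives $\sigma_{m,\mu}(W(f)) = c_{m,\mu}$, completing the proof.

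The main obstacle is the rigorous control of this double expansion purely in the sense of matrix elements: justifying the interchange of the Taylor series with the normal-ordered operator structure and the convergence of the $\nu$-sum, for which the compact momentum support furnished by the finite-energy restriction is essential, together with the careful combinatorial matching of the coefficients through the canonical commutation relations.
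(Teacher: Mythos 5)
Your first two steps coincide with the paper's own argument: the same Wick-ordered rewriting $W(f) = e^{-\|f\|^2/2}\sum_{n}\frac{\imath^n}{n!}\wickprod{\big(\phi(\re f)-\partial_0\phi(\im f)\big)^n}$, finiteness of the $n$-sum in matrix elements between finite-particle vectors, and the momentum-space Taylor expansion justified by compact support of $f$ (so $\tilde f$ is entire) together with the compact momentum support supplied by the finite-energy restriction. The gap is in your identification of the coefficients. You apply $\sigma_{m,\mu}$ term by term to the series and invoke the dual-basis relation $\sigma_{m,\mu}(\phi_{n,\nu})=\delta_{mn}\delta_{\mu\nu}$ (which is itself plausible and standard), justifying the interchange by the claim that the matrix elements defining $\sigma_{m,\mu}$ ``are themselves taken between finite-energy, finite-particle vectors.'' That claim is false: expanding the nested commutator in Eq.~\eqref{eqn:sigmaNNuDef}, the vectors that occur are of the form $\partial_0^{\#}\phi(h_{\mu_{j_1}})\cdots\partial_0^{\#}\phi(h_{\mu_{j_k}})\Omega$ with $h_{\mu_j}\in\dcal(\rbb^s)$; these have finite particle number, but a nonzero compactly supported test function never has compactly supported Fourier transform, so these vectors have unbounded energy support. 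They are therefore precisely the vectors for which your own convergence argument (and the statement of the lemma) gives nothing: the Taylor partial sums of $\tilde f$ are controlled only by bounds of exponential type $Ce^{a|\pv|}$, while $\tilde h_{\mu_j}$ decays rapidly but in general not exponentially, so there is no domination for the $\nu$-sum against such wave functions, and it is not justified that it converges, let alone to the correct matrix element.

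The paper closes this step in a way that avoids any interchange of $\sigma_{m,\mu}$ with the infinite sum: one evaluates $\sigma_{n,\nu}(W(f))$ directly from the known commutation relations of Weyl operators with $\phi$ and $\partial_0\phi$. Since $[\partial_0^{\#}\phi(h),W(f)]$ is a c-number multiple of $W(f)$, the nested commutator collapses to $W(f)$ times a product of moments of $\re f$ and $\im f$, whence $\sigma_{n,\nu}(W(f))=e^{-\|f\|^2/2}\times(\text{explicit moment factors})$, and one checks by inspection that these are exactly the Taylor coefficients $c_{n,\nu}$ produced in your second step. If you insist on the dual-basis route instead, you must first extend the validity of the expansion to the larger class of vectors generated by $\dcal$-smeared time-zero fields applied to $\Omega$ — a substantial additional estimate, not a remark.
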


\begin{proof}
We indicate only briefly how this combinatorial formula can be
obtained; see also \cite[Sec.~7.2]{Bos:Operatorprodukte} and
\cite[Appendix]{Bos:short_distance_structure}. Using Wick ordering, we rewrite
Eq.~\eqref{eqn:WeylAsFields} for the Weyl operators as 
\begin{equation}
   W(f)  = e^{-\|f\|^2/2} \wickprod{\exp \imath \big(\phi(\re
   f)-\partial_0\phi(\im f) \big)}\; = e^{-\|f\|^2/2} \sum_{n=0}^\infty
   \frac{\imath^n}{n!} \, \wickprod{ \big(\phi(\re
   f)-\partial_0\phi(\im f) \big)^n }.
\end{equation}
Now, in each factor of the $n$-fold product $\wickprod{
(\ldots)^n }$, we expand both $\re f$ and $\im f$ into a Taylor series in
momentum space. Note that this is justified, since those functions have
compact support in configuration space, since they are evaluated in scalar products with functions of compact support in momentum space, and since the sum over $n$ is finite in matrix elements. The
Taylor expansion in momentum space corresponds to an expansion in
derivatives of $\delta$-functions in configuration space, and this is what
produces the fields $\phi_{n,\nu}$ localized at 0. We then need to identify the
remaining factors with $\sigma_{n,\nu}(A)$, which is done using the known commutation
relations of $W(f)$ with $\phi$ and $\partial_0\phi$.
\end{proof}

Our main task will now be to extend the above formula to more general states
and more observables, by showing that the sum converges in a suitable norm. To
that end, we need estimates of the fields and functionals involved.

\begin{lemma} \label{lem:fixedScaleEst}
Given $s\geq 2$, $m \geq 0$, and $r_0 > 0$, there exists a constant $c$ such
that the following holds for any $n,\nu$.
\begin{align}
  \label{eqit:betabounds} \tag{a} 
  \|e^{-\beta H} \phi_{n,\nu} \Omega\| &\leq c^n \, (n!)^{1/2} \, \nu! \,
  (\beta/2)^{-|\nu|-n(s-1)/2} &\quad &\text{for any $\beta > 0$},
  \\
  \label{eqit:ebounds} \tag{b}
  \|P(E) \phi_{n,\nu} P(E) \| &\leq c^n E^{|\nu|+n(s-1)/2}
  &\quad &\text{for any $E \geq 1$, provided $s \geq 3$,}
  \\ 
  \label{eqit:rbounds} \tag{c}
  \|\sigma_{n,\nu} \restrict \afk(\ocal_r) \| 
  &\leq c^n \, (n!)^{-1/2} (\nu!)^{-1} \, (3r)^{|\nu|+n(s-1)/2}
  &\quad &\text{for any $r\leq r_0$}.
\end{align}
\end{lemma}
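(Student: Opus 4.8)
The three estimates are all of the same flavor: they quantify how the norms of the fields $\phi_{n,\nu}\Omega$ and the functionals $\sigma_{n,\nu}$ grow with $n$ and $|\nu|$, and the strategy is to reduce everything to explicit momentum-space integrals for the free field and then bound these integrals uniformly in the mass $m\geq 0$. Throughout I would work in the Fock representation, writing $\phi$ and $\partial_0\phi$ in terms of creation and annihilation operators with the one-particle measure $d\pv/(2\omega_m(\pv))$ weighting, so that each $\partial^{\nu_j}\phi$ contributes a factor that in momentum space is a monomial $\pv^{\,\nu_j'}(\pm\imath\omega_m)^{\nu_{j0}}$ times the single-particle wave function. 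The Wick ordering in the definition of $\phi_{n,\nu}$ means that $\phi_{n,\nu}\Omega$ is a pure $n$-particle vector whose wave function is (the symmetrization of) a product of $n$ such monomials, so its norm squared is an $n$-fold product of one-particle integrals.

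For part~\eqref{eqit:betabounds}, the heat-kernel factor $e^{-\beta H}$ inserts $\prod_j e^{-\beta\omega_m(\pv_j)}$ into the $n$-particle integral. The plan is to bound each one-particle factor $\int d\pv\,(2\omega_m)^{-1}|\pv^{\,\nu_j'}\,\omega_m^{\nu_{j0}}|^2 e^{-2\beta\omega_m}$ by a one-dimensional Gamma-type integral: the angular part gives a combinatorial factor, and the radial integral $\int_0^\infty \rho^{2|\nu_j'|+s-1}\,\omega_m^{2\nu_{j0}-1}e^{-2\beta\omega_m}\,d\rho$ is estimated, uniformly in $m\geq 0$, by comparing with the massless case $\omega_m\geq|\pv|$ and evaluating $\int_0^\infty\rho^{a}e^{-2\beta\rho}d\rho = a!\,(2\beta)^{-a-1}$. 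This produces the scaling $(\beta/2)^{-|\nu_j|-(s-1)/2}$ per factor, and taking the product over $j$ together with the symmetrization factor $(n!)^{1/2}$ and the combinatorial $\nu!$ from the angular integrations yields the claimed bound with a suitable constant $c$ raised to the $n$-th power. The $m$-uniformity is the point to watch: one must check that replacing $\omega_m$ by its massless lower bound $|\pv|$ only worsens the estimate, which works because the exponential suppression dominates.

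Part~\eqref{eqit:ebounds} follows the same template but with the sharp energy projector $P(E)$ instead of $e^{-\beta H}$; here $s\geq 3$ is required so that the radial integral $\int_0^E\rho^{2|\nu_j'|+s-1}\omega_m^{2\nu_{j0}-1}d\rho$, now cut off at energy $E$ rather than exponentially damped, actually converges at $\rho=0$ in the massless case $m=0$ (where $\omega_m=|\pv|$ can give a negative power), and is bounded by $E^{2|\nu_j|+s-1}$; in $s=2$ the infrared behavior is too singular, which is exactly why the restriction appears. For part~\eqref{eqit:rbounds}, I would estimate the functional $\sigma_{n,\nu}$ directly from its defining formula \eqref{eqn:sigmaNNuDef}: the $n$-fold commutator of Weyl operators with smeared fields $\partial_0^{(1-\nu_{j0})}\phi(h_{\nu_j})$ reduces, by the Weyl relations, to a product of $c$-number symplectic pairings $\sigma(\,\cdot\,,h_{\nu_j})$, each of which is an integral of $\tilde f$ against the compactly-supported $h_{\nu_j}$ localized in $\ocal_r$; bounding $\|h_{\nu_j}\|$ in the relevant one-particle norm gives the factor $(3r)^{|\nu_j|+(s-1)/2}$ (the $3r$ rather than $r$ absorbing the support of the cutoff $h$), and the explicit prefactor $(n!\,\nu!)^{-1}$ in \eqref{eqn:sigmaNNuDef} supplies the inverse combinatorial factors.

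The main obstacle I expect is the \emph{uniformity in the mass} $m\geq 0$, including the massless endpoint, since the one-particle measure $\omega_m^{-1}$ degenerates at $\pv=0$ precisely when $m=0$; all three estimates must be arranged so that the infrared region is controlled by the spatial dimension alone. The second real point of care is the bookkeeping of the combinatorial constants: one must track the symmetrization factors, the multinomial coefficients from the angular momentum-space integrations, and the factorials in \eqref{eqn:sigmaNNuDef} so that they conspire to give exactly $(n!)^{1/2}\nu!$ in~\eqref{eqit:betabounds} and its reciprocal in~\eqref{eqit:rbounds}, with all remaining $n$-dependence absorbed into the single geometric factor $c^n$. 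Once these two issues are handled, the individual radial integrals are elementary Gamma integrals and the proof is routine.
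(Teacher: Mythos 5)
Your treatment of part \eqref{eqit:betabounds} is essentially the paper's argument: write $\phi_{n,\nu}\Omega$ as the $n$-particle vector $\prod_{j}a^\ast(e^{-\beta\omega}p^{\nu_j})\Omega$, pull out the symmetrization factor $(n!)^{1/2}$, and reduce to one-particle Gamma-type integrals, exactly as in the paper. Two small remarks there: the constant $c$ in the Lemma is allowed to depend on $m$ (it is quantified after ``Given $s$, $m$, $r_0$''), so the mass-uniformity you single out as the main obstacle is not actually required; and the factors growing geometrically in $|\nu|$ (e.g.\ from $(2k)!\le 4^k(k!)^2$ and from converting $|\nu_j|!$ into $\nu_j!$) can \emph{not} be absorbed into $c^n$, since $|\nu|$ is unbounded at fixed $n$ --- they must be absorbed into the deliberately slack bases $(\beta/2)$ and $(3r)$ of the power laws. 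Your bookkeeping remark attributes all residue to $c^n$, which would not close.

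Parts \eqref{eqit:ebounds} and \eqref{eqit:rbounds} have genuine gaps. In \eqref{eqit:ebounds} you must bound an \emph{operator} norm, $\|P(E)\phi_{n,\nu}P(E)\|$, and this does not follow the ``same template'' as the vector norm in \eqref{eqit:betabounds}: sandwiching $n$ creation and annihilation operators between energy projectors costs a factor $E^{n/2}$, which your accounting omits; there is no theorem that lets you factorize such an operator norm directly into one-particle integrals. The paper's proof rests on energy bounds of Buchholz--Porrmann type,
\begin{equation*}
\|P(E)\,a^\ast(\omega_m^{1/2}f_1)\cdots a^\ast(\omega_m^{1/2}f_k)\| \;\le\; E^{k/2}\,\|Q(E)f_1\|\cdots\|Q(E)f_k\|,
\end{equation*}
which force the splitting $p^{\nu_j}=\omega_m^{1/2}\cdot(\omega_m^{-1/2}p^{\nu_j})$; the one-particle functions then carry the extra weight $\omega_m^{-1/2}$, their cut-off norms $\|\omega_m^{-1/2}p^{\nu_j}\chi_E\|$ have radial integrand behaving like $\rho^{s-3}$ at $m=0$, $\nu_j=0$, and this is precisely where $s\ge 3$ enters, together with the per-factor exponent $E^{|\nu_j|+(s-2)/2}$ that combines with $E^{n/2}$ to give $E^{|\nu|+n(s-1)/2}$. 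Note that your own radial integral (weight $\omega_m^{2\nu_{j0}-1}$) converges already for $s=2$, so the reason you give for the restriction $s\ge3$ is inconsistent with your formula --- a symptom of the missing energy bound. In \eqref{eqit:rbounds} the quantity to bound is $\sup\{|\sigma_{n,\nu}(A)| : A\in\afk(\ocal_r),\ \|A\|\le 1\}$. Reducing the multiple commutator to $c$-number symplectic pairings via the Weyl relations only controls $\sigma_{n,\nu}(W(f))$; a linear functional bounded on a generating family of unitaries need not be bounded, with the same constant or at all, on the von Neumann algebra they generate, since general elements are weak limits of linear combinations with uncontrolled coefficients. The paper instead expands each smeared field in \eqref{eqn:sigmaNNuDef} into creation and annihilation parts and estimates the $n$-fold commutator for \emph{arbitrary} $A$, obtaining $\|\sigma_{n,\nu}\restrict\afk(\ocal_r)\|\le 4^n\,(n!)^{-1/2}(\nu!)^{-1}\prod_j\|\omega_m^{(1-\nu_{j0})}h_{r,\nu_j}\|$, and only then applies scaling estimates to the fixed test functions $h_{r,\nu_j}$. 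You would need an argument of this kind to pass from Weyl generators to the full local algebra.
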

\begin{proof}
One has
\begin{equation}
 \|e^{-\beta H} \phi_{n,\nu} \Omega \| = 
 \big\| \Big(\prod_{j=1}^n
  a\st(e^{-\beta\omega} p^{\nu_j}) \Big)\,\Omega \big\|
  \leq \sqrt{n!} \prod_{j=1}^{n} \|e^{-\beta\omega} p^{\nu_j}\|.
\end{equation}
For the single-particle vectors on the right-hand side, one uses scaling
arguments to obtain the estimate
\begin{equation}
 \|e^{-\beta\omega} p^{\nu_j}\| \leq c_1 \,  \nu_j! \, 
 (\beta/2)^{-|\nu_j|-(s-1)/2} \quad \text{for $\beta > 0$},
\end{equation}
where $c_1$ is a constant (depending on $s$ and $m$). This implies
\eqref{eqit:betabounds}.---For \eqref{eqit:ebounds}, we use energy bounds for
creation operators $a\st(f)$, similar to \cite[Sec.~3.3]{BucPor:phase_space}.
One finds for single particle space functions $f_1,\ldots,f_k$ in the domain of
$\omega_m^{1/2}$,
\begin{equation}
   \| P(E) a\st(\omega_m^{1/2} f_1)\ldots a\st(\omega_m^{1/2} f_k) \|
   \leq E^{k/2} \|Q(E) f_1\| \ldots \|Q(E) f_k\|,
\end{equation}
with $Q(E)$ being the energy projector for energy $E$ in single particle
space. This leads us to 
\begin{equation}
  \|P(E) \phi_{n,\nu} P(E)\| \leq 2^n E^{n/2} \prod_{j=1}^n \|\omega_m^{-1/2}
  p^{\nu_j} \chi_E\|,
\end{equation}
where $p^{\nu_j}=\prod_{k=0}^s p_k^{\nu_{jk}}$, and $\chi_E$ is the
characteristic function of $\omega_m(\pv) \leq E$. For the single-particle
functions, one obtains
\begin{equation}
    \|\omega_m^{-1/2}  p^{\nu_j} \chi_E\| \leq c_2 E^{|\nu_j|+(s-2)/2}
\end{equation}
with a constant $c_2$, which implies \eqref{eqit:ebounds}.

Now consider the functional $\sigma_{n,\nu}$. We choose a test function $h_1 \in
\dcal(\rbb_+)$ such that $h_1(x)\leq 1$ for all $x$, $h_1(x) =1$ on $[0,1]$, and
$h_1(x)=0$ for $x\geq 2$. Then, $h_r(\xv) = h_1(|\xv|/r)$ is a valid choice for
the test function used in the definition of $\sigma_{n,\nu}\restrict\afk(\ocal_r)$,
see Eq.~\eqref{eqn:sigmaNNuDef}. Expressing the fields $\phi$ there in
annihilation and creation operators, and writing each commutator as a sum of
two terms, we obtain
\begin{equation}
  \|\sigma_{n,\nu} \restrict \afk(\ocal_r) \| 
  \leq \frac{4^n}{\sqrt{n!}\, \nu!} \prod_{j=1}^n \| \omega_m^{(1-\nu_{j0})}
    h_{r,\nu_{j}}\|.
\end{equation}
Again, we use scaling arguments for the single-particle space functions and
obtain for $r \leq r_0$,
\begin{equation}
  \| \omega_m^{(1-\nu_{j0})} h_{r,\nu_{j}} \|
  \leq c_3 (3r)^{|\nu_j|+(s-1)/2}
\end{equation}
with a constant $c_3$ that depends on $r_0$. This yields \eqref{eqit:rbounds}.
\end{proof}

We now define the ``scale-covariant" objects that will allow us to expand the
maps $\uTheta^{(\beta,\ocal)}$ in a series. They are constructed of the fields
$\phi_{n,\nu}$ and the functionals $\sigma_{n,\nu}$ by multiplication with
appropriate powers of $\lambda$. We begin with the quantum fields.

\begin{proposition} \label{pro:uphiExist}
For any $n,\nu$ and $\beta >  0$, the function
\begin{equation*}
   \uchi_{n,\nu,\beta} : \lambda \mapsto \lambda^{|\nu|+n(s-1)/2}
   e^{-\beta\lambda H} \phi_{n,\nu} \Omega
\end{equation*}
defines an element of $\uhcal$, with norm estimate
$\|\uchi_{n,\nu,\beta}\| \leq c^n \, (n!)^{1/2} \, \nu! \,
  (\beta/2)^{-|\nu|-n(s-1)/2}$.
\end{proposition}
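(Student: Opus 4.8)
The statement has two parts, and I would attack them separately. The norm estimate is a direct consequence of Lemma~\ref{lem:fixedScaleEst} together with the scaling built into the definition of $\uchi_{n,\nu,\beta}$, whereas the assertion $\uchi_{n,\nu,\beta}\in\uhcal$ carries the real content and requires an approximation by genuine scaling-algebra elements.

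For the estimate, recall that by definition $\|\uchi_{n,\nu,\beta}\| = \sup_{\lambda>0}\lambda^{|\nu|+n(s-1)/2}\,\|e^{-\beta\lambda H}\phi_{n,\nu}\Omega\|$. I would apply estimate~(a) of Lemma~\ref{lem:fixedScaleEst} with $\beta\lambda$ in place of $\beta$ (legitimate since it holds for any positive argument), which gives $\|e^{-\beta\lambda H}\phi_{n,\nu}\Omega\|\leq c^n(n!)^{1/2}\nu!\,(\beta\lambda/2)^{-|\nu|-n(s-1)/2}$. Multiplying by $\lambda^{|\nu|+n(s-1)/2}$, the two powers of $\lambda$ cancel exactly, so the quantity under the supremum is bounded by $c^n(n!)^{1/2}\nu!\,(\beta/2)^{-|\nu|-n(s-1)/2}$ independently of $\lambda$; taking the supremum yields both the finiteness of the sup-norm and the claimed bound.

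For membership in $\uhcal=\clos\{\uA\Omega\mid\uA\in\uafk\}$, the plan is to produce genuine scaling-algebra elements whose regularized vacuum vectors lie in $\uhcal$ and converge to $\uchi_{n,\nu,\beta}$ in the supremum norm. I would approximate the Wick-ordered monomial $\phi_{n,\nu}=\wickprod{\prod_j\partial^{\nu_j}\phi}(0)$ by products of free fields smeared with test functions localized in a region of size $\sim\lambda$ about the origin and carrying the multi-index weight, so that the momentum factors $p^{\nu_j}$ and the scaling dimension $\lambda^{|\nu|+n(s-1)/2}$ are reproduced. The passage from unbounded smeared fields to bounded operators of $\afkm(\lambda\ocal)$ is handled through Weyl operators, extracting the $(n,\nu)$ contribution via the short-distance expansion of Lemma~\ref{lem:taylorExpansionProto} and subtracting the lower-order Wick-contraction terms, which are scale-covariant objects of strictly lower order and can be treated inductively. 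These bounded local operators, chosen to depend continuously on $\lambda$ in the $\ualpha$-sense, assemble into an element $\uW\in\uafkm(\ocal)$, so that $\uW\Omega\in\uhcal$; applying $e^{-\beta\uH}$ and invoking Lemma~\ref{lem:fBeta} then shows $e^{-\beta\uH}\uW\Omega\in\uhcal$, while the regularization simultaneously turns the formal field vectors into the genuine vectors $e^{-\beta\lambda H}\phi_{n,\nu}\Omega$.

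The main obstacle is the uniformity in $\lambda$: since $\uhcal$ is closed, it suffices that $e^{-\beta\uH}\uW\Omega\to\uchi_{n,\nu,\beta}$ in the supremum norm over \emph{all} $\lambda>0$, which is much stronger than the pointwise-in-$\lambda$ convergence one gets from a fixed-scale Reeh-Schlieder argument. I would control this by exploiting the covariant scaling of the single-particle wave functions $e^{-\beta\lambda\omega_m}p^{\nu_j}$, under which the mass $m$ at scale $\lambda$ behaves like the mass $\lambda m$; as $\lambda m$ remains bounded for $\lambda\to 0$, the estimates of Lemma~\ref{lem:fixedScaleEst} apply uniformly and the error of replacing the point-localized field by its smeared, scaled approximant can be made small simultaneously for all small $\lambda$. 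The complementary regime $\lambda\to\infty$ is harmless, since the factor $e^{-\beta\lambda H}$ suppresses the purely $n$-particle target ($n\geq1$) exponentially, so that target and approximant both tend to zero there.
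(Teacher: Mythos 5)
Your first half is fine: the norm bound follows from Lemma~\ref{lem:fixedScaleEst}\eqref{eqit:betabounds} applied with $\beta\lambda$ in place of $\beta$, and the exact cancellation of the powers of $\lambda$ is indeed the point of the definition of $\uchi_{n,\nu,\beta}$; this is the same argument as in the paper.

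The membership claim $\uchi_{n,\nu,\beta}\in\uhcal$ is where the real work lies, and your construction has two genuine gaps. First, the step that passes from Weyl operators to the Wick monomial is asserted rather than proved, and the tool you invoke points in the wrong direction: Lemma~\ref{lem:taylorExpansionProto} expands a Weyl operator in the fields $\phi_{n,\nu}$, whereas you need the converse --- operators $\uW_\lambda\in\afk(\lambda\ocal)$, uniformly bounded in $\lambda$ and $\ualpha$-continuous, whose energy-damped vacuum vectors approximate $\lambda^{|\nu|+n(s-1)/2}e^{-\beta\lambda H}\phi_{n,\nu}\Omega$ in the norm $\sup_\lambda\|\cdotarg\|$. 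Inverting the expansion (say, by finite differences of Weyl operators, with your ``inductive subtraction of lower-order Wick terms'') yields approximants converging to $\phi_{n,\nu}$ only as quadratic forms on dense domains; upgrading this to uniform-in-$\lambda$ norm convergence of the damped vectors, while retaining locality, uniform boundedness and continuity of the symmetry action, is precisely the analytic content that the paper obtains from the approximation theorem of \cite{BDM:scaling_fields} (Thm.~3.8 there), after verifying via the energy bounds of Lemma~\ref{lem:fixedScaleEst}\eqref{eqit:ebounds} that $\uphi_\lambda=\lambda^{|\nu|+n(s-1)/2}\phi_{n,\nu}$ is an admissible scale-covariant field and then using $\|e^{-\beta\uH}\uR^{-\ell}\|<\infty$ together with the norm-closedness of $\uhcal$. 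Your sketch compresses all of this into a sentence. Second, your treatment of the regime $\lambda\to\infty$ fails in the massless case, which is included in this section ($m\geq0$): for $m=0$, dilation covariance of the free field gives $\lambda^{|\nu|+n(s-1)/2}\|e^{-\beta\lambda H}\phi_{n,\nu}\Omega\| = \|e^{-\beta H}\phi_{n,\nu}\Omega\|$ for \emph{all} $\lambda$, a nonzero constant, so there is no exponential suppression of the target at large $\lambda$ and that part of your uniformity argument collapses. One would instead have to exploit exact dilation covariance there (in the spirit of Thm.~\ref{thm:dilation}), or argue as the paper does, where this problem never arises because the approximation is carried out from the outset in the scale-uniform norms $\lnorm{\cdotarg}{\ell}$.
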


\begin{proof}
We use techniques from \cite{BDM:scaling_fields}, and adopt the notation
introduced there. In particular, $\uR$ 
denotes the function $\uR_\lambda = (1+\lambda H)^{-1}$, and we write $\lnorm{
\uA }{\ell} = \sup_\lambda \| \uR_\lambda^\ell \uA_\lambda \uR_\lambda^\ell\|$,
where $\uA_\lambda$ may be unbounded quadratic forms. Let $n,\nu$ be fixed in
the following. We set
\begin{equation}
   \uphi_\lambda = \lambda^{|\nu|+n(s-1)/2} \phi_{n,\nu}.
\end{equation}
From Lemma~\ref{lem:fixedScaleEst}, one sees that $\|P(E/\lambda) \uphi_\lambda
P(E/\lambda)\|$ is bounded uniformly in $\lambda$. Hence, applying
\cite[Lemma~2.6]{BDM:scaling_fields}, we obtain $\lnorm{\uphi}{\ell}<\infty$
for sufficiently large $\ell$. Also, the action $g \mapsto \ualpha_g \uphi$ of
the symmetry group on $\uphi$ (which extends canonically from bounded operators
to quadratic forms) is continuous in some $\lnorm{\cdotarg}{\ell}$: This is clear
for translations by the energy-damping factor; for dilations it is immediate
from the definition; and for Lorentz transformations it holds since they act by
a finite-dimensional matrix representation on $\phi_{n,\nu}$. Thus, $\uphi$ is
an element of the space $\uPhi$ defined in
\cite[Eq.~(2.39)]{BDM:scaling_fields}. Moreover, each $\uphi_\lambda$ is
clearly an element of the Fredenhagen-Hertel field content $\PhiFH$. Thus,
\cite[Thm.~3.8]{BDM:scaling_fields} provides us with a sequence $(\uA_n)$ in
$\uafk(\ocal)$, with $\ocal$ a fixed neighborhood of zero, such that
$\lnorm{\uA_n-\uphi}{\ell} \to 0$ as $n \to \infty$. Now since
$\|\exp(-\beta \uH) \uR^{-\ell}\|< \infty$, 
we obtain $\|\exp(-\beta\uH) (\uA_n-\uphi) \Omega \| \to 0$. 
Note here that $\exp(-\beta \uH) \uA \Omega \in \uhcal$ by
Lemma~\ref{lem:fBeta}. Hence $\exp(-\beta \uH) \uphi \Omega =
\uchi_{n,\nu,\beta}$ lies in $\uhcal$, since this space is closed in norm.
The estimate for $\|\uchi_{n,\nu,\beta}\|$ follows directly from Lemma~\ref{lem:fixedScaleEst}\eqref{eqit:betabounds}.
\end{proof}

Next, we rephrase the functionals $\sigma_{n,\nu}$ as maps from the scaling
algebra $\uafk$ to its center.

\begin{lemma} \label{lem:usigmaExist}
For any $n,\nu$, the definition
\begin{equation*}
   (\usigma_{n,\nu} (\uA))_\lambda =  \lambda^{-|\nu|-n(s-1)/2}
    \sigma_{n,\nu}((\cutoff\uA)_\lambda)
\end{equation*}
yields a linear map $\usigma_{n,\nu}:\cup_\ocal \uafk(\ocal) \to
\zfk(\uafk)$, with its norm bounded by 
\begin{equation*}
\|\usigma_{n,\nu} \restrict \uafk(\ocal_r) \|  \leq c^n
(\sqrt{n!} \;\nu!)^{-1} \, (3r)^{|\nu|+n(s-1)/2}
\end{equation*}
for $r \leq r_0$; here $r_0,c$ are the constants of
Lemma~\ref{lem:fixedScaleEst}.
\end{lemma}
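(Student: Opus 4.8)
The plan is to verify three things: that $\usigma_{n,\nu}$ is well defined and linear, that its values actually lie in the center $\zfk(\uafk)$, and that the stated norm bound holds. Linearity is immediate from the linearity of the functional $\sigma_{n,\nu}$, and the value $(\usigma_{n,\nu}(\uA))_\lambda$ is a scalar, to be identified as usual with the central element $\lambda \mapsto (\usigma_{n,\nu}(\uA))_\lambda\,\idop$. Throughout I abbreviate $w := |\nu| + n(s-1)/2$, so that the prefactor in the definition is $\lambda^{-w}$ and the exponent appearing in Lemma~\ref{lem:fixedScaleEst}\eqref{eqit:rbounds} is exactly $w$. The decisive structural input is that for $\uA \in \uafk(\ocal_r)$ one has $(\cutoff\uA)_\lambda \in \afk(\lambda\ocal_r) = \afk(\ocal_{\lambda r})$, while the cutoff $\cutoff$ forces $(\cutoff\uA)_\lambda = 0$ for $\lambda > 1$; hence only radii $\lambda r \leq r \leq r_0$ ever occur, and Lemma~\ref{lem:fixedScaleEst}\eqref{eqit:rbounds} is always applicable.

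For the norm estimate, fix $\uA \in \uafk(\ocal_r)$ with $r \leq r_0$ and set $\uB := \cutoff\uA$, noting $\uB \in \uafk$ with $\norm{\uB_\lambda} \leq \norm{\uA}$. For $\lambda \leq 1$ I would apply Lemma~\ref{lem:fixedScaleEst}\eqref{eqit:rbounds} with radius $\lambda r$ to obtain
\begin{equation*}
  | \sigma_{n,\nu}(\uB_\lambda) | \leq c^n (\sqrt{n!}\,\nu!)^{-1} (3\lambda r)^{w} \norm{\uA},
\end{equation*}
and multiplying by $\lambda^{-w}$ the factor $\lambda^{w}$ cancels exactly, leaving a bound independent of $\lambda$; for $\lambda > 1$ the left-hand side vanishes. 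Taking the supremum over $\lambda$ then yields $\norm{\usigma_{n,\nu}\restrict\uafk(\ocal_r)} \leq c^n (\sqrt{n!}\,\nu!)^{-1}(3r)^{w}$, as claimed. This cancellation is precisely the scale-covariant normalization encoded in the prefactor $\lambda^{-w}$.

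The heart of the argument is to show that $F(\lambda) := (\usigma_{n,\nu}(\uA))_\lambda$ is uniformly continuous on $\rbb_+$ viewed as a group under multiplication, which is what membership in $\zfk(\uafk)$ requires. I would estimate $\sup_\lambda | F(\mu\lambda) - F(\lambda) |$ as $\mu \to 1$, writing the difference as
\begin{equation*}
  F(\mu\lambda) - F(\lambda) = \lambda^{-w}\mu^{-w}\,\sigma_{n,\nu}\big( \uB_{\mu\lambda} - \uB_\lambda \big) + \lambda^{-w}(\mu^{-w}-1)\,\sigma_{n,\nu}(\uB_\lambda).
\end{equation*}
The second summand is controlled by the boundedness established in the previous step times $|\mu^{-w}-1| \to 0$, uniformly in $\lambda$. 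For the first summand I would use that $\uB_{\mu\lambda} - \uB_\lambda = (\udelta_\mu\uB - \uB)_\lambda$ and that for $\mu$ close to $1$ both $\uB_{\mu\lambda}$ and $\uB_\lambda$ lie in $\afk(\ocal_{\max(\mu,1)\lambda r})$ with radius below $r_0$; applying Lemma~\ref{lem:fixedScaleEst}\eqref{eqit:rbounds} once more and cancelling $\lambda^{w}$ bounds it by a constant multiple of $\norm{\udelta_\mu\uB - \uB}$. Since $\cutoff\uA \in \uafk$, the map $\udelta_\mu$ acts norm-continuously on it, so $\norm{\udelta_\mu\uB - \uB} \to 0$ as $\mu \to 1$. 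Both summands therefore vanish uniformly in $\lambda$, giving the required uniform continuity.

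The main obstacle I anticipate is this last continuity claim rather than the norm bound: one must simultaneously control the $\lambda$-dependent prefactor $\lambda^{-w}$, the shrinking localization radius $\lambda r$, and the behaviour across the cutoff boundary near $\lambda = 1$. The point that makes it go through cleanly is that absorbing the cutoff into $\uB = \cutoff\uA$ converts the awkward boundary behaviour into ordinary norm-continuity of $\udelta_\mu$ on $\uafk$ (in particular $\cutoff$ vanishes continuously as $\lambda \to 1^-$, so no jump occurs), while the exact cancellation of $\lambda^{\pm w}$ keeps every estimate uniform in $\lambda$.
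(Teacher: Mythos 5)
Your proposal is correct and takes essentially the same route as the paper's (much terser) proof: the norm bound follows from Lemma~\ref{lem:fixedScaleEst}\eqref{eqit:rbounds} applied at radius $\lambda r$, with the cutoff ensuring the estimate is always applicable and the factor $\lambda^{w}$ cancelling the prefactor, and membership in $\zfk(\uafk)$ is reduced to norm continuity of $\mu \mapsto \udelta_\mu(\cutoff\uA)$ together with the uniform bound, which is exactly the one-sentence argument the paper gives and you flesh out. The only cosmetic slip is the claim that the localization radius $\max(\mu,1)\lambda r$ stays below $r_0$ (this can fail for $\mu>1$, $r=r_0$, $\lambda$ near $1$); but since each nonzero term of $\uB_{\mu\lambda}-\uB_\lambda$ is in fact localized at radius $\leq r \leq r_0$ while that radius is still $\leq \max(\mu,1)\lambda r$, your cancellation argument goes through unchanged.
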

\begin{proof}
The norm estimate is a consequence of
Lemma~\ref{lem:fixedScaleEst}\eqref{eqit:rbounds}, where one notes that
$(\cutoff\uA)_\lambda \in \afk(\ocal_{\lambda r})$ for $\lambda \leq 1$, and
$(\cutoff\uA)_\lambda=0$ for $\lambda>1$, so that these operators are always
contained in $\afk(\ocal_{r_0})$. It remains to show 
that $\usigma_{n,\nu}(\uA) \in \zfk(\uafk)$, i.e., that $\mu \mapsto \udelta_\mu (\usigma_{n,\nu}(\uA))$ is continuous. But this follows from continuity of $\mu\mapsto \udelta_\mu \uA$ and the definition of $\usigma_{n,\nu}$.
\end{proof}

We are now in the position to prove that $\sum_{n,\nu}
\usigma_{n,\nu}\uchi_{n,\nu,\beta}$ is a norm convergent expansion of the map
$\uTheta^{(\beta,\ocal)}$.

\begin{theorem} \label{thm:sumConverge}
Let $s \geq 3$. For each $r>0$, there exists $\beta>0$ such that
\begin{equation*}
  \uTheta^{(\beta,\ocal_r)}= \sum_{n=0}^\infty \sum_\nu
  \usigma_{n,\nu}(\cdotarg)
  \uchi_{n,\nu,\beta} \,.
\end{equation*}
\end{theorem}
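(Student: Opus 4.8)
The plan is to read the right-hand side as a norm-convergent series of uniform rank~$1$ maps and to identify it, scale by scale, with $\uTheta^{(\beta,\ocal_r)}$. First I would record that, since $\uhcal$ is a Banach module over $\zfk(\uafk)$ with $\|\uC\uchi\|\le\|\uC\|\,\|\uchi\|$, each summand $\uA\mapsto\usigma_{n,\nu}(\uA)\,\uchi_{n,\nu,\beta}$ is of uniform rank~$1$, with operator norm at most $\|\usigma_{n,\nu}\restrict\uafk(\ocal_r)\|\,\|\uchi_{n,\nu,\beta}\|$. Multiplying the bounds of Lemma~\ref{lem:usigmaExist} and Proposition~\ref{pro:uphiExist}, the factors $(n!)^{1/2}$ and $\nu!$ cancel against each other, leaving
\begin{equation*}
  \|\usigma_{n,\nu}\restrict\uafk(\ocal_r)\|\,\|\uchi_{n,\nu,\beta}\|
  \le c^{2n}\Big(\tfrac{6r}{\beta}\Big)^{|\nu|+n(s-1)/2}.
\end{equation*}

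Summing these bounds is then a geometric computation. Writing $q:=6r/\beta$ and summing over the components of $\nu$ -- the $\{0,1\}$-valued entry $\nu_{j0}$ contributing a factor $(1+q)$ and each of the $s$ remaining $\nbb_0$-valued entries a factor $(1-q)^{-1}$ -- one obtains, for a fixed length $n$, the bound $[(1+q)(1-q)^{-s}]^n$; hence
\begin{equation*}
  \sum_{n=0}^\infty\sum_\nu \|\usigma_{n,\nu}\restrict\uafk(\ocal_r)\|\,\|\uchi_{n,\nu,\beta}\|
  \le \sum_{n=0}^\infty\big[c^2\,q^{(s-1)/2}(1+q)(1-q)^{-s}\big]^n.
\end{equation*}
Since $s\ge 3$ gives exponent $(s-1)/2\ge 1>0$, the bracketed quantity tends to $0$ as $q\to0$, so for $\beta$ large enough it is $<1$ and the series converges absolutely. (Note that $s\ge 3$ is also exactly what makes the $\uchi_{n,\nu,\beta}$ available in $\uhcal$ in the first place, through the energy bound of Lemma~\ref{lem:fixedScaleEst} that enters Proposition~\ref{pro:uphiExist}.) Thus the series defines a uniformly compact map $\Psi:\uafk(\ocal_r)\to\uhcal$.

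It remains to check $\Psi=\uTheta^{(\beta,\ocal_r)}$, which I would do at each fixed $\lambda$. There the powers $\lambda^{\pm(|\nu|+n(s-1)/2)}$ in $\usigma_{n,\nu}$ and $\uchi_{n,\nu,\beta}$ cancel, giving
\begin{equation*}
  \Psi(\uA)_\lambda=\sum_{n,\nu}\sigma_{n,\nu}\big((\cutoff\uA)_\lambda\big)\,e^{-\beta\lambda H}\phi_{n,\nu}\Omega,
\end{equation*}
the series converging in the norm of $\hcal$, whereas $\uTheta^{(\beta,\ocal_r)}(\uA)_\lambda=e^{-\beta\lambda H}(\cutoff\uA)_\lambda\Omega$. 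For a Weyl operator $A=(\cutoff\uA)_\lambda$ the two agree: pairing $\Psi(\uA)_\lambda$ with a vector $\psi$ of finite energy and finite particle number, moving $e^{-\beta\lambda H}$ onto $\psi$ (which preserves both properties), and applying Lemma~\ref{lem:taylorExpansionProto} collapses the sum to $\hrskp{e^{-\beta\lambda H}\psi}{A\Omega}=\hrskp{\psi}{e^{-\beta\lambda H}A\Omega}$. To pass from Weyl operators to a general element of $\afk(\ocal_{\lambda r})$ I would invoke normality: both fixed-scale maps $A\mapsto\Psi(\uA)_\lambda$ and $A\mapsto e^{-\beta\lambda H}A\Omega$ are ultraweakly continuous (the $\sigma_{n,\nu}$ are normal and an operator-norm limit of normal maps is normal, the predual being norm-closed), and they agree on the $\sigma$-weakly dense linear span of the Weyl operators, hence everywhere. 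This yields $\Psi(\uA)_\lambda=\uTheta^{(\beta,\ocal_r)}(\uA)_\lambda$ for all $\lambda$ and all $\uA$.

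I expect the work to be twofold. The genuinely quantitative step is the bookkeeping that makes the factorials cancel and forces the convergence threshold on $\beta$; this is where the hypothesis $s\ge 3$ is felt. The more delicate point is the final identification: Lemma~\ref{lem:taylorExpansionProto} is only an identity of matrix elements for \emph{Weyl} operators, and since their span is merely $\sigma$-weakly (not norm) dense in the local von Neumann algebra, the normality argument is what is actually needed to conclude for all of $\uafk(\ocal_r)$.
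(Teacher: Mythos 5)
Your proposal is correct and, in its essentials, coincides with the paper's proof: the same product of the bounds from Proposition~\ref{pro:uphiExist} and Lemma~\ref{lem:usigmaExist} (with the factorials cancelling to give $c^{2n}(6r/\beta)^{|\nu|+n(s-1)/2}$), the same geometric summation over $\nu$ and $n$ forcing $6r/\beta$ to be small, and the same fixed-$\lambda$ identification of the summed series with $e^{-\beta\lambda H}(\cutoff\uA)_\lambda\Omega$ via Lemma~\ref{lem:taylorExpansionProto} applied to Weyl operators. The one point where you genuinely diverge is the extension from the span of Weyl operators to all of $\afk(\lambda\ocal_r)$: the paper observes that the norm-convergent series is strongly continuous on norm-bounded parts of $\afk(\lambda\ocal_r)$ and then invokes Kaplansky's density theorem, whereas you argue via normality --- each $\sigma_{n,\nu}$ is, after expanding the nested commutators in Eq.~\eqref{eqn:sigmaNNuDef} into vector matrix elements $A \mapsto \hrskp{\xi}{A\eta}$, a normal functional; dual-norm limits of normal functionals are normal since the predual is norm-closed; and the span of Weyl operators is ultraweakly dense by the bicommutant theorem. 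Both mechanisms are standard and both work here; yours avoids Kaplansky at the small cost of having to justify normality of the $\sigma_{n,\nu}$ (which the paper's route also implicitly needs, in the weaker form of strong continuity on bounded sets, from the same matrix-element expansion). Your parenthetical remark that $s \geq 3$ is needed not for the geometric convergence (where $(s-1)/2 > 0$, i.e.\ $s \geq 2$, would suffice) but for the availability of $\uchi_{n,\nu,\beta}$ in $\uhcal$ through the energy bounds of Lemma~\ref{lem:fixedScaleEst} is also accurate, and is a point the paper leaves implicit.
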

\begin{proof}
We will show below that the series in the statement converges in norm in the Banach space
$\bcal(\uafk(\ocal_r),\uhcal)$, i.e.\ that
\begin{equation} \label{eqn:convGoal}
\sum_{n=0}^\infty \sum_\nu \| \usigma_{n,\nu} \restrict \uafk(\ocal_r) \| \|
\uchi_{n,\nu,\beta} \| < \infty.
\end{equation}
Once this has been established, the assertion of the theorem is obtained as follows.
From Lemma~\ref{lem:taylorExpansionProto}, we know that
\begin{equation}\label{eqn:serieslambda}
\sum_{n=0}^\infty \sum_\nu \lambda^{-|\nu|-n(s-1)/2}\sigma_{n,\nu}(A)
(\uchi_{n,\nu,\beta})_{\lambda} = e^{-\lambda \beta H}A\Omega 
\end{equation}
at each fixed $\lambda$, 
whenever $A$ is a linear combination of Weyl operators, and when evaluated in scalar
products with vectors from a dense set. But~\eqref{eqn:convGoal}
also shows that the left hand side of~\eqref{eqn:serieslambda} converges 
in $\bcal(\afk(\lambda \ocal_r),\hcal)$, and it is therefore strongly continuous for $A$
in norm bounded parts of $\afk(\lambda\ocal_r)$.
Then by Kaplansky's theorem~\eqref{eqn:serieslambda} holds for any $A \in \afk(\lambda\ocal_r)$
as an equality in $\hcal$. Finally, this entails for all $\uA \in
\uafk(\ocal_r)$ that $\sum_{n,\nu}\usigma_{n,\nu}(\cutoff\uA)_\lambda
(\uchi_{n,\nu,\beta})_\lambda = \uTheta^{(\beta,\ocal_r)}(\uA)_\lambda$ at each fixed $\lambda > 0$, i.e.\ the statement.

We now prove Eq.~\eqref{eqn:convGoal}.
Let $r_0>0$ be fixed in the following, and $r<r_0$. From
Prop.~\ref{pro:uphiExist} and Lemma~\ref{lem:usigmaExist}, we obtain the estimate
\begin{equation}
\begin{aligned}
\| \usigma_{n,\nu}  \restrict \uafk(\ocal_r) \| \| \uchi_{n,\nu,\beta} \|
  &\leq 
  c^{2n}
  \, (6r/\beta)^{|\nu|+n(s-1)/2}
\\&=
    \Big(c^2 (6r/\beta)^{(s-1)/2}\Big)^n \prod_{j=1}^n
  \left((6r/\beta)^{\nu_{j0}} \prod_{k=1}^s 
  (6r/\beta)^{\nu_{jk}}\right).
\end{aligned}
\end{equation}
Factorizing the sum over multi-indexes $\nu$ accordingly, we obtain at fixed
$n$,
\begin{equation} \label{eqn:fixedNEst}
   \sum_\nu \| \usigma_{n,\nu}  \restrict \uafk(\ocal_r) \| \|
   \uchi_{n,\nu,\beta} \| \leq  \Big(2c^2  \frac{(6r/\beta)^{(s-1)/2}}{(1-6r/\beta)^s} \Big)^n,
\end{equation}
where we suppose $6r/\beta < 1$, and where each sum over $\nu_{j0}\in\{0,1\}$
has been estimated by introducing a factor of $2$.
Now if we choose $r/\beta$ small enough, we can certainly achieve that the
expression in Eq.~\eqref{eqn:fixedNEst} is summable over $n$ as a geometric
series, and hence the series in Eq.~\eqref{eqn:convGoal} converges.
\end{proof}

This establishes the phase space condition of Def.~\ref{def:uniformCompactness}
in our context.
\begin{corollary}
The theory of a real scalar free field of mass $m \geq 0$ in $3+1$ or more
space-time dimensions fulfills the uniform Haag-Swieca compactness condition.
\end{corollary}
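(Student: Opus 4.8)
The plan is to read the corollary off directly from Theorem~\ref{thm:sumConverge}, which already carries out all the analytic work. Recall from Definition~\ref{def:uniformCompactness} that uniform Haag-Swieca compactness requires, for each bounded region $\ocal$, the existence of some $\beta>0$ making $\uTheta^{(\beta,\ocal)}$ uniformly compact; and that a map into the $\zfk(\uafk)$-module $\uhcal$ is uniformly compact precisely when it is a norm-convergent infinite sum of terms of uniform rank~$1$, i.e.\ terms of the shape $e(\cdotarg)\,\uf$ with $e$ a continuous linear map into $\zfk(\uafk)$ and $\uf\in\uhcal$.

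First I would fix $s\geq 3$, as required by the hypothesis of $3+1$ or more space-time dimensions, and consider a double cone $\ocal_r$. By Theorem~\ref{thm:sumConverge} there is $\beta>0$ with $\uTheta^{(\beta,\ocal_r)}=\sum_{n,\nu}\usigma_{n,\nu}(\cdotarg)\,\uchi_{n,\nu,\beta}$, the series converging in the Banach norm of $\bcal(\uafk(\ocal_r),\uhcal)$. Each summand is of uniform rank~$1$: by Lemma~\ref{lem:usigmaExist} the functional $\usigma_{n,\nu}$ is a continuous linear map $\uafk(\ocal_r)\to\zfk(\uafk)$, and by Proposition~\ref{pro:uphiExist} the vector field $\uchi_{n,\nu,\beta}$ lies in $\uhcal$. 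Since the set of pairs $(n,\nu)$ is countable, enumerating it by a single index $j$ rewrites the expansion in the form $\sum_j e_j(\cdotarg)\,\uf_j$ demanded by the definition, so that $\uTheta^{(\beta,\ocal_r)}$ is uniformly compact.

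It then remains to pass from double cones to arbitrary bounded regions. Given any bounded $\ocal$, I would choose $r$ large enough that $\ocal\subset\ocal_r$; then $\uafk(\ocal)\subset\uafk(\ocal_r)$, and $\uTheta^{(\beta,\ocal)}$ is simply the restriction of $\uTheta^{(\beta,\ocal_r)}$ to this subalgebra. Restricting each $\usigma_{n,\nu}$ to $\uafk(\ocal)$ can only decrease its operator norm, so the expansion still converges and $\uTheta^{(\beta,\ocal)}$ stays uniformly compact, with the same $\beta$.

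I do not expect a genuine obstacle in the corollary itself: all the difficulty, namely the operator and energy bounds of Lemma~\ref{lem:fixedScaleEst} and the geometric-series estimate of Theorem~\ref{thm:sumConverge} that controls the double sum over $(n,\nu)$, has already been absorbed upstream. The only points that require any care are bookkeeping ones, namely confirming that the summands match the module-theoretic notion of uniform rank~$1$ introduced in Section~\ref{sec:compactness}, and noting that uniform compactness is stable under restriction to a subregion.
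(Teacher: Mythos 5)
Your proposal is correct and matches the paper's own (implicit) argument: the corollary is read off directly from Theorem~\ref{thm:sumConverge}, whose norm-convergent expansion $\sum_{n,\nu}\usigma_{n,\nu}(\cdotarg)\,\uchi_{n,\nu,\beta}$ is exactly a sum of uniform rank~$1$ terms in the sense of Definition~\ref{def:uniformCompactness}, by Lemma~\ref{lem:usigmaExist} and Proposition~\ref{pro:uphiExist}. Your extra bookkeeping steps (countable enumeration of the multi-indices $(n,\nu)$ and restriction from $\ocal_r$ to an arbitrary bounded region, which the paper leaves tacit) are sound.
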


While our goal was to show that the maps $\uTheta^{(\beta,\ocal)}$
are uniformly compact, it follows from Eq.~\eqref{eqn:convGoal} that they are actually
uniformly nuclear at all scales, or by a slightly modified argument, even
uniformly $p$-nuclear for any $0<p\leq 1$. So we can generalize the somewhat
stronger Buchholz-Wichmann condition \cite{BucWic:causal_independence} to our
context. Several other types of phase space conditions can be derived with
similar methods as in Thm.~\ref{thm:sumConverge} as well. Particularly, one can
show for $s \geq 3$ that the sum $\sum_{n,\nu} \sigma_{n,\nu} \phi_{n,\nu}$
converges in norm under a cutoff in energy $E$ and restriction to a fixed local algebra
$\afk(\ocal_r)$, with estimates uniform in $E \cdot r$ where this product is
small. This implies that Phase Space Condition II of \cite{BDM:scaling_fields},
which guarantees a regular behavior of pointlike fields under scaling, is fulfilled for those
models.

\section{Conclusions} \label{sec:conclusion}

In this paper, we have considered short-distance scaling limits in the
model independent framework of Buchholz and Verch
\cite{BucVer:scaling_algebras}. In order to describe the dilation symmetry that
arises in the limit theory, we passed to a generalized class of limit states,
which includes states invariant under scaling. The essential results of
\cite{BucVer:scaling_algebras} carry over to this generalization, including
the structure of Poincaré symmetries and geometric modular action. 

However, the dilation invariant limit states are not pure. Rather, they can be
decomposed into states of the Buchholz-Verch type, which are pure in two or
more spatial dimensions. This decomposition gives rise to a direct integral
decomposition of the Hilbert space of the limit theory, under which local
observables, Poincaré symmetries, and other relevant objects of the theory
can be decomposed---except for dilations. The dilation symmetry has a more
intricate structure, intertwining the pure components of the limit state.

The situation simplifies if we consider the situation of a ``unique limit'' in
the classification of \cite{BucVer:scaling_algebras}; our condition of a
``factorizing limit'' turned out to be equivalent modulo technicalities.
Under this restriction, the dilation unitaries in the limit are, up to a central
part, decomposable operators. The decomposed components do not necessarily
fulfill a group relation though, but a somewhat weaker cocycle relation. Only
under stronger assumptions (``convergent scaling limits'') we were able to
show that the dilation symmetry factorizes into a tensor product of unitary
group representations.

It is unknown at present which type of models would make the generalized
decomposition formulas necessary. In this paper, we have only considered 
very simple examples, which all turned out to fall into the
more restricted class of convergent scaling limits. However, thinking e.g.~of
infinite tensor products of free fields with increasing masses as suggested in
\cite[Sec.~5]{Buc:phase_space_scaling}, it may well be that some models violate
the condition of uniqueness of the scaling limit, or even exhibit massive
particles in the limit theory. In this case, the direct integral
decomposition would be needed to obtain a reasonable description of dilation
symmetry in the limit, with the symmetry operators intertwining fibers of the
direct integral that correspond to different masses.
 
As a next step in the analysis of dilation symmetries in the limit theory, one
would like to investigate further the \emph{deviation} of the theory at finite
scales from the idealized dilation covariant limit theory; so to speak, the
next-order term in the approximation $\lambda \to 0$. This would be interesting
e.g.~for applications to deep inelastic scattering, which can currently only
be treated in formal perturbation theory. It is expected that the dilation
symmetry in the limit also contains some information about these next-order
terms. Our formalism, however, does at the moment not capture these next-order
approximations, and would need to be generalized considerably.

Further, it would be interesting to see whether the dilation symmetry we
analyzed can be used to obtain restrictions on the type of interaction in the
limit theory, possibly leading to criteria for asymptotic freedom. Here we do
not refer to restrictions on the form of the Lagrangian, a concept that is not
visible in our framework. Rather, we think that dilation symmetries should
manifest themselves in the coefficients of the operator product expansion
\cite{Bos:product_expansions,BDM:scaling_fields,BosFew:qi_ope} or in the general
structure of local observables \cite{BucFre:dilations}.

In this context, it seems worthwhile to investigate simplified
low-dimensional interacting models, such as the 1+1 dimensional massive models
with factorizing S matrix that have recently been rigorously constructed in the algebraic
framework \cite{Lec:factorizing_s}. By abstract arguments, these models possess
a scaling limit theory in our context. Just as in the Schwinger model
\cite{Buc:quarks,BucVer:scaling_examples}, one expects here that even the limit
theory for multiplicative states has a nontrivial center. This may be
seen as a peculiarity of the 1+1 dimensional situation; we have not specifically dealt
with this problem in the present paper. But neglecting these aspects, one would 
expect that the limit theory corresponds to a \emph{massless} (and dilation
covariant) model with factorizing S matrix, although it would probably not have
an interpretation in the usual terms of scattering theory. Such models of
``massless scattering" have indeed been considered in the physics literature; see \cite{FenSal:massless_scattering} for a review. Their mathematical status as quantum field theories, however, remains largely unclear at this time.
Nevertheless, one should be able to treat them with our methods.

In fact, these examples may give a hint to the restrictions on interaction that
the dilation symmetry implies. At least in a certain class of
two-particle S matrices---those which tend to 1 at large momenta---one expects
that the limit theory is \emph{chiral,} i.e., factors into a tensor product of
two models living on the left and right light ray, respectively. On the
other hand, the theories we consider are always \emph{local;} and for chiral
local models, dilation covariance is the essential property that guarantees
\emph{conformal} covariance \cite{GLW:extensions}. So if those models have a
nontrivial scaling limit at all, they underly quite rigid restrictions, since
local conformal chiral nets are---at least partially---classifiable in a
discrete series \cite{KawLon:classification_local}. The detailed investigation of these aspects of factorizing
S matrix models is however the subject of ongoing research, and some surprises
are likely to turn up.

\appendix 

\section{Direct integrals of Hilbert spaces} \label{sec:diApp}

In our investigation, we make use of the concept of direct integrals of Hilbert
spaces, $\hcal = \int_\zcal d\nu(z) \hcal_z$, where the integral is defined on
some measure space $(\zcal,\nu)$. Due to difficult measure-theoretic problems, the
standard literature treats these direct integrals only under
separability assumptions on the Hilbert spaces involved; see
e.g.~\cite{KadRin:algebras2}. These are however not a priori implied in our
analysis; and even where we make such assumptions, we need to apply them with
care. While we can often reasonably assume the ``fiber spaces'' $\hcal_z$ to be separable,
the measure space $\zcal$ will, in our applications, be of a very general
nature, and even $L^2(\zcal,\nu)$ is known to be non-separable in some
situations. The concept of direct integrals can be generalized to that case.
Since however the literature on that topic\footnote{%
In the general case, we largely follow Wils \cite{Wil:direct_integrals_1},
however with some changes in notation. Other, somewhat stronger notions of
direct integrals exist, e.g.~\cite[Ch.~III]{God:repres_unit}, \cite{Seg:decomp_oa}; see
\cite{Mar:hilbert_integral} for a comparison.  
Under separability assumptions (Definition~\ref{def:fundamentalSequence}), all
these notions agree, and we are in the case described in
\cite[Ch.~IV.8]{Tak:TOA1}, \cite[Part~II Ch.~1]{Dix:v_n_algebras}.} 
is somewhat scattered and not easily accessible, we give here a brief review
for the convenience of the reader, restricted to the case that concerns us.

In the following, let $\zcal$ be a compact topological space and $\nu$ a finite
regular Borel measure on $\zcal$. For
each $z \in \zcal$, we consider a Hilbert space $\hcal_z$ with scalar product
$\etskp{\cdotarg}{\cdotarg}_z$ and associated norm $\|\cdotarg\|_z$. Elements
$\chi \in \prod_{z\in\zcal} \hcal_z$ will be called \emph{vector fields} and
alternatively denoted as maps, $z \mapsto \chi(z)$. Direct integrals of this
field of Hilbert spaces $\hcal_z$ over $\zcal$ are not unique, but depend on the
choice of a \emph{fundamental family}.

\begin{definition} \label{def:fundamentalFamily}
A \emph{fundamental family} is a linear subspace $\Gamma \subset
\prod_{z\in\zcal} \hcal_z$ such that for every $\chi \in \Gamma$, the function
$z \mapsto \|\chi(z)\|_z^2$ is $\nu$-integrable. If the same function is
always continuous, we say that $\Gamma$ is a \emph{continuous fundamental
family}.
\end{definition}

The continuity aspect will be discussed further below, for the moment we
focus on measurability. Each fundamental family $\Gamma$ uniquely extends
to a minimal vector space $\bar\Gamma$, with $\Gamma \subset \bar\Gamma \subset \prod_{z\in\zcal} \hcal_z$, which has the following properties. \cite[Corollary~2.3]{Wil:direct_integrals_1}
\begin{enumerate}
\localitemlabels
  \item $z \mapsto \|\chi(z)\|_z^2$ is $\nu$-integrable for all
  $\chi \in \bar\Gamma$.
  \item If for $\chi \in \prod_z \hcal_z$, there exists $\hat\chi \in
  \bar\Gamma$ such that $\chi(z)=\hat\chi(z)$ a.e., then $\chi
  \in\bar\Gamma$.
  \item If $\chi\in\bar\Gamma$ and $f \in L^\infty(\zcal,\nu)$, then $f\chi \in
  \bar\Gamma$, where $(f\chi)(z):=f(z)\chi(z)$.
  \item $\bar\Gamma$ is complete with respect to the seminorm $\|\chi\| =
  (\int_\zcal d\nu(z) \|\chi(z)\|^2)^{1/2}$.
\end{enumerate}
Such $\bar\Gamma$ is called an \emph{integrable family}. It
is obtained from $\Gamma$ by multiplication with $L^\infty$ functions and
closure in $\|\cdotarg\|$. The elements of $\bar\Gamma$ are called
\emph{$\Gamma$-measurable functions;} they are in fact analogues of
square-integrable functions, and the usual measure theoretic results hold 
for them: Egoroff's theorem; the dominated convergence theorem; 
any norm-convergent sequence $(\chi_n)$ in
$\bar\Gamma$ has a subsequence on which $\chi_n(z)$ converges pointwise
a.e.; and if $(\chi_n)$ is a sequence in $\bar\Gamma$ that converges pointwise
a.e., the limit function $\chi$ is in $\bar\Gamma$. (Cf.~Propositions 1.3 and
1.5 of \cite{Mar:hilbert_integral}.) After dividing out vectors of zero norm
(which we do not denote explicitly), $\bar\Gamma$ becomes a Hilbert space, which we call
the \emph{direct integral} of the $\hcal_z$ with respect to $\Gamma$, and
denote it as
\begin{equation}
 \hcal = \int_\zcal^\Gamma d\nu(z) \hcal_z, \quad
 \text{with scalar product } \hrskp{\chi}{\hat \chi} =
 \int_\zcal d\nu(z) \etskp{\chi(z)}{\hat\chi(z)}_z \,.
\end{equation}
Correspondingly, the elements $\chi \in \hcal$ are denoted as $\chi  =
\int_\zcal^\Gamma \chi(z) d\nu(z)$.

We also consider bounded operators between such direct integral spaces. Let
$\hcal_z$, $\hat \hcal_z$ be two fields of Hilbert spaces over the same measure
space $\zcal$, and let $\Gamma,\hat \Gamma$ be associated fundamental
families. We call $B \in \prod_{z \in \zcal} \bcal(\hcal_z,\hat\hcal_z)$ a
\emph{measurable field of operators} if $\esup_z \|B(z)\|
<\infty$, and if for every $\chi \in \bar\Gamma$, the vector field $z \mapsto
B(z)\chi(z)$ is $\hat\Gamma$-measurable, i.e.~an element of
$\bar{\hat\Gamma}$. In fact it suffices to
check the measurability condition on the fundamental family $\Gamma$.

\begin{lemma} \label{lem:measOnGamma}
Let $B \in \prod_{z \in \zcal} \bcal(\hcal_z,\hat\hcal_z)$ such that $\esup_z
\|B(z)\| <\infty$, and such that $z
\mapsto B(z)\chi(z)$ is $\hat\Gamma$-measurable for every $\chi \in \Gamma$.
Then $B$ is a measurable field of operators.
\end{lemma}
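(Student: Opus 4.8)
The plan is to upgrade the measurability hypothesis, which is assumed only on the generating family $\Gamma$, to all of the integrable family $\bar\Gamma$, exploiting the explicit description of $\bar\Gamma$ as the $\|\cdotarg\|$-closure of the linear span of $L^\infty(\zcal,\nu)\cdot\Gamma$. Writing $M:=\esup_z\|B(z)\|<\infty$, I have to show that for every $\chi\in\bar\Gamma$ the vector field $z\mapsto B(z)\chi(z)$ lies in $\bar{\hat\Gamma}$. I would proceed in two stages: first verify this on the dense subspace $\Gamma_0:=\lspan\{f\chi: f\in L^\infty(\zcal,\nu),\,\chi\in\Gamma\}$, and then pass to its closure.

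For the first stage, I would fix $f\in L^\infty(\zcal,\nu)$ and $\chi\in\Gamma$. By hypothesis the field $z\mapsto B(z)\chi(z)$ belongs to $\bar{\hat\Gamma}$, and since $\bar{\hat\Gamma}$ is stable under multiplication by $L^\infty$ functions (property (iii) of integrable families), the field $z\mapsto f(z)B(z)\chi(z)=B(z)(f\chi)(z)$ again lies in $\bar{\hat\Gamma}$. As $\bar{\hat\Gamma}$ is a vector space, this extends by linearity of the operators $B(z)$ to every finite combination $\sum_i f_i\chi_i\in\Gamma_0$; hence $B$ maps $\Gamma_0$ into $\bar{\hat\Gamma}$.

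For the second stage, I would take an arbitrary $\chi\in\bar\Gamma$ and choose a sequence $(\chi_n)$ in $\Gamma_0$ with $\|\chi_n-\chi\|\to 0$. The uniform bound $M$ gives, for a.e.\ $z$, the pointwise estimate $\|B(z)\chi(z)-B(z)\chi_n(z)\|_{\hat\hcal_z}\le M\|\chi(z)-\chi_n(z)\|_z$, whence $\|B\chi-B\chi_n\|\le M\|\chi-\chi_n\|\to 0$ in the seminorm of the direct integral. Thus $(B\chi_n)$ is a $\|\cdotarg\|$-Cauchy sequence in $\bar{\hat\Gamma}$, so by completeness (property (iv)) it converges in seminorm to some $\psi\in\bar{\hat\Gamma}$; but it also converges to $B\chi$, so $\psi$ and $B\chi$ agree $\nu$-almost everywhere. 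Invoking the a.e.-stability of integrable families (property (ii)) then yields $B\chi\in\bar{\hat\Gamma}$, which is the assertion.

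The argument is largely routine once the structure of $\bar\Gamma$ is in hand; the step I expect to be the main obstacle is the bookkeeping in the last stage, where one works with a genuine seminorm rather than a norm. Here one cannot identify the seminorm limit of the $B\chi_n$ with $B\chi$ directly, and it is precisely property (ii)---that a field agreeing almost everywhere with an element of $\bar{\hat\Gamma}$ is itself in $\bar{\hat\Gamma}$---that bridges the gap between ``$B\chi$ is a seminorm limit of measurable fields'' and ``$B\chi$ is measurable.'' I would also take care to observe that property (i) is automatic here, since $\|B(z)\chi(z)\|^2\le M^2\|\chi(z)\|^2$ is dominated by an integrable function, so the only nontrivial content is indeed the $\bar{\hat\Gamma}$-membership established above.
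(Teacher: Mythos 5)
Your first stage -- extending the measurability hypothesis from $\Gamma$ to $\Gamma_0 = \lspan L^\infty(\zcal,\nu)\cdot\Gamma$ by property (iii) and linearity of the $B(z)$ -- is exactly the paper's first step. The gap is in your second stage, at precisely the point you flag as the main obstacle. The inequality $\|B\chi-B\chi_n\|\le M\|\chi-\chi_n\|$ is not a legitimate statement: the left-hand side is the integral of the scalar function $z\mapsto\|B(z)\chi(z)-B(z)\chi_n(z)\|_z^2$, and the $\nu$-measurability of that function is exactly what is at stake -- it would follow from $B\chi\in\bar{\hat\Gamma}$, which is what the lemma asserts. Hence ``$(B\chi_n)$ also converges to $B\chi$'' is circular, and the identification $\psi=B\chi$ a.e.\ is not justified as written; property (ii) can only be invoked \emph{after} that a.e.\ equality has been established by some other means, and nothing in your argument establishes it. (Your Cauchy estimate $\|B\chi_n-B\chi_m\|\le M\|\chi_n-\chi_m\|$ is fine, since $\chi_n-\chi_m\in\Gamma_0$, so the completeness step producing $\psi\in\bar{\hat\Gamma}$ is sound; it is only the comparison of $\psi$ with $B\chi$ that fails.)

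The missing ingredient is pointwise a.e.\ convergence, which is also how the paper argues. Among the properties of integrable families recorded after Def.~\ref{def:fundamentalFamily} are: every norm-convergent sequence in $\bar\Gamma$ has a subsequence converging pointwise a.e., and a pointwise a.e.\ limit of fields in $\bar{\hat\Gamma}$ again lies in $\bar{\hat\Gamma}$. Passing to a subsequence of $(\chi_n)$ with $\chi_n(z)\to\chi(z)$ a.e., boundedness of each $B(z)$ gives $B(z)\chi_n(z)\to B(z)\chi(z)$ a.e., and the second property concludes $B\chi\in\bar{\hat\Gamma}$ directly -- note that once one argues pointwise, your completeness detour (and the auxiliary vector $\psi$) is unnecessary. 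Alternatively, your $\psi$ can be salvaged: extract subsequences along which both $\chi_n(z)\to\chi(z)$ and $B(z)\chi_n(z)\to\psi(z)$ a.e.; then $\psi=B\chi$ a.e.\ and property (ii) applies. Either way, the subsequence extraction is the step your write-up is missing, and it cannot be replaced by seminorm arguments alone, because the seminorm of the field $B\chi$ has no meaning before its measurability is proved.
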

\begin{proof}
Evidently, $z \mapsto B(z) \chi(z)$ is also $\hat\Gamma$-measurable if $\chi$ is
taken from $L^\infty (\zcal,\nu) \cdot \Gamma$ or from its linear span. This
span is however dense in $\hcal$. So let $\chi \in\hcal$. There exists a
sequence $\chi_n \in \lspan L^\infty (\zcal,\nu) \cdot \Gamma$ such that
$\chi_n \to \chi$ in norm; by the remarks after
Def.~\ref{def:fundamentalFamily}, we can assume that $\chi_n(z) \to \chi(z)$ a.~e. 
But then $B(z) \chi_n(z) \to B(z)
\chi(z)$ a.~e., due to continuity of each $B(z)$. So $B(z)\chi(z)$ is a
pointwise a.~e.~limit of functions in $\bar{\hat\Gamma}$. This implies $(z
\mapsto B(z) \chi(z)) \in \bar{\hat \Gamma}$, which was to be shown.
\end{proof}

A measurable field of operators $B$ now defines a bounded
linear operator $\hcal \to \hat\hcal$ which we denote as $B =
\int_\zcal^{\Gamma,\hat\Gamma} d\nu(z) B(z)$.  Operators in
$\bcal(\hcal,\hat\hcal)$ of this form are called \emph{decomposable}. Their
decomposition need not be unique however, not even a.~e. If here
$\Gamma=\hat\Gamma$, and if $B(z)=f(z)\idop_{\hcal_z}$ with
a function $f \in L^\infty(Z,\nu)$, then $B$ is called a \emph{diagonalizable} operator. We sometimes write this multiplication operator as $M_f = \int_\zcal^\Gamma d\nu(z) f(z) \idop$.

Let $\afk$ be a \cistar{} algebra, and let for each $z\in\zcal$ a representation
$\pi_z$ of $\afk$ on $\hcal_z$ be given, such that $z \mapsto \pi_z(A)$ is a
measurable field of operators for any $A\in\afk$. Then, $\pi(A) =
\int_\zcal^\Gamma d\nu(z) \pi_z(A)$ defines a new representation $\pi$ of $A$
on $\hcal$, which we formally denote as $\pi = \int_\zcal^\Gamma d \nu(z)
\pi_z$.

In many cases, obtaining useful results regarding decomposable operators
requires additional separability assumptions. The following property
will usually be general enough for us.

\begin{definition} \label{def:fundamentalSequence}
A \emph{fundamental sequence} in $\hcal = \int_\zcal^\Gamma d\nu(z) \hcal_z$ is
a sequence $(\chi_j)_{j\in\nbb}$ in $\bar\Gamma$  
such that for every $z \in \zcal$, the
set $\{\chi_j (z) \,|\, j \in \nbb\}$ is total in $\hcal_z$. If such a fundamental sequence exists for $\hcal$, we say that $\hcal$ is
\emph{uniformly separable}.
\end{definition} 

This more restrictive situation agrees with the setting
of \cite{Tak:TOA1,Dix:v_n_algebras};
cf.~\cite[Prop.~1.13]{Mar:hilbert_integral}.
Note that this property implies that the fiber spaces $\hcal_z$ are separable,
but the integral space $\hcal$ does not need to be separable if $\zcal$ is
sufficiently general. Under the above separability assumption,
additional desirable properties of decomposable operators hold true.

\begin{theorem}
Let $\hcal = \int_\zcal^\Gamma d\nu(z) \hcal_z$ and $\hat\hcal =
\int_\zcal^{\hat\Gamma} d\nu(z) \hat \hcal_z$ both be uniformly separable. Then, for
each decomposable operator $B = \int_\zcal^{\Gamma,\hat\Gamma} d\nu(z) B(z)$,
also $B\st$ is decomposable, with $B\st = \int_\zcal^{\hat\Gamma,\Gamma}
d\nu(z) B(z)\st$. One has $\|B\|=\|B\st\| = \esup_z \|B(z)\|$.
Decompositions of operators are unique in the following sense: If
$\int_\zcal^{\Gamma,\hat\Gamma} d\nu(z) B(z) = \int_\zcal^{\Gamma,\hat\Gamma}
d\nu(z) \hat B(z)$, then $B(z)=\hat B(z)$ for almost every $z$.
\end{theorem}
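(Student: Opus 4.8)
The plan is to establish the norm identity $\|B\| = \esup_z \|B(z)\|$ first, since both the decomposability of $B\st$ and the uniqueness claim then follow from it with essentially no extra work. Throughout I would fix fundamental sequences $(\chi_j)_{j\in\nbb}$ for $\hcal$ and $(\hat\chi_k)_{k\in\nbb}$ for $\hat\hcal$, which exist by uniform separability, and use them to reduce every measurability question to countably many scalar conditions. For the norm formula I would first observe that, since $\{\chi_j(z)\}$ is total in $\hcal_z$, one has $\|B(z)\| = \sup_j \|B(z)\chi_j(z)\|/\|\chi_j(z)\|$ (supremum over $j$ with $\chi_j(z)\neq 0$); as each $z\mapsto B(z)\chi_j(z)$ lies in $\bar{\hat\Gamma}$, this exhibits $z\mapsto\|B(z)\|$ as a countable supremum of measurable functions, so that $\esup_z\|B(z)\|$ is well-defined. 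The bound $\|B\|\le\esup_z\|B(z)\|$ is immediate from $\|B\chi\|^2=\int_\zcal d\nu(z)\,\|B(z)\chi(z)\|^2$. For the reverse bound I would, given $\epsilon>0$, pick a positive-measure set on which $\|B(z)\|>\esup_z\|B(z)\|-\epsilon$, split it measurably according to which index $j$ realizes the supremum up to $\epsilon$, pass to one piece $S$ of positive measure on which in addition $\|\chi_j(z)\|\ge\delta>0$, and test $B$ on the field obtained by multiplying $\chi_j$ by the $L^\infty$ function equal to $1/\|\chi_j(z)\|$ on $S$ and $0$ elsewhere; this field lies in $\bar\Gamma$ by property~(iii) after Definition~\ref{def:fundamentalFamily}, and a direct computation gives $\|B\psi\|\ge(\esup_z\|B(z)\|-\epsilon)\|\psi\|$. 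Uniqueness is then an immediate corollary: if $B(z)$ and $\hat B(z)$ induce the same operator, their difference is decomposable and induces $0$, so the norm formula yields $\esup_z\|B(z)-\hat B(z)\|=0$, i.e.\ $B(z)=\hat B(z)$ almost everywhere.

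Next I would treat the adjoint. The natural candidate is the field $z\mapsto B(z)\st\in\bcal(\hat\hcal_z,\hcal_z)$, bounded by the same essential supremum. To verify that it is a measurable field, it suffices by Lemma~\ref{lem:measOnGamma} to test on $\hat\Gamma$: for $\hat\chi\in\bar{\hat\Gamma}$ the scalar functions
\[
\etskp{\chi_j(z)}{B(z)\st\hat\chi(z)}_z=\etskp{B(z)\chi_j(z)}{\hat\chi(z)}_{\hat z}
\]
are measurable, since $z\mapsto B(z)\chi_j(z)$ lies in $\bar{\hat\Gamma}$. Hence $z\mapsto B(z)\st\hat\chi(z)$ is weakly measurable against the fundamental sequence, and in the uniformly separable setting this forces $\Gamma$-measurability (the same weak-implies-Lusin passage invoked in the proof of Thm.~\ref{thm:convergentImpliesUnique}). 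Therefore $C:=\int_\zcal^{\hat\Gamma,\Gamma}d\nu(z)\,B(z)\st$ is a well-defined decomposable operator, and the computation
\[
\hrskp{B\chi}{\hat\chi} = \int_\zcal d\nu(z)\,\etskp{B(z)\chi(z)}{\hat\chi(z)}_{\hat z} = \int_\zcal d\nu(z)\,\etskp{\chi(z)}{B(z)\st\hat\chi(z)}_z = \hrskp{\chi}{C\hat\chi}
\]
for all $\chi\in\hcal$, $\hat\chi\in\hat\hcal$ identifies $C$ with $B\st$. The equalities $\|B\st\|=\|B\|$ and $\esup_z\|B(z)\st\|=\esup_z\|B(z)\|$ are then immediate from general operator theory.

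I expect the main obstacle to lie in the two places where uniform separability is genuinely indispensable: the passage from weak to strong ($\Gamma$-)measurability of $z\mapsto B(z)\st\hat\chi(z)$ in the adjoint step, and the closely related measurable-selection argument underlying the lower norm bound. Both fail without a countable fundamental sequence, and both must be carried out using the standard measure theory for $\bar\Gamma$ recalled after Definition~\ref{def:fundamentalFamily} (Egoroff's theorem, the stability of $\bar\Gamma$ under pointwise a.e.\ limits, and extraction of pointwise-convergent subsequences). Once these measurability points are secured, the remaining computations are routine.
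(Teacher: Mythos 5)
Since the paper itself gives no proof of this theorem (it only refers to Godement, Ch.~III Sec.~13, for the proof methods), your attempt can only be measured against the standard literature argument; its outline --- norm identity first, uniqueness as a corollary, adjoint via the fiberwise adjoints plus a measurability check on the fundamental family and Lemma~\ref{lem:measOnGamma} --- is indeed the standard route, and structurally sound. However, one step fails as written. You claim that totality of $\{\chi_j(z)\}$ in $\hcal_z$ gives $\|B(z)\| = \sup_j \|B(z)\chi_j(z)\|/\|\chi_j(z)\|$. That is false for merely total sets: in $\cbb^2$ the pair $\{e_1,\ e_1+\epsilon e_2\}$ is total, yet for the orthogonal projection $P$ onto $\cbb e_2$ the supremum of $\|Px\|/\|x\|$ over these two vectors is $\epsilon(1+\epsilon^2)^{-1/2}$, while $\|P\|=1$; totality means dense \emph{span}, and the normalized elements of a total set need not come close to the maximizing unit vectors. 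Since Definition~\ref{def:fundamentalSequence} guarantees only fiberwise totality, your sup formula --- and with it the measurability of $z\mapsto\|B(z)\|$, the positive-measure selection argument for the lower bound, and the uniqueness statement you derive from the norm identity --- is unsupported as written. The repair is standard: replace $(\chi_j)$ by the countable family of its finite linear combinations with complex-rational coefficients; this family still lies in $\bar\Gamma$ (a vector space), and its fibers are now \emph{dense} in each $\hcal_z$, so the sup formula, the measurability of $z\mapsto\|B(z)\|$, and the selection argument all go through.

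A second, smaller gap sits in the adjoint step. You justify the passage from ``measurable against the fundamental sequence'' to $\Gamma$-measurability of $z\mapsto B(z)\st\hat\chi(z)$ by appeal to the weak-implies-Lusin statement used in Thm.~\ref{thm:convergentImpliesUnique}. But that statement is formulated in the appendix only for the constant-fiber space $L^2(\zcal,\nu,\qhcal)$; here the fibers vary. What you actually need is that in a uniformly separable direct integral, a square-integrable field whose scalar products against a fundamental sequence are measurable already lies in $\bar\Gamma$ --- i.e., the equivalence of the paper's Wils-type definition with the Dixmier--Takesaki characterization of measurable fields. This is true under uniform separability (it is precisely the equivalence the appendix asserts with a citation), but it is a nontrivial input, not the fixed-fiber fact you quote; you should either invoke that equivalence explicitly or prove it, e.g.\ by a measurable Gram--Schmidt orthonormalization applied to the rational-combination family, patching over sets where the relevant norms are bounded above and away from zero so that all multipliers stay in $L^\infty(\zcal,\nu)$. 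Alternatively, decomposability of $B\st$ follows at once from Theorem~\ref{thm:decompCommute}, since $B$ commutes with all diagonalizable operators if and only if $B\st$ does, after which your uniqueness argument identifies the fibers as $B(z)\st$ a.e.\ --- though that route leans on another theorem whose proof the paper likewise delegates to the literature.
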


For the proof methods, see e.g.~\cite[Ch.~III Sec.~13]{God:repres_unit}. Note
that the theorem is false if the separability
assumption is dropped; see Example~7.6 and Remark~7.11
of \cite[Ch.~IV]{Tak:TOA1}. We also obtain an important characterization of
decomposable operators.
\begin{theorem} \label{thm:decompCommute}
Let $\hcal,\hat \hcal$ be uniformly separable. An operator $B \in
\bcal(\hcal,\hat \hcal)$ is decomposable if and only if it commutes with all
diagonalizable operators; i.e.~$M_f B = B \hat M_f$ for all $f \in
L^\infty(Z,\nu)$.
\end{theorem}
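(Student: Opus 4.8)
The plan is to prove the two implications separately; the forward one is immediate, and the converse carries all the difficulty. If $B = \int_\zcal^{\Gamma,\hat\Gamma} d\nu(z)\, B(z)$ is decomposable, then for any $\chi \in \hcal$ and $f \in L^\infty(\zcal,\nu)$ one has $(M_f B \chi)(z) = f(z) B(z)\chi(z) = B(z)\big(f(z)\chi(z)\big) = (B \hat M_f \chi)(z)$ for a.e.\ $z$, so $M_f B = B \hat M_f$; no separability is needed here.

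For the converse, suppose $B \in \bcal(\hcal,\hat\hcal)$ commutes with all diagonalizable operators, and fix a fundamental sequence $(\chi_j)_{j\in\nbb}$ for $\hcal$ and $(\hat\chi_k)_{k\in\nbb}$ for $\hat\hcal$, which exist by uniform separability (Def.~\ref{def:fundamentalSequence}). Set $\hat\psi_j := B\chi_j \in \bar{\hat\Gamma}$. The first key step I would establish is a \emph{pointwise} norm estimate: for each fixed $\chi \in \hcal$,
\[
  \|(B\chi)(z)\|_z \le \|B\|\,\|\chi(z)\|_z \quad \text{for a.e.\ } z.
\]
This follows by testing against indicator functions $\mathbf{1}_S$ of measurable sets $S \subseteq \zcal$: property (iii) after Def.~\ref{def:fundamentalFamily} gives $\mathbf{1}_S\chi \in \hcal$, the commutation hypothesis (with $f=\mathbf{1}_S$) gives $B(\mathbf{1}_S\chi) = \mathbf{1}_S(B\chi)$, and hence $\int_S \|(B\chi)(z)\|^2\, d\nu \le \|B\|^2 \int_S \|\chi(z)\|^2\, d\nu$; as $S$ is arbitrary, the integrands obey the inequality a.e.

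Next I would construct the fibre operators. Applying the estimate to the countable set $D$ of finite combinations of the $\chi_j$ with coefficients in $\qbb+\i\qbb$, and using that $B(\sum_i c_i \chi_{j_i}) = \sum_i c_i \hat\psi_{j_i}$ in $\hat\hcal$, I obtain a \emph{single} null set $N$ outside which, simultaneously for all such combinations, $\sum_i c_i \chi_{j_i}(z) = 0$ forces $\sum_i c_i \hat\psi_{j_i}(z) = 0$ and $\|\sum_i c_i \hat\psi_{j_i}(z)\| \le \|B\|\,\|\sum_i c_i \chi_{j_i}(z)\|$. For $z \notin N$ this lets me set $B(z)\big(\sum_i c_i \chi_{j_i}(z)\big) := \sum_i c_i \hat\psi_{j_i}(z)$: the first implication makes this well-defined (the value depends only on $\sum_i c_i\chi_{j_i}(z)$), and the estimate makes it a linear map of norm $\le\|B\|$ on the rational span of $\{\chi_j(z)\}$, which is dense in $\hcal_z$ by totality of the fundamental sequence. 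It extends uniquely to $B(z) \in \bcal(\hcal_z,\hat\hcal_z)$ with $\|B(z)\| \le \|B\|$; on $N$ I set $B(z)=0$, so that $\esup_z \|B(z)\| < \infty$.

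The remaining, and main, obstacle is measurability of the field $z \mapsto B(z)$. By Lemma~\ref{lem:measOnGamma} it suffices to show $z \mapsto B(z)\chi(z) \in \bar{\hat\Gamma}$ for $\chi$ in a fundamental family. For the $\chi_j$ themselves this holds since $B(z)\chi_j(z) = \hat\psi_j(z)$ a.e.\ and $\hat\psi_j \in \bar{\hat\Gamma}$. For general $\chi$, I would use that $\lspan\big(L^\infty(\zcal,\nu)\cdot\{\chi_j\}\big)$ is total in $\hcal$: indeed, if $\eta$ is orthogonal to it, then $\langle \chi_j(z),\eta(z)\rangle = 0$ a.e.\ for every $j$, whence $\eta(z)=0$ a.e.\ by totality of the fibre sequence. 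Thus I can approximate $\chi$ in norm by finite combinations $\eta_n = \sum_i f_i^{(n)}\chi_{j_i}$, pass to a subsequence with $\eta_n(z)\to\chi(z)$ a.e.\ (remarks after Def.~\ref{def:fundamentalFamily}), and note that $B(z)\eta_n(z) = \sum_i f_i^{(n)}(z)\hat\psi_{j_i}(z) \in \bar{\hat\Gamma}$ while $B(z)\eta_n(z) \to B(z)\chi(z)$ a.e.\ by the uniform bound $\|B(z)\|\le\|B\|$. Since a pointwise a.e.\ limit of $\hat\Gamma$-measurable fields is again $\hat\Gamma$-measurable, $z\mapsto B(z)\chi(z)$ lies in $\bar{\hat\Gamma}$, so $B$ is a measurable field of operators. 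Finally $\int_\zcal^{\Gamma,\hat\Gamma} d\nu(z)\,B(z)$ agrees with $B$ on each $\chi_j$, hence on the total set $\lspan(L^\infty\cdot\{\chi_j\})$, hence everywhere, so $B$ is decomposable. I expect the delicate points to be the passage from the integral estimate to the pointwise one and the reduction to a single null set valid for all rational combinations at once; both rely essentially on uniform separability, consistent with the remark that the statement fails without it.
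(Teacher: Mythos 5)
Your proof is correct, but a comparison with ``the paper's proof'' is somewhat moot: the paper does not prove this theorem at all, it only cites \cite[Ch.~II \S 2 Sec.~5 Thm.~1]{Dix:v_n_algebras}. What you have written is essentially a reconstruction of that classical Dixmier argument (fibre operators built on a fundamental sequence, the indicator-function trick $B(\mathbf{1}_S\chi)=\mathbf{1}_S(B\chi)$ to upgrade the integral bound to a pointwise one, rational combinations collected into a single null set), and its real added value is that you carry it out inside the paper's more general Wils-type framework: the base space $\zcal$ is not standard Borel and $\hcal$ itself need not be separable, so one must check membership in $\bar{\hat\Gamma}$ by hand, which you do correctly via totality of $\lspan\big(L^\infty(\zcal,\nu)\cdot\{\chi_j\}\big)$, a.e.\ convergent subsequences, and the closure properties listed after Def.~\ref{def:fundamentalFamily}. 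All the individual steps are sound; each use of uniform separability (existence of the $\chi_j$, the countable rational span, the single null set) is exactly where it is needed, consistent with the known failure of the ``if'' direction in the non-separable case. One small point of hygiene: the paper's stated closure property ``a pointwise a.e.\ limit of fields in $\bar{\hat\Gamma}$ lies in $\bar{\hat\Gamma}$'' tacitly requires the limit to be square-integrable (it is false verbatim even for scalar $L^2$); your application is nevertheless legitimate, because the uniform bound $\|B(z)\|\le\|B\|$ gives $\|B(z)\chi(z)\|_z\le\|B\|\,\|\chi(z)\|_z$ with $\chi\in\bar\Gamma$, so the limit field is dominated and square-integrable. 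It would be worth making that domination explicit at that step.
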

A proof can be found in \cite[Ch.~II §2 Sec.~5 Thm.~1]{Dix:v_n_algebras}. 
In particular, if $\hcal=\hat\hcal$, we know that both the decomposable
operators and the diagonalizable operators form \wstar{} algebras, which are
their mutual commutants. Note that the ``if'' part of the theorem is known to be false for sufficiently general direct integrals, violating the separability
assumption \cite{Sch:decomposable_op}.

We now discuss the case of a continuous fundamental family $\Gamma$;
cf.~\cite[Ch.~III Sec.~2]{God:repres_unit}. In this case, we can consider the
space of \emph{$\Gamma$-continuous functions}, denoted $\ccal(\Gamma)$, and
defined as the closed span of $\ccal(\zcal) \cdot \Gamma$ in the supremum norm,
$\|\chi\|_\infty = \sup_{z \in \zcal} \|\chi(z)\|_z$. With this norm,
$\ccal(\Gamma)$ is a Banach space, in fact a Banach module over the commutative
\cistar{} algebra $\ccal(\zcal)$. We have $\ccal(\Gamma) \subset \bar\Gamma$ in a natural
way, and this inclusion is dense, but it is important to note that different
norms are used in these two spaces.

A simple but particularly important example for direct integrals
arises as follows \cite[Ch.~IV.7]{Tak:TOA1}. Let $\qhcal$ be a fixed
Hilbert space, and $\zcal$ a measure space as above. For each $z \in \zcal$,
set $\hcal_z = \qhcal$. Then the set $\Gamma$ of \emph{constant} functions
$\zcal \to \qhcal$ is a continuous fundamental family; and the associated
integrable family $\bar\Gamma$ is precisely the space of all square-integrable, Lusin-measurable
functions $\chi:\zcal \to \qhcal$. We denote the corresponding direct integral
space as $L^2(\zcal,\nu,\qhcal)=\int_\zcal^\oplus d\nu(z) \qhcal$ (with reference to the
``canonical'' fundamental family). This space is
isomorphic to $L^2(\zcal,\nu) \otimes \qhcal$; the canonical
isomorphism, which we do not denote explicitly, maps $f \otimes \chi$ to the
function $z \mapsto f(z)\chi$. In this way, the algebra of diagonal operators
is identified with $L^\infty(\zcal,\nu) \otimes \idop$.

If here $\qhcal$ is separable, then
$L^2(\zcal,\nu,\qhcal)$ is clearly uniformly separable. In this case, a simple
criterion identifies the elements of the integral space: A function 
$\chi:\zcal\to\qhcal$ is Lusin measurable if and only if it is
weakly measurable, i.e. if $z \mapsto \etskp{\chi(z)}{\eta}$ is measurable for
any fixed $\eta \in \qhcal$. Also, the algebra of
decomposable operators is isomorphic to $L^\infty(\zcal,\nu) \bar\otimes
\bcal(\qhcal)$.

\section*{Acknowledgements} 
  The authors are obliged to Laszlo Zsido and Michael Müger for helpful
  discussions. They also profited from financial
  support by the Erwin Schrödinger Institute, Vienna, and from the friendly
  atmosphere there. HB further wishes to thank the II. Institut für
  Theoretische Physik, Hamburg, for their hospitality.

\bibliographystyle{alpha}
\bibliography{qft}

\end{document}